\documentclass[centering,11pt,reqno]{amsart}

\usepackage[utf8]{inputenc}
\usepackage[T1]{fontenc}
\usepackage{amsmath,amsthm}
\usepackage{amsfonts,amssymb}
\usepackage[mathscr]{eucal}
\usepackage{url}
\usepackage{mdframed}
\usepackage{paralist}
\usepackage[colorlinks,cite color=blue,pagebackref=true,pdftex]{hyperref}
\usepackage[margin=2.5cm]{geometry}
\usepackage{tikz}
\usetikzlibrary{calc,shapes}
\usepackage{mathtools}
\usepackage{multicol}
\usepackage{ytableau}
\usepackage{blkarray}
\usepackage[capitalize]{cleveref}
 \usepackage[foot]{amsaddr}
\usepackage{tikz-cd}
\usepackage{ytableau}
\usepackage{bbm}
\usepackage{comment}
\usepackage{color}
\usepackage{todonotes}
\usepackage{accsupp}

\newtheorem{theorem}{Theorem}[section]
\newtheorem*{theorem*}{Theorem}
\newtheorem*{cor*}{Corollary}
\newtheorem*{conj*}{Conjecture}
\newtheorem*{lemma*}{Lemma}
\newtheorem*{prop*}{Proposition}
\newtheorem{lem}[theorem]{Lemma}
\newtheorem{cor}[theorem]{Corollary}

\theoremstyle{definition}

\newtheorem{remark}[theorem]{Remark}

\newenvironment{claimproof}[1][\proofname]{\proof[#1]}{\endproof}

\renewcommand{\emptyset}{\varnothing}
\renewcommand{\epsilon}{\varepsilon}

\newcommand{\NP}{\textsc{NP}}
\newcommand{\PP}{\textsc{P}}
\title{Finding Short Paths On Simple Polytopes}

\author[Black]{Alexander E. Black}
\address{Department of Mathematics, Bowdoin College}
\email{a.black@bowdoin.edu}

\author[Steiner]{Raphael Steiner}
\address{Department of Mathematics, ETH Z\"{u}rich}
\email{raphaelmario.steiner@math.ethz.ch}
\thanks{Research of R.S. supported by SNSF Ambizione Grant No. 216071.}
\begin{document}

\maketitle

\begin{abstract}  
We prove that computing a shortest monotone path to the optimum of a linear program over a simple polytope is \NP-hard, thus resolving a 2022 open question of De Loera, Kafer, and Sanit\`{a}. As a consequence, finding a shortest sequence of pivots to an optimal basis with the simplex method is \NP-hard. In fact, we show this is \NP-hard already for fractional knapsack polytopes. By applying an additional polyhedral construction, we show that computing the diameter of a simple polytope is \NP-hard, resolving a 2003 open problem by Kaibel and Pfetsch. Finally, on the positive side, we show that every polytope has a small, simple extended formulation for which a linear length path may be found between any pair of vertices in polynomial time building upon a result of Kaibel and Kukharenko.
\end{abstract}

\section{Introduction}

Understanding the worst-case performance of the simplex method for linear programming across all choices of pivot rules is a longstanding research program established first with Dantzig's 1947 invention, with foundational contributions made across theoretical computer science, operations research, and combinatorics communities. Breakthroughs on the positive side include the polynomial average case analysis of Borgwardt \cite{b87}, the polynomial smoothed analysis by Spielman and Teng \cite{ST04}, and polynomial time versions for special families such as Orlin's network simplex algorithm \cite{OrlinFlows}. In the worst-case, the best known bound in terms of the number of inequalities and number of variables is subexponential originally due to Kalai \cite{k92} with follow up work improving the bounds in \cite{hz15}.

On the negative side, essentially all well-studied pivot rules are known to have superpolynomial worst case performance \cite{km72, jer73, AC78, GS79, Mur80, g83, k92,msw96, jour/cm/AZ98, conf/stoc/FHZ11, conf/ipco/Friedmann11,hz15, disser2020exponential, dissermosis, black2024exponentiallowerboundspivot, disserSTACS}. Pivot rules can even encode hard problems during their execution \cite{NPmighty, NPHardDuringExecution1, NPHardDuringExecution2}. Furthermore, the longstanding Hirsch conjecture that the diameter of the vertex-edge graph of a polytope is at most the number of inequalities minus the number of variables was disproven by Santos in \cite{HirschSolution}. This is a small sample of breakthroughs related to the nearly 80 years of consistent work dedicated to understanding this problem, yet fundamental questions remain open. 

Given a polytope $P = \{\mathbf{x} \in \mathbb{R}^{d}: A \mathbf{x} \leq \mathbf{b}\}$, defined by a constraint matrix $A\in \mathbb{R}^{m \times d}$ and right-hand side $\mathbf{b}\in \mathbb{R}^m$, it has a set of feasible bases consisting of the set of linearly independent subsets $B$ of rows of $A$ of size $d$ such that $A_{B}^{-1}\mathbf{b}_{B} \in P$, where $A_{B}$ and $\mathbf{b}_{B}$ denotes the matrix and right hand side restricted to the rows indexed by $B$. Two feasible bases $B$ and $B'$ are called adjacent if $|B \Delta B'| = 2$, which yields a graph associated to the polytope that we call the \textbf{feasible basis graph}. The simplex method solves a linear program by walking from basis to basis along the feasible basis graph. For a linear program $\max_{\mathbf{x}\in P}\mathbf{c}^{\intercal} \mathbf{x}$, the step from a feasible basis $B$ to a new feasible basis $B' = (B \setminus \{i\}) \cup \{j\}$ for some $i\in B, j\notin B$ is called \textbf{monotone} if the ray defined by 
\[\{\mathbf{x} \in \mathbb{R}^{d}: A_{B\setminus \{i\}} \mathbf{x} = \mathbf{b}_{B\setminus \{i\}}, A_{i}\mathbf{x}\leq \mathbf{b}_{i}\}\] 
is increasing with respect to $\mathbf{c}$. 

A monotone move along a single edge in the feasible basis exchange graph is called a \textbf{pivot}, and the run-time of the simplex method depends on the number of pivots taken to reach an optimum as well as the time to compute each pivot. 

There are several different pivot rules for the simplex method that have been studied. One that is particularly fundamental is the ``omniscient pivot rule,'' which simply chooses a shortest sequence of pivots to the optimum. Despite so many years of study, it is open whether this pivot rule may be computed in polynomial time. That is, given a linear program and a feasible initial basis, can one find a shortest monotone path in the feasible basis graph to an optimal basis in polynomial time? As our first main result, we prove that the answer is no assuming $\PP \neq \NP$. Concretely we show that the following decision problem is \NP-hard:

\medskip

\begin{mdframed}[innerleftmargin=0.5em, innertopmargin=0.5em, innerrightmargin=0.5em, innerbottommargin=0.5em, userdefinedwidth=0.95\linewidth, align=center]
	{\textsc{Pivot-distance}}
	\sloppy

	\noindent
	\textbf{Input:} A linear program $\max_{\mathbf{x}\in P}\mathbf{c}^{\intercal} \mathbf{x}$ defined by an objective vector $\mathbf{c} \in \mathbb{Q}^{d}$ and a polytope $P = \{\mathbf{x}\in \mathbb{R}^d\colon A\mathbf{x}\leq \mathbf{b}\}$ defined by a matrix $A\in \mathbb{Q}^{m\times d}$ and a vector $\mathbf{b}\in \mathbb{Q}^m$, a feasible basis $B \subseteq [m]$ of $P$, and a number $k\in \mathbb{N}$. 

	\noindent
    \textbf{Decision:} Does there exist a monotone sequence of at most $k$ pivots from $B$ to a basis $B^{\ast}$ corresponding to an optimal solution of the linear program?
	% \textbf{Decision:} Does there exist a sequence $B=B_0,\ldots,B_\ell=B$ of feasible bases of $P$ with $\ell\le k$ and $|B_i\Delta B_{i-1}|=2$ for every $i\in [\ell]$?
\end{mdframed}

In fact, we show a stronger statement related to another line of research of which the aforementioned hardness result is an immediate consequence (Theorem~\ref{thm:monotone} below). Namely, a related graph to the feasible basis graph is the \textbf{graph} of the polytope defined by the vertices and edges of the polytope. Originally, in 1994, Frieze and Teng showed \cite{FriezeTeng1994} that computing the diameter of the graph of a (possibly highly degenerate) input polytope $P$, called the \textbf{combinatorial diameter} and denoted $\mathrm{diam}(P)$, is weakly \NP-hard. Then much later in 2018, Sanit\`{a} showed in \cite{Sanita18} that computing the combinatorial diameter of the fractional matching polytope is strongly \NP-hard. This result spurred a flurry of other results. For example, Wulf showed that computing the combinatorial diameter is $\Pi_{2}$-complete \cite{HarderthanNPHard}. Various hardness results are known in the setting \cite{MonotoneDiameterHard, PolymatroidsPathHard, CircuitPivotRules, ShortestPathNoConstantFactor}. For special polytopes from algebraic combinatorics, hardness results are known but where the input is no longer the system of inequalities defining the polytope \cite{GraphicalZonotopes, ReconfigAlternatingCycles}. Similar hardness results have also been shown in generalizations of polytope graphs~\cite{CircuitPivotRules, fixeddimcircs,SigncompCircuitHard}. 

However, until very recently, all known hardness results regarding shortest paths and diameters of polytopes with their inequality description as input were for \textbf{degenerate} polytopes for which the vertex-edge graph and feasible basis exchange graph do not coincide, since a single vertex may be represented by multiple feasible bases.  Polytopes for which these two graphs coincide are called \textbf{simple}, and they correspond to polytopes for which every vertex is defined by precisely dimension many tight inequalities. In \cite{CircuitPivotRules}, De Loera, Kafer, and Sanit\`{a} asked whether there exists a polynomial time algorithm to find shortest (monotone) paths in graphs of simple polytopes. Concurrently with and independently of the work presented in this paper, in a recent breakthrough Dorfer~\cite{associahedronhard} showed that computing distances between pairs of vertices on the associahedron is \NP-complete, which implies that computing shortest paths on simple polytopes is \NP-hard. As our second main result, we prove the same result through a reduction from a different, arguably significantly simpler, class of polytopes (certain fractional knapsack polytopes, obtained by intersecting a hypercube with a carefully chosen halfspace). Formally, we show that the following decision problem is \NP-hard:

\medskip

\begin{mdframed}[innerleftmargin=0.5em, innertopmargin=0.5em, innerrightmargin=0.5em, innerbottommargin=0.5em, userdefinedwidth=0.95\linewidth, align=center]
	{$k$-\textsc{Distance on simple polytopes}}
	\sloppy

	\noindent
	\textbf{Input:} A simple polytope $P = \{\mathbf{x}\in \mathbb{R}^d\colon Ax\leq \mathbf{b}\}$ defined by a matrix $A\in \mathbb{Q}^{m\times d}$ and a vector $\mathbf{b}\in \mathbb{Q}^m$, two vertices $\mathbf{x},\mathbf{y}$ of $P$ and some number $k\in\mathbb{N}$.

	\noindent
	\textbf{Decision:} Do $\mathbf{x}$ and $\mathbf{y}$ have distance at most $k$ in the graph of $P$?
\end{mdframed}

\medskip

\begin{theorem}
\label{thm:shortestpathsimple}
$k$-\textsc{Distance on simple polytopes} is \NP-hard.
\end{theorem}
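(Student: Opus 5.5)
We reduce from an \NP-hard problem on fractional knapsack polytopes
\[
P=\{\mathbf{x}\in[0,1]^n:\ \mathbf{a}^{\intercal}\mathbf{x}\le \beta\},
\]
as the introduction suggests. The source problem will be a variant of \textsc{Subset Sum} or \textsc{Partition}. We choose positive integers $a_1,\dots,a_n$ and $\beta$ generically, so that no vertex of the cube lies on the hyperplane $\mathbf{a}^{\intercal}\mathbf{x}=\beta$. For instance, we can replace $\beta$ by $\beta+\tfrac12$ after scaling. With this choice $P$ is simple.

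Its vertices are of two kinds:
\begin{itemize}
\item cube vertices $\mathbf{1}_S$ with $a(S)<\beta$;
\item "cut" vertices on edges of the cube that cross the hyperplane, indexed by a pair $(S,i)$ with $i\notin S$, $a(S)<\beta<a(S)+a_i$, where coordinate $i$ is fractional.
\end{itemize}

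The graph of $P$ has three kinds of edges:
\begin{itemize}
\item cube edges between feasible subsets;
\item edges from $\mathbf{1}_S$ to the cut vertex $(S,i)$;
\item edges inside the facet $\mathbf{a}^{\intercal}\mathbf{x}=\beta$, joining $(S,i)$ and $(S',i')$ when the two crossing edges lie on a common $2$-face of the cube. These are $(S,i)\sim(S\cup\{j\},i)$, $(S,i)\sim(S\setminus\{j\},i)$, and $(S,i)\sim(S\setminus\{j\}\cup\{i\}... )$-type swaps $(S,i)\sim(S\cup\{i\}\setminus\{j\},j)$, all subject to feasibility.
\end{itemize}
The first step is to write this adjacency description out carefully.

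The key point is that distances in $P$ encode a combinatorial reconfiguration problem. Moving through the interior of the cube is just Hamming distance, restricted to feasible sets. Near the knapsack boundary, paths can "slide" along the cut facet, and a swap exchanges one item for another in a single step.

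We then pick two vertices $\mathbf{x},\mathbf{y}$ and a bound $k$. The aim is that a short path exists only if the path can cross a region using a set $S$ with $a(S)$ very close to $\beta$, which amounts to solving a subset sum instance. A concrete plan is as follows.
\begin{enumerate}
\item Take a \textsc{Subset Sum} instance $w_1,\dots,w_m$ with target $t$.
\item Build weights consisting of the $w_j$ scaled by a large factor $M$, plus a few "gadget" items with huge weight. The gadget items force a path from $\mathbf{x}=\mathbf{1}_{G_1}$ to $\mathbf{y}=\mathbf{1}_{G_2}$ to exchange gadget coordinates.
\item Set the gadget weights so that this exchange can be done cheaply, via a facet swap or a monotone Hamming route, only when the currently selected $w$-items sum to exactly $t$ (up to the $\tfrac12$ perturbation).
\item Any other configuration forces the path to drop and re-add many coordinates. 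To make this costly, each $w$-item is replicated $L$ times, so that a detour costs a detectable amount of extra length.
\end{enumerate}

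The counting then splits into two directions. For completeness, from a solution $T$ we exhibit an explicit path of length $k$: add the items of $T$, perform the gadget swap at the tight spot, and remove the items. For soundness, every path must change each differing gadget coordinate. We argue that at the moment a big gadget coordinate changes (either through a cube edge or through a sequence of facet moves), the set of "small" coordinates present must have weight in a window that only exact solutions reach. Otherwise the path pays at least $k+1$ steps, since it has to remove extra copies.

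The main obstacle is this soundness argument: the facet edges allow swaps that shortcut Hamming distance, so one needs a potential-function or Hamming-type lower bound on path length in $P$ that tightly accounts for swap moves. To get it, I would first try bounding the length of any path below by $\tfrac12$ of the symmetric difference of the supports, plus the number of fractional-coordinate changes. I would then tune $L$ and the gadget weights so that only the intended path meets the bound.
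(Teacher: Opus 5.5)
Your proposal has a genuine gap. The reduction is never actually specified: the gadget weights, the endpoints, $k$ and $L$ are all left open. The soundness direction, which is the entire content of the theorem, is explicitly deferred.

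Moreover, the concrete shape you sketch cannot work. With all weights positive the polytope is down-closed. So between feasible cube vertices $\mathbf{1}_{G_1},\mathbf{1}_{G_2}$, the route that first deletes the elements of $G_1\setminus G_2$ and then adds those of $G_2\setminus G_1$ visits only subsets of $G_1$ or of $G_2$. These are all vertices of $P$, so this route has length $|G_1\Delta G_2|$ whatever the \textsc{Subset Sum} instance is.

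Conversely, let $h(\mathbf{x})$ be the minimum of $|T\Delta G_2|$ over the cube vertices $T$ of the smallest face of $[0,1]^n$ containing $\mathbf{x}$. Then $h$ changes by at most one along every edge of $P$, swap edges included. To see this, check the edge types: two adjacent cut vertices lie on two edges of a common square $2$-face, and every vertex of a square is at distance at most one from some endpoint of each edge of that square. Hence $d_P(\mathbf{1}_{G_1},\mathbf{1}_{G_2})=|G_1\Delta G_2|$ regardless of the instance. Your completeness path (add $T$, swap, remove $T$) can never beat the trivial route, and no tuning of $L$ or of the gadget weights changes this. (Your edge list also misses the edges $(S,i)\sim(S,j)$.)

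The missing idea is to place both endpoints on the cut facet with the \emph{same} fractional coordinate, and to take $k$ equal to a trivial lower bound. Soundness then becomes a rigidity statement rather than a delicate counting argument. The paper uses $P_{\mathbf{b}}=[0,1]^{d+2}\cap\{\mathbf{w}^{\intercal}\mathbf{x}\le\beta+1/4\}$ with $\mathbf{w}=(b_1,\dots,b_d,-\beta,\beta+1/2)$, endpoints $(\emptyset,d+2)$ and $([d+1],d+2)$, and $k=d+1$.

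Every edge enlarges the support of the current vertex by at most one, so a path of length $d+1$ must enlarge it at every step. From a cut vertex only type (d) and (e) edges do this, and a type (d) step would put $d+2$ into the set permanently. So the path is a chain $(S_0,d+2),\dots,(S_{d+1},d+2)$ with $\emptyset=S_0\subsetneq\cdots\subsetneq S_{d+1}=[d+1]$. Each $(S_i,d+2)$ must be a vertex, i.e.\ $-1/4<\mathbf{w}^{\intercal}\mathbf{e}_{S_i}<\beta+1/4$. At the step where the item $d+1$ of weight $-\beta$ enters, this two-sided window forces $\sum_{j\in S}b_j=\beta$. Conversely, a partition yields such a chain directly.

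The negative weight is what makes the window bite; equivalently, after complementing $x_{d+1}$, there is one item that must be removed while all others are added. Along a chain that only adds positive-weight items, the partial sums are monotone, so the window would only constrain the two ends.
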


While assuming $\PP\neq \NP$, both Dorfer's result~\cite{associahedronhard} and Theorem~\ref{thm:shortestpathsimple} independently answer the aforementioned question of de Loera, Kafer and Sanit\`{a} in the negative, there are two further implications of our result which are not implied by that of Dorfer~\cite{associahedronhard}. First, one can easily find a path of length at most $O(\sqrt{m})$ between any pair of vertices on the associahedron in strongly polynomial time (see Lemma 2 of \cite{sleatortarjanthurston}), where $m$ denotes the number of facets. Thus, Dorfer's result could only imply at most that $O(\sqrt{m})$-distance is \NP-hard. In contrast, our argument shows that checking whether there exists a path of length at most $d-1$ in a $d$-dimensional simple polytope with $2d+1$ facets is NP-hard, so we have the following corollary:

\begin{cor}\label{mdcor}
$(m-d-2)$-\textsc{Distance on simple polytopes} is \NP-hard.
\end{cor}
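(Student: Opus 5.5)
Corollary~\ref{mdcor} follows from the reduction behind Theorem~\ref{thm:shortestpathsimple}, using the fact noted just before the corollary: the hard instances are $d$-dimensional simple polytopes with exactly $m = 2d+1$ facets, and the question is whether two given vertices are at distance at most $d-1$. Since $m-d-2 = (2d+1)-d-2 = d-1$, these instances are exactly instances of $(m-d-2)$-\textsc{Distance on simple polytopes}. The plan is therefore to make the reduction explicit and check that the parameters line up.

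First I would fix the form of the reduction. I would reduce from an \NP-complete number problem such as \textsc{Subset Sum} or \textsc{Partition} with weights $a_1,\dots,a_d$. The polytope is a fractional knapsack polytope
\[
P=\{\mathbf{x}\in[0,1]^d : \mathbf{a}^{\intercal}\mathbf{x}\le \beta\},
\]
which is the cube's $2d$ facets plus one halfspace, so $m=2d+1$. I would choose $\beta$ (after a small perturbation of the weights) so that the hyperplane $\mathbf{a}^{\intercal}\mathbf{x}=\beta$ contains no vertex of the cube. This makes $P$ simple: vertices of $P$ are either cube vertices strictly inside the halfspace (degree $d$ from the cube facets), or points on cube edges crossed by the hyperplane (tight at $d-1$ cube facets and the knapsack facet).

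Next I would choose the two vertices so that both lie on the knapsack facet, for instance points on cube edges near $\mathbf{0}$ and near its antipode $\mathbf{1}$. Each move along a cube edge changes one coordinate, and moves inside the knapsack facet also change the coordinate pattern only locally. Hence the two vertices should be at distance at least $d-1$ in general, with equality exactly when there is a route through the facet that flips coordinates in some order while keeping the partial sums on the correct side of $\beta$. I would encode the number problem so that such an order exists if and only if the \textsc{Subset Sum}/\textsc{Partition} instance is a yes-instance. Substituting $k=d-1=m-d-2$ then gives the corollary.

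The main obstacle is the combinatorial characterization of distance within the knapsack facet, which is a $(d-1)$-dimensional simple polytope whose graph adjacency depends on the weights. I would rely on the statement of Theorem~\ref{thm:shortestpathsimple}'s proof as given, and only verify two things: that the threshold there is $d-1$, and that the facet count is $2d+1$ (with any auxiliary padding done only by adding coordinates). If the theorem's construction instead had $m=2d+1$ but used threshold $k<d-1$, I would pad the instance by taking a product with an interval (one extra dimension, two extra facets). This raises $m-d-2$ by $1$ and the distance between suitably chosen vertices by $1$, so it preserves hardness while adjusting $k$ to the form $m-d-2$.
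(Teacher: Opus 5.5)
This is essentially the paper's proof. The reduction in Theorem~\ref{thm:shortestpathsimple} decides \textsc{Partition with even sum} via the distance between $(\emptyset,d+2)$ and $([d+1],d+2)$ on the $(d+2)$-dimensional simple polytope $P_{\mathbf{b}}$, which is given by $2d+5$ inequalities, with threshold $k=d+1=(2d+5)-(d+2)-2$; this is exactly your check $m=2D+1$, $k=D-1=m-D-2$ with $D=d+2$.

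Two side remarks, neither affecting this conclusion. First, your own sketch of the reduction would not give hardness as written: with only positive weights the partial-sum condition holds automatically along any support-increasing route, so the paper needs the extra coordinates of weights $-\beta$ and $\beta+\tfrac12$, with right-hand side $\beta+\tfrac14$, to force some partial sum to equal $\beta$. Second, your interval-product padding could never repair a threshold mismatch, since it raises both $m-d-2$ and the distance by one and so leaves their difference unchanged.
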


In particular, unless $\PP = \NP$, finding a path on a simple polytope shorter than the Hirsch bound $m-d$ by more than $2$ cannot be done in polynomial time. 

However, the second and most important distinction between our Theorem~\ref{thm:shortestpathsimple} and Dorfer's work is the fact that our proof extends to the \emph{monotone} setting. A path in the vertex-edge graph is called \textbf{monotone} if each step along the path increases the objective function. Under nondegeneracy, monotonicity corresponds exactly to pivoting in the simplex method, and hence this setting is particularly relevant in the optimization context and has been studied in several prior works. As our third main result, we show that the following problem is NP-hard:

\medskip

\begin{mdframed}[innerleftmargin=0.5em, innertopmargin=0.5em, innerrightmargin=0.5em, innerbottommargin=0.5em, userdefinedwidth=0.95\linewidth, align=center]
	{$k$-\textsc{Monotone-Distance on simple polytopes}}
	\sloppy

	\noindent
	\textbf{Input:} A linear program $\max_{\mathbf{x}\in P}\mathbf{c}^{\intercal}\mathbf{x}$ defined by an objective vector $\mathbf{c} \in \mathbb{Q}^{d}$ and a simple polytope $\mathbf{x} \in P = \{\mathbf{x}\in \mathbb{R}^d\colon Ax\leq \mathbf{b}\}$ defined by a matrix $A\in \mathbb{Q}^{m\times d}$ and a vector $\mathbf{b} \in \mathbb{Q}^m$, a vertex $\mathbf{x}$ of $P$ and some number $k\in\mathbb{N}$.

	\noindent
	\textbf{Decision:} Is there a monotone path of length at most $k$ from $\mathbf{x}$ to a $\mathbf{c}$-maximizer?
\end{mdframed}

\begin{theorem}
\label{thm:monotone}
$(m-d-2)$-\textsc{Monotone Distance on simple polytopes} is \NP-hard.
\end{theorem}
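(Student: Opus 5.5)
We reduce from \textsc{Subset Sum} (equivalently, the partition variant) using a fractional knapsack polytope
\[P=\{\mathbf{x}\in[0,1]^d : \mathbf{a}^{\intercal}\mathbf{x}\le \beta\},\]
with positive integer weights $a_1,\dots,a_d$ and $\beta$ chosen generically so that no vertex of the cube lies on the hyperplane $\mathbf{a}^{\intercal}\mathbf{x}=\beta$. Then $P$ is simple, with $m=2d+1$ facets. Its vertices are of two kinds. The first kind are the cube vertices $\mathbf{1}_S$ with $a(S)<\beta$. The second kind are the points where a cube edge crosses the hyperplane: $\mathbf{1}_S+t\mathbf{e}_j$ with $a(S)<\beta<a(S)+a_j$. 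The edges of $P$ are the cube edges that survive the truncation, together with the edges of the facet $\{\mathbf{a}^{\intercal}\mathbf{x}=\beta\}$. Two such crossing points are adjacent exactly when their cube edges lie in a common $2$-face of the cube.

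For the monotone version, start at $\mathbf{x}=\mathbf{0}$ and take $\mathbf{c}=\mathbf{a}+\epsilon\mathbf{w}$ with a small generic perturbation $\mathbf{w}$. Then the unique $\mathbf{c}$-maximizer $\mathbf{x}^*$ is a vertex on the cutting facet. We aim to set $k=m-d-2=d-1$.

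The steps are as follows.

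\begin{enumerate}
\item Compute distances in the truncated cube. A path from $\mathbf{0}$ must change each coordinate it ends up moving, and the target has roughly $|S|+1$ nonzero coordinates. A path of length $|S|+1$ is available whenever one can first walk along cube edges to a set $S$ and then take a single step to the facet.
\item Encode the instance so that $\mathbf{x}^*$ is reachable in at most $d-1$ steps if and only if a subset with prescribed sum exists. The idea is to let $\mathbf{x}^*$ require about $d-1$ coordinate changes, and to let the last cube step hit the facet exactly at $\mathbf{x}^*$ only when the set $S$ reached has $a(S)$ in the right window. Any other route must use at least one step sliding along the facet, costing one extra step. Concretely, the generic perturbation should make $\mathbf{x}^*=\mathbf{1}_S+t\mathbf{e}_j$, where $S$ is determined by ordering ratios, while the subset-sum structure lives in the weights $a_i$.
\item Check monotonicity. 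Since every coordinate increases along cube steps and $c_i>0$, monotone paths that only increase coordinates exist. This shows that the shortest path and the shortest monotone path coincide in the ``yes'' case. The lower bound holds for all paths, so it holds in particular for monotone ones.
\end{enumerate}

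The main obstacle is step 2, the counting argument on the facet. I must show that any path using a step along the facet (a $2$-face move, which swaps mass between coordinates) or any non-increasing step wastes at least one unit of the budget $d-1$ unless a subset-sum solution exists. The plan is to count coordinates where the start and target differ, giving a lower bound of $d-1$. Then I show that equality forces every step to be a single coordinate flip of the cube except possibly one final facet entry. Finally, I argue that a path of this rigid form exists exactly when some intermediate cube vertex $\mathbf{1}_S$ has $a(S)$ in the window $(\beta-a_j,\beta)$ determined by the encoding. I expect I will need padding coordinates with huge or tiny weights to pin down which coordinates must flip. These padding coordinates also make the undirected version, Corollary~\ref{mdcor}, follow from the same construction.
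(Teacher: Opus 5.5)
\textbf{Genuine gap: the construction is inherently polynomial-time solvable.} The problem is not only that step 2 is unfinished. With positive weights and start vertex $\mathbf{0}$, the instances you build can be solved in polynomial time, so no subset-sum structure can be encoded in them.

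Since all $a_i>0$, the feasible cube vertices $\{\mathbf{1}_T : a(T)<\beta\}$ form a down-closed family. Take whatever vertex $\mathbf{x}^*=\mathbf{1}_S+t\mathbf{e}_j$ on the cutting facet is the $\mathbf{c}$-maximizer. Every subset of $S$ is then a vertex of $P$, so the path $\mathbf{0},\mathbf{1}_{\{s_1\}},\mathbf{1}_{\{s_1,s_2\}},\ldots,\mathbf{1}_S,\mathbf{x}^*$ always exists. It has length $|S|+1$ and is $\mathbf{c}$-monotone because $\mathbf{c}>0$. Every edge of a truncated cube changes the support size by at most one, so $|S|+1$ is also a lower bound. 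Hence the (monotone) distance from $\mathbf{0}$ to $\mathbf{x}^*$ is exactly the support size of $\mathbf{x}^*$. Moreover, $\mathbf{x}^*$ is found by the greedy ratio rule you mention, or by any LP solver.

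The ``window'' condition $a(S)\in(\beta-a_j,\beta)$ that you hope to exploit is simply the condition that $\mathbf{x}^*$ is a vertex. It carries no information about the subset-sum instance. Padding coordinates with huge or tiny positive weights do not change any of this.

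\textbf{The missing idea.} You need the intermediate vertices of the straight path to fail to exist unless a partition exists. That requires breaking down-closedness and starting on the cutting facet rather than at the origin.

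The paper takes $\mathbf{w}=(b_1,\ldots,b_d,-\beta,\beta+1/2)$ with right-hand side $\beta+1/4$, in dimension $d+2$. It asks for a path from $(\emptyset,d+2)$ to $([d+1],d+2)$ with $k=d+1=m-(d+2)-2$.

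\begin{itemize}
\item Support counting forces each of the $d+1$ steps to be a support-increasing facet move $(S_{i-1},d+2)\to(S_i,d+2)$. A type (d) move would put $d+2$ permanently into the set component, which the target does not contain.
\item For $S_i\subseteq[d+1]$, the point $(S_i,d+2)$ is a vertex if and only if $\mathbf{w}^{\intercal}\mathbf{e}_{S_i}\in[0,\beta]$.
\item Adding the coordinate of weight $-\beta$ while staying in this window forces a partial sum of exactly $\beta$, which is a partition.
\end{itemize}

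Monotonicity then comes essentially for free with $\mathbf{c}=\mathbf{e}_{[d+1]}+\varepsilon\mathbf{e}_{d+2}$. This objective makes the target the unique optimum and strictly increases along every chain $(S,d+2)\to(T,d+2)$ with $S\subsetneq T$.
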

Hence, unlike the results of Dorfer in \cite{associahedronhard}, our result implies the following, which is our first main result mentioned above.

\begin{cor}
\label{cor:minimalpivots}
\textsc{Pivot-distance} is \NP-hard. 
\end{cor}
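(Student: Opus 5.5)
The plan is to reduce directly from Theorem~\ref{thm:monotone}. Given an instance of $(m-d-2)$-\textsc{Monotone Distance on simple polytopes}, consisting of $\mathbf{c}$, a simple polytope $P=\{\mathbf{x}\colon A\mathbf{x}\le \mathbf{b}\}$, a vertex $\mathbf{x}$ and $k=m-d-2$, we build an instance of \textsc{Pivot-distance} with the same $A,\mathbf{b},\mathbf{c}$ and $k$. For the initial basis we take $B=\{i\colon A_i\mathbf{x}=b_i\}$, which is computable in polynomial time by checking which inequalities are tight at $\mathbf{x}$.

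First I verify that $B$ is a feasible basis. Since $P$ is simple, exactly $d$ inequalities are tight at the vertex $\mathbf{x}$. These rows are linearly independent because $\mathbf{x}$ is a vertex, and $A_B^{-1}\mathbf{b}_B=\mathbf{x}\in P$.

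Next I use simplicity to identify the feasible basis graph with the vertex-edge graph. Each vertex has a unique feasible basis, namely its tight set, and each feasible basis determines a vertex. If two adjacent bases $B,B'$ gave the same vertex, that vertex would have at least $d+1$ tight constraints, which is impossible. So adjacent bases correspond to distinct vertices joined by the segment on which the $d-1$ common constraints are tight, and this segment is an edge. Conversely, every edge of a simple polytope lies on exactly $d-1$ tight constraints and its endpoints add one constraint each, so every edge arises this way.

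Finally I check that monotonicity matches. The pivot from $B$ to $B'=(B\setminus\{i\})\cup\{j\}$ is monotone when the ray leaving $\mathbf{x}$ along the edge direction increases $\mathbf{c}$, which is exactly the condition $\mathbf{c}^\intercal \mathbf{x}'>\mathbf{c}^\intercal\mathbf{x}$ for the neighbouring vertex $\mathbf{x}'$. I expect this to be the step needing the most care. The ray $\{A_{B\setminus\{i\}}\mathbf{y}=\mathbf{b}_{B\setminus\{i\}},\ A_i\mathbf{y}\le b_i\}$ starts at $\mathbf{x}$ and contains the edge $[\mathbf{x},\mathbf{x}']$, because nondegeneracy forces the edge to have positive length. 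Hence "increasing along the ray" is equivalent to $\mathbf{c}^\intercal(\mathbf{x}'-\mathbf{x})>0$.

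It follows that optimal bases are exactly the bases of $\mathbf{c}$-maximizing vertices. Monotone pivot sequences of length at most $k$ from $B$ to an optimal basis then correspond bijectively to monotone paths of length at most $k$ from $\mathbf{x}$ to a maximizer. Therefore the two instances have the same answer, and \textsc{Pivot-distance} is \NP-hard.
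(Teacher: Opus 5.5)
Your proposal is correct and matches the paper's argument. The paper treats this corollary as an immediate consequence of Theorem~\ref{thm:monotone}: for simple polytopes the feasible basis graph coincides with the vertex-edge graph, and monotone pivots are exactly the $\mathbf{c}$-increasing edges. Your write-up supplies the routine checks the paper leaves implicit, namely the tight set as the initial basis, the basis-exchange/edge correspondence, and the pivot ray containing the edge $[\mathbf{x},\mathbf{x}']$.
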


The proofs of Theorem~\ref{thm:shortestpathsimple}, Corollary~\ref{mdcor} and Theorem~\ref{thm:monotone} will be presented in Section~\ref{sec:distance}.

Our fourth main result concerns a related problem, which appears as Problem~10 in the 2003 survey on polyhedral computation by Kaibel and Pfetsch \cite{KaibelPfetschSurvey}, where they ask for the complexity status of computing the combinatorial \emph{diameter} of a simple polytope. This problem was also reiterated by Sanit\`{a}~\cite{Sanita18} and Wulf~\cite{HarderthanNPHard}. By combining our aforementioned distance hardness result for simple polytopes with several additional ideas (that make up most of the technical work of this paper), we show that this problem, too, is \NP-hard. Concretely, we address the following decision problem.

\medskip

\begin{mdframed}[innerleftmargin=0.5em, innertopmargin=0.5em, innerrightmargin=0.5em, innerbottommargin=0.5em, userdefinedwidth=0.95\linewidth, align=center]
	{\textsc{Diameter of simple polytopes}}
	\sloppy

	\noindent
	\textbf{Input:} A simple polytope $P = \{\mathbf{x}\in \mathbb{R}^d\colon A\mathbf{x}\leq \mathbf{b}\}$ defined by a matrix $A\in \mathbb{Q}^{m\times d}$ and a vector $\mathbf{b}\in \mathbb{Q}^m$, and a number $k\in\mathbb{N}$.

	\noindent
	\textbf{Decision:} Does $\mathrm{diam}(P)\le k$ hold?
\end{mdframed}

\begin{theorem}
\label{thm:combodiam}
\textsc{Diameter of simple polytopes} is \NP-hard.
\end{theorem}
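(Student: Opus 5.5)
We reduce from $k$-\textsc{Distance on simple polytopes}, in the strong form behind Corollary~\ref{mdcor}. In that form we are given a $d$-dimensional simple polytope $P$ with $m=2d+1$ facets (a cube cut by one halfspace) and two vertices $\mathbf{x},\mathbf{y}$, and it is \NP-hard to decide whether $\mathrm{dist}(\mathbf{x},\mathbf{y})\le d-1$. The aim is to build in polynomial time a simple polytope $Q$ whose diameter equals $\mathrm{dist}_P(\mathbf{x},\mathbf{y})+c$ for an explicitly computable offset $c$. Then $\mathrm{diam}(Q)\le k+c$ holds iff $\mathrm{dist}_P(\mathbf{x},\mathbf{y})\le k$. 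The difficulty is that diameter is a max over all pairs, so $Q$ must force the farthest pair to be $(\mathbf{x},\mathbf{y})$, up to gadgets, while keeping every other pair at most that far.

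The construction I would try has two steps. First, make $\mathbf{x}$ and $\mathbf{y}$ extremal by attaching long ``tails''. Near $\mathbf{x}$, I truncate the vertex $\mathbf{x}$ by a hyperplane, which keeps the polytope simple and replaces $\mathbf{x}$ by a simplex facet. I then iterate, or take a product with a path-like gadget, to create a vertex $\mathbf{x}'$ at distance exactly $L$ from every vertex of the original neighbourhood, and do the same at $\mathbf{y}$ to get $\mathbf{y}'$. Concretely, I could take $Q$ as a suitable face or slice of $P\times \Delta$ or $P\times I^{t}$. Alternatively, I could glue a prism over a simplex $L$ times, using iterated vertex truncations, which is known to add a controlled amount of distance. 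Second, I must ensure that all other pairs are close. In $P$ itself, every vertex pair has distance at most about $d+1$, because the hypercube part has diameter $d$ and the cut adds at most a constant. The tails then give $\mathrm{dist}(\mathbf{x}',\mathbf{y}')=2L+\mathrm{dist}_P(\mathbf{x},\mathbf{y})$. Any pair with at most one endpoint in a tail has distance at most $L+d+O(1)$. If $L$ is chosen larger than $d$, the maximum is attained at $(\mathbf{x}',\mathbf{y}')$, after checking that tail-internal vertices do not create longer pairs.

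The key steps, in order, are:
\begin{enumerate}
\item Choose the tail gadget so that truncating at a simple vertex preserves simplicity and rationality, with polynomial encoding size.
\item Prove an exact distance formula through the tail. Every path from $\mathbf{x}'$ into the original part must cross the truncation layers, which gives the lower bound $L$, and there is a matching path, so the contribution is exactly $L$.
\item Prove the upper bound on all other pairs, using the diameter bound on the cut cube together with the tail structure.
\item Conclude that $\mathrm{diam}(Q)=2L+\mathrm{dist}_P(\mathbf{x},\mathbf{y})$.
\end{enumerate}

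I expect the main obstacle to be step 2. A single truncation of $\mathbf{x}$ only adds distance $1$ to paths leaving $\mathbf{x}$. Repeating it inside the new simplex facet gives tails whose vertices are mutually close, and the vertices of the first truncation facet are adjacent to all neighbours of $\mathbf{x}$. So I must show that shortest $\mathbf{x}'$--$\mathbf{y}'$ paths cannot shortcut by entering $P$ through a neighbour of $\mathbf{x}$ that is closer to $\mathbf{y}$. This shortcut costs at most $1$ and can be absorbed into the offset $c$, but only if the Corollary~\ref{mdcor} instance is robust to a $\pm1$ ambiguity. To handle this, I would first try to strengthen the hardness: show that in the reduction instance, every neighbour $\mathbf{x}_i$ of $\mathbf{x}$ satisfies $\mathrm{dist}(\mathbf{x}_i,\mathbf{y})=\mathrm{dist}(\mathbf{x},\mathbf{y})-1$ exactly when the yes-case holds, or pick $\mathbf{x},\mathbf{y}$ that are ``deep'', so that the minimum over neighbourhoods is itself \NP-hard to compute.
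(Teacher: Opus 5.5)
Your high-level plan is the paper's plan: grow towers at the two endpoints so that $\mathrm{diam}(Q)=d_P(\mathbf{x},\mathbf{y})+c$, using that the knapsack instance has diameter $O(d)$ so polynomial height suffices. The gap is that you never produce a gadget for which this formula is \emph{exact}, and exactness is the whole difficulty. In the hard instance the distance takes only two values, one apart. In the notation of $P_{\mathbf b}$ it is $d+1$ in the yes case and $d+2$ in the no case. To see the latter, take a maximal prefix $S$ of $[d]$ with $\sum_{i\in S}b_i<\beta$ and let $k$ be the next element; the detour $(S,d+2)\to(S,k)\to(S\cup\{k\},d+1)\to(S\cup\{k,d+1\},d+2)$ always gives a path of length $d+2$. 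So no additive slack is tolerable, and your fallback of making the hardness robust to a $\pm1$ ambiguity cannot work for this instance. What you need is that each tower $W$ over $\mathbf{x}$ satisfies two properties: (i) its tip is at the same distance from \emph{every} neighbour of $\mathbf{x}$; (ii) \emph{every} vertex of $W$, not only the tip, is at most that far from \emph{every} entry point. Without (i), $d_Q(\mathbf{x}',\mathbf{y}')$ depends on which neighbours a shortest path of $P$ uses. Without (ii), two non-tip vertices in different towers can be farther apart than the two tips. Your step~3 treats only pairs with at most one endpoint in a tail and defers cross-tail pairs to ``checking''. That is precisely where the straightforward construction fails: the paper observes that plain repeated siloing can create vertices in different towers at distance $2r(d-1)+d_P(\mathbf{u},\mathbf{v})+2$, which pins the diameter only to a window of width $2$. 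Product gadgets such as $P\times I^t$ do not help either, because the diameter of a product is the sum of the diameters and carries no information about the particular pair.

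The paper supplies exactly these missing ingredients. For the lower bound, rather than ``crossing layers'' (which, as you note, gains only one step per truncation), it performs $d$ truncations in a chain, each time cutting a vertex lying on all previously added facets. The resulting peak has a feasible basis disjoint from every old basis, so it is at distance at least $d$ from all old vertices. A separator argument then gives distance at least $j(d-1)+1$ from the original vertices after $j$ silos (Claim~2 in Lemma~\ref{lem:silodiameter}). For (i) and (ii), the paper identifies the new vertices of a silo with the explicit silo graph $G_d$ (Lemma~\ref{lem:isomorphism}). It then rotates the ordering of the basis cyclically from one silo to the next, so that at every level the vertex $\mathbf{u}_{\overline{j},j}$ reaches all but one entry vertex of the next level within $d-1$ steps. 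This flexibility lets every ground-layer vertex reach every tower vertex within $rd(d-1)$ steps, exactly matching the lower bound $rd(d-1)+1$ from each neighbour of the base vertex (Lemma~\ref{lem:silodiameter} and Theorem~\ref{thm:diameterprecise}). You would also need the encoding-size bound under repeated truncation (Lemma~\ref{lem:complexity}) and an actual proof that $P_{\mathbf b}$ has small diameter (Lemma~\ref{lem:knapsackdiameter}); you only assert both. As it stands, steps 2--4 of your plan are not established, so the proposal does not yet prove the theorem.
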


Our approach to proving Theorem~\ref{thm:combodiam} is to reduce \textsc{$k$-Distance on Simple Polytopes} to \textsc{Diameter of Simple Polytopes}. A priori, these are very different problems. To show that such a reduction nevertheless exists, we introduce and carefully analyze an intricate polyhedral construction (dubbed ``cyclic siloing'') which can be applied to any simple input polytope $P$ with a pair of vertices $\mathbf{u},\mathbf{v}$ to efficiently compute a larger simple polytope $Q$ whose diameter can be expressed as the sum of the distance of $\mathbf{u}$ and $\mathbf{v}$ on $P$ and another efficiently computable number $K$ (cf.~Theorem~\ref{thm:diameterprecise}). Given access to an oracle for \textsc{Diameter of simple polytopes}, one can then efficiently compute the distance of $\mathbf{u}$ and $\mathbf{v}$ on $P$ and solve \textsc{$k$-Distance on Simple Polytopes}. Our construction takes inspiration from a similar such construction previously analyzed in the context of lower bounds for the shadow simplex method~\cite{black2024exponentiallowerboundspivot}. We believe that the constructions introduced in this paper are of independent interest and will find applications to other problems in computational polytope theory.

At a very high level, our constructions resemble those in the aforementioned work of Frieze and Teng in~\cite{FriezeTeng1994}. In that work, they first construct a simple polytope by taking a linear programming relaxation of a combinatorial optimization problem and show that computing the radius, i.e. the furthest distance away from a given vertex in the graph of that polytope, is \NP-hard. They then apply a polyhedral construction to reduce diameter computation to the radius. However, our approach needs to overcome two major technical hurdles that stop Frieze and Teng's approach from working in our settings. First, we need a different construction in order to show finding shortest paths is \NP-hard instead of the radius. Our approach makes use of structural insights coming from understanding the geometric combinatorics of intersecting a hypercube with a halfspace, which was partly inspired by a similar construction in~\cite{PolymatroidsPathHard}. Second, the polyhedral construction used by Frieze and Teng to go from their hardness result for the radius to a hardness result for the diameter breaks simplicity. In particular, they iteratively cut off a vertex with a hyperplane (a process called \textbf{truncation}) and then take the convex hull with a new vertex close to that hyperplane (a process called \textbf{stacking}). Doing so repeatedly replaces a vertex with a tower separating that vertex from all of its neighbors. It breaks simplicity, because each vertex in the tower other than the top has more than $d$ neighbors. We instead perform another procedure that preserves simplicity by only applying truncations iteratively. In part, our approach is a refinement of the use of truncations by Holt and Klee in their study of Hirsch-sharp polytopes in \cite{Hirschsharp}.

The key idea behind our construction is to add $d$ truncations chosen purposefully to replace a vertex with a new vertex at which exactly the $d$ new added inequalities from the truncation are tight. Furthermore, we choose these truncations to never cut off any other vertex of the original polytope. Then, by construction, the new vertex is always at least $d$ steps away from any other vertex of the polytope before truncation. If we iterate this construction $r$ times the resulting new vertex is $r(d-1)+1$ steps away from any of the original vertices. This construction thus mimics the effect of building a tower like Frieze and Teng while preserving simplicity. We call this tower a \emph{cyclic silo}. To reduce distance computation to diameter computation, we replace the pair of vertices $\mathbf{u}$ and $\mathbf{v}$ we want to find a shortest path between with cyclic silos. In the resulting polytope, the pair of vertices at the tops of those towers will have distance precisely $2r(d-1)$ higher than the distance of $\mathbf{u}$ and $\mathbf{v}$ in the original polytope. One then aims to show that for $r$ sufficiently large, these vertices also attain the diameter. Therefore, computing the diameter of the resulting polytope allows one to find the distance between $\mathbf{u}$ and $\mathbf{v}$ in the original polytope, yielding the desired reduction. While this basic idea is approachable, to implement it in the desired way we navigate several intricate technical challenges. Namely, the choice of truncations, the efficient implementation of the construction, and a \emph{precise} rather than approximate control of the diameter of the resulting polytope turn out to be quite challenging. For the details, we refer to Section~\ref{sec:truncate}, where we carefully describe and analyze our constructions and discuss their technical challenges and how we overcome them.

Finally, all of these results presented so far are negative and indicate obstacles towards finding polynomial time simplex methods  conditional on $\PP \neq \NP$. Our fifth and final contribution is positive. In a recent work, Kaibel and Kukharenko~\cite{RockExtensions} showed that one can reduce the well-known open problem of solving linear programming in strongly polynomial time (often referred to as \emph{Smale's 9th problem} from his famous problem list for the 21st century) to instances where the feasible region forms a simpe polytope with combinatorial diameter bounded linearly in the number of inequalities. To prove this result, they introduce an operation they call a \textbf{rock extension}, which creates from a simple $d$-dimensional polytope with $m$ facets a closely related simple $(d+1)$-dimensional polytope with $m+1$ facets and the remarkable aforementioned property that its diameter is at most $2(m-d)$. Furthermore, these rock extensions have a distinguished vertex $(o,1)$ known as part of their construction. Their argument implies that there is a path from $(o,1)$ to any other vertex of length at most $m-d$, certifying the aforementioned diameter bound. In their work, they did not study the complexity of finding such a path. Here we show the following:
\begin{theorem} \label{thm:rockextension}
Let $Q$ be a rock extension with $m$ facets in $d$ dimensions. Let $\mathbf{u}$ and $\mathbf{v}$ be vertices of $Q$. Then one can find a path of length at most $2(m-d)$ from $\mathbf{u}$ to $\mathbf{v}$ in weakly polynomial time. If $(o,1)$ is taken as part of the input, a path of length at most $2(m-d)$ may be found in strongly polynomial time, and a path from $(o,1)$ to either vertex of length at most $m-d$ may also be found in strongly polynomial time.
\end{theorem}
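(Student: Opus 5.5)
The plan is to route every path through the distinguished vertex $(o,1)$ and to make the Kaibel--Kukharenko diameter argument algorithmic. I would first recall the structure of a rock extension $Q\subseteq\mathbb{R}^{d}$ built from a simple polytope $P$ in one dimension lower. It carries a height coordinate $t$. Each facet inequality of $P$ is tilted by a term in $t$, chosen so that all the tilted inequalities meet at the apex $(o,1)$; the base facet $t\ge 0$ is added, and the tilts are perturbed generically so that $Q$ stays simple. I expect their proof that every vertex $\mathbf{w}$ lies within distance $m-d$ of $(o,1)$ to rest on a \emph{monotonicity property}: for a suitable linear functional $\mathbf{c}$, namely $t$ or a functional determined by the tilts, every $\mathbf{c}$-improving edge step leaves a facet inequality that is never tight again along the walk. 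A walk that never revisits a left facet can take at most as many steps as there are facets not tight at the target, which is $m-d$, since exactly $d$ facets are tight at the simple vertex $(o,1)$. The first key step is to extract this statement explicitly: \emph{any} $\mathbf{c}$-monotone edge walk from any vertex $\mathbf{w}$ reaches $(o,1)$, the unique $\mathbf{c}$-maximizer, within $m-d$ steps. Proving it means redoing their exchange argument for an arbitrary improving pivot, not only the pivot their existence proof happens to select. I expect this to be the main obstacle. If the bound holds only for a specific choice of pivot, I would instead show that this choice is a computable rule, for example ``enter the facet whose tilt is largest'', which can be evaluated from the inequality data.

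Granting that, the algorithm is the simplex method with $\mathbf{c}$ as objective, started at $\mathbf{u}$ and using any pivot rule. Since $Q$ is simple, each pivot is one ratio test: a ray computation with $O(md)$ arithmetic operations on numbers whose encoding size stays polynomial, because each vertex is the solution of a $d\times d$ subsystem. By the first step there are at most $m-d$ pivots, so this runs in strongly polynomial time and yields a path from $\mathbf{u}$ to $(o,1)$ of length at most $m-d$. The same procedure from $\mathbf{v}$, reversed, gives the second half. Concatenating gives a $\mathbf{u}$--$\mathbf{v}$ path of length at most $2(m-d)$, which proves the strongly polynomial claims whenever $(o,1)$ is part of the input. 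If the functional $\mathbf{c}$ depends on $o$, knowing $(o,1)$ is exactly what lets us write $\mathbf{c}$ down, for instance as a positive combination of the normals of the $d$ facets tight at $(o,1)$, which makes $(o,1)$ its unique maximizer.

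When $(o,1)$ is not given, I would first recover it. It is a vertex of $Q$ with a distinguished description, either the maximizer of $t$ or the vertex at which all tilted constraints are tight. So it can be found by one linear program over $Q$, or by a feasibility system, solved by the ellipsoid or an interior-point method in weakly polynomial time, followed by a rounding or purification step to a vertex. After that the strongly polynomial procedure above applies, so the overall running time is weakly polynomial. The only delicate points are to confirm that the LP recovering $(o,1)$ has a unique optimum, which genericity of the perturbation should provide, and that its bit size is polynomial in the input.
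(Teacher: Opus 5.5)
There is a genuine gap, and it sits exactly at the step you flag as the main obstacle. Your whole length bound rests on one claim: for some linear functional $\mathbf{c}$ (the height $z$, or something built from the tilts), \emph{every} $\mathbf{c}$-improving walk reaches $(o,1)$ within $m-d$ steps, i.e.\ improving pivots never re-enter a facet they have left. You give no argument for this, and nothing in the rock-extension construction supplies one.

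Part of the problem is that your picture of the construction is inaccurate. The tilted inequalities do not all pass through $(o,1)$; that would give a pyramid, maximally degenerate at the apex. Nor is simplicity obtained by a generic perturbation. Instead, one starts from a simplex whose only positive-height vertex is $(o,1)$. The remaining rows are added one at a time, each with tilt chosen so that the new hyperplane $H_k$ supports the ball $B_{\mu_{k-1}}((o,1))$, with radii $\mu_{d+1}<\dots<\mu_m=\varepsilon$.

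The resulting structure is metric, not linear. The positive-height vertices born when row $k$ is added lie in the annulus $B_{\mu_k}\setminus B_{\mu_{k-1}}$ around $(o,1)$. A linear functional does not order these annuli, so nothing forces an improving pivot to move inward. The non-revisiting property you would need is precisely what fails for monotone paths in general. Your fallback rule (``enter the facet with largest tilt'') is equally unsupported.

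The missing idea is to use these layers with Euclidean distance to $(o,1)$ as the potential. Three facts make this work:
\begin{enumerate}
\item Every positive-height vertex $\mathbf{w}$ of layer $k$ has a neighbor in an earlier layer. It lies on an edge of the previous polytope cut by $H_k$. The cut-off endpoint has height $0$, since $H_k$ avoids the ball containing all earlier positive-height vertices. So the kept endpoint has positive height, and this edge survives all later cuts, because they avoid $B_{\mu_k}((o,1))$.
\item Every height-$0$ vertex is adjacent to a positive-height vertex.
\item All neighbors of $\mathbf{w}$ that are not in earlier layers lie outside $B_{\mu_{k-1}}((o,1))$.
\end{enumerate}
Hence the greedy rule ``step to the neighbor nearest to $(o,1)$'' strictly lowers the layer index and reaches $(o,1)$ within $m-d$ steps. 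When $(o,1)$ is given, each step is a neighbor computation (one ratio test per edge) plus a comparison of squared distances, so it is strongly polynomial. Otherwise $(o,1)$ is found as the unique height maximizer by one LP, in weakly polynomial time.

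Your outer framework matches the paper: route through $(o,1)$, concatenate the two paths, and recover $(o,1)$ by an LP. But the length guarantee comes from a non-linear potential tied to the sequential construction. It does not justify running the simplex method with an arbitrary pivot rule for a fixed linear objective.
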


This theorem follows from a very  simple analysis of the beautiful construction of Kaibel and Kukharenko in \cite{RockExtensions}. In Kukharenko's thesis \cite{kirllthesis}, he showed that the solution of the linear program $\min_{\mathbf{x}\in P}\mathbf{c}^{\intercal} \mathbf{x}$ is determined by the solution to the linear program $\min_{\mathbf{x}\in Q}(\mathbf{c},c_{z})^{\intercal}\mathbf{x}$, where $c_{z}$ may be computed in strongly polynomial time from $\mathbf{c}$. In that case, the path of length $m-d$ computed from $(o,1)$ to the optimum of the linear program is monotonically decreasing with respect to $(\mathbf{c}, c_{z})$. Our argument here implies that such a path may be computed in strongly polynomial time assuming the optimum of the linear program is known. More generally, it may be computed in weakly polynomial time by finding the optimum of that linear program.

This gives a weak sense in which there is indeed a weakly polynomial time simplex method. Namely, as a Phase 1 procedure, one implements the strongly polynomial time reduction to compute the rock extension and initializes at a vertex $(o,1)$. Then a path from $(o,1)$ to the optimum of $(\mathbf{c}, c_{z})$ of length at most $m-d$ may be computed in weakly polynomial time. However, of course, this is somewhat circular, since to compute this path we need to know the optimum, for which one has to appeal to a linear programming solver (however, possibly one quite different from the simplex method). At the same time, this tells us that complexity theory is not the obstruction to a polynomial time version of the simplex method with this Phase 1 procedure. In fact, assuming there is a strongly polynomial time algorithm for linear programming using \emph{any method}, there is a strongly polynomial algorithm to find a monotone path of length at most $m-d$ on a rock extension from $(o,1)$ to the optimum of $(\mathbf{c}, c_{z})$. In this sense, as a consequence of what we show here, there is a strongly polynomial time algorithm for linear programming if and only if there is a strongly polynomial time simplex method in a wide sense. This is a similar status to that of so-called \emph{circuit augmentation schemes} for linear programming (a generalization of the simplex methods which allow moving along a more general set of directions) due to the very recent breakthrough result of Natura in his proof of the polynomial circuit diameter conjecture in \cite{StronglyPolyCircDiameter}. His result demonstrates that if one can solve linear programming in strongly polynomial time using any method, then one can find a sequence of almost quadratically many circuit augmentations to the optimum of a linear program in strongly polynomial time. 

\section{Shortest Paths}\label{sec:distance}
We prove Theorem~\ref{thm:shortestpathsimple}, Corollary~\ref{mdcor} and Theorem~\ref{thm:monotone} by reduction from the following problem.
\medskip

\begin{mdframed}[innerleftmargin=0.5em, innertopmargin=0.5em, innerrightmargin=0.5em, innerbottommargin=0.5em, userdefinedwidth=0.95\linewidth, align=center]
	{\textsc{Partition with even sum}}
	\sloppy

	\noindent
	\textbf{Input:} A vector $(b_{1}, b_{2},\dots, b_{d}) \in \mathbb{Z}^{d}_{>0}$ with $\beta:=\sum_{i=1}^d b_i/2 \in \mathbb{Z}$.
    
	\noindent
	\textbf{Decision:} Does there exist a subset $S \subseteq [d]$ such that
\[\beta = \sum_{i \in S} b_{i} = \sum_{j \in [d] \setminus S} b_{j} ~\,\,\,\, ? \]
\end{mdframed}
Note that \textsc{Partition with even sum} is equivalent to the usual Partition problem, as there is trivially no solution to Partition if $\beta \notin \mathbb{Z}$. Thus, it is \NP-hard (cf.~Problem 20 in \cite{KarpsList}).

Given an instance $\mathbf{b}=(b_1,\ldots,b_d)\in \mathbb{Z}_{>0}^d$ of \textsc{Partition with even sum}, we define an associated polytope $P_\mathbf{b}$ as follows, where we set $\beta := \sum_{i=1}^{d} b_{i}/2 \in \mathbb{Z}$:
\[P_{\mathbf{b}} := [0,1]^{d+2} \cap \left\{\mathbf{x} \in \mathbb{R}^{d+2}\bigg\vert \sum_{i=1}^{d} b_{i} x_{i} - \beta x_{d+1} + (\beta+1/2) x_{d+2} \leq \beta + 1/4\right\}  \]
In what follows, whenever the vector $\mathbf{b}$ is clear from context, we will denote by $\mathbf{w}$ the vector obtained from $\mathbf{b}$ by extending it with entries $-\beta$ and $\beta+1/2$. The is, we define $\mathbf{w} := (b_{1}, b_{2},\dots, b_{n}, -\beta, \beta + 1/2)$. Then, in particular, \[P_{\mathbf{b}} = [0,1]^{d+2} \cap \{\mathbf{x} \in \mathbb{R}^{d+2}\mid \mathbf{w}^{\intercal} \mathbf{x} \leq \beta + 1/4\}.\] 
In the following, we prove several basic properties about the polytope $P_\mathbf{b}$, one of which is that it is a simple polytope. These properties allow us to reduce \textsc{Partition  with even sum} to the problem of finding shortest paths between two vertices of $P_\mathbf{b}$. 
\begin{lem}
\label{lem:knapsacksimple}
For all $\mathbf{b} \in \mathbb{Z}^{d}_{>0}$ with $\sum_{i=1}^d b_i$ even, the polytope $P_{\mathbf{b}}$ is $(d+2)$-dimensional and simple. 
\end{lem}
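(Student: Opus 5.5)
Writing $n := d+2$ and $\gamma := \beta + 1/4$, the plan is to check two things: full-dimensionality, and that no point of $P_{\mathbf{b}}$ has more than $n$ tight constraints. The second condition suffices, since a full-dimensional polytope in which every vertex lies on at most $n$ facet-defining hyperplanes (and at least $n$ automatically) is simple. Equivalently, the $2n$ cube constraints together with the knapsack hyperplane $H = \{\mathbf{w}^{\intercal}\mathbf{x} = \gamma\}$ are in general position with respect to vertices of $P_{\mathbf{b}}$. We will also confirm the knapsack constraint is facet-defining (or simply that it is irredundant), though simplicity only needs the tightness count.

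For full-dimensionality, take a small $\varepsilon>0$ and the point $\varepsilon\mathbf{1}$. All cube constraints are strict, and $\mathbf{w}^{\intercal}(\varepsilon \mathbf{1}) = \varepsilon(2\beta - \beta + \beta + 1/2) = \varepsilon(2\beta+1/2) < \gamma$ for $\varepsilon$ small. So we have an interior point and $\dim P_{\mathbf{b}} = n$.

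For simplicity, consider a vertex $\mathbf{v}$. At most one of $x_i \ge 0$, $x_i\le 1$ is tight per coordinate, so at most $n$ cube constraints are tight. If $H$ is not tight at $\mathbf{v}$, we are done. If $H$ is tight, we must rule out all $n$ coordinates being in $\{0,1\}$, i.e.\ show $H$ contains no $0/1$ point. This is the key step and it is immediate: for $\mathbf{x}\in\{0,1\}^{n}$, the quantity $\mathbf{w}^{\intercal}\mathbf{x}$ is an integer plus either $0$ or $1/2$ (the only non-integer weight is $\beta+1/2$, since $\beta\in\mathbb{Z}$). So it lies in $\tfrac12\mathbb{Z}$, while $\gamma = \beta+1/4 \notin \tfrac12\mathbb{Z}$. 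Hence at most $n-1$ cube constraints are tight there, for a total of at most $n$. (Then exactly $n-1$ are tight, so the vertex lies on an edge of the cube.)

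The main point to handle carefully is that the "at most $n$ tight inequalities" criterion really gives simplicity, i.e.\ that the tight constraints at a vertex correspond to distinct facets. All $2n$ cube facets survive: each cube facet contains a point, e.g.\ near the origin or near $\mathbf{e}_i$ with small other coordinates, satisfying the knapsack constraint strictly; this needs $b_i < \gamma$ and $-\beta,\ \beta+1/2 < \gamma$, which hold when $d\ge 2$ (for $d=1$ one checks directly). The hyperplane $H$ also meets the interior of the cube, because $\mathbf{w}^{\intercal}\mathbf{1} = 2\beta+1/2 > \gamma$. So every tight inequality at a vertex is facet-defining. Combined with the bound from the previous step, each vertex lies in exactly $n$ facets, and $P_{\mathbf{b}}$ is simple.
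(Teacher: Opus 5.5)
Your core argument is correct and is essentially the paper's proof. An explicit interior point gives full-dimensionality. A vertex with $n+1$ tight constraints would have to be a $0/1$ point on $H$, which is impossible because $\mathbf{w}^{\intercal}\mathbf{x}\in\tfrac12\mathbb{Z}$ for $\mathbf{x}\in\{0,1\}^{n}$, while $\beta+1/4\notin\tfrac12\mathbb{Z}$.

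\textbf{Errors in your last paragraph.} The inequality $\beta+1/2<\gamma=\beta+1/4$ never holds. The inequality $b_i<\gamma$ (i.e.\ $b_i\le\beta$) can fail for any $d\ge 2$, for instance with $\mathbf{b}=(3,1)$. So points near $\mathbf{e}_{d+2}$, and near $\mathbf{e}_i$ whenever $b_i>\beta$, do not lie in $P_{\mathbf{b}}$, and your witnesses for those cube facets fail.

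\textbf{That paragraph is not needed.} As you noted at the start, the tightness count already suffices. Each facet through a vertex $\mathbf{v}$ is cut out by some inequality tight at $\mathbf{v}$, and distinct facets require distinct inequalities. Since $\mathbf{v}$ lies on at least $n$ facets, having at most $n$ tight inequalities forces exactly $n$ facets at $\mathbf{v}$. It even forces every tight inequality to be facet-defining.

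\textbf{If you want irredundancy anyway} (Section~\ref{sec:truncate} assumes it), use witnesses with $x_{d+1}$ close to $1$ and the remaining free coordinates small but positive. On the facet $x_i=1$ with $i\le d$, the left-hand side is then about $b_i-\beta\le\beta$. On the facet $x_{d+2}=1$, it is about $1/2\le\beta$. Both values lie strictly below $\gamma$.
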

\begin{proof}
One can observe directly from the definition that $P_\mathbf{b}$ contains $[0,1/3]^{d+2}$ as a subset and is thus full-dimensional, i.e., of dimension $d+2$.

Since $[0,1]^{d+2}$ is simple, any vertex of $P_{\mathbf{b}}$ contained in at least $d+3$ defining hyperplanes must be in the hyperplane:
\[H_{\mathbf{b}} = \{\mathbf{x} \in \mathbb{R}^{d+2}: \sum_{i=1}^{n} b_{i} x_{i} - \beta x_{d+1} + (\beta+1/2) x_{d+2} = \beta + 1/4\}.\]
and also be a vertex of $[0,1]^{d+2}$ and therefore be a $\{0,1\}$-vector in that hyperplane. Since $b_{i}, \beta \in \mathbb{Z}$ for all $i \in [n]$ and $\beta + 1/2 \in \mathbb{Z}[1/2]$, for any $S \subseteq [d+2]$ we have
\[\sum_{i \in S} w_{i} \in \mathbb{Z}[1/2], \]
where $\mathbb{Z}[1/2]$ denotes the set of rational numbers of the form $p/q$ where $q \in \{1,2\}$ and $p \in \mathbb{Z}$. Hence, since
$\beta + 1/4 \notin \mathbb{Z}[1/2]$, we have 
\[\sum_{i \in S} w_{i} \neq \beta + 1/4.\]
Therefore, $P_{\mathbf{b}}$ is simple. 
\end{proof}

Next, we give an explicit combinatorial description of the vertices of $P_\mathbf{b}$. This description works in general for intersecting a hypercube with a halfspace, so no special assumptions on the vector $\mathbf{w}$ are used in the proof of the next statement. In what follows, for a subset $S \subseteq [d+2]$, let $\mathbf{e}_{S} = \sum_{i \in S} \mathbf{e}_{i}$.

\begin{lem}
\label{lem:knapsackvertices}
The graph of $P_{\mathbf{b}}$ has vertex set $V_{1} \cup V_{2}$, where 
\begin{align*}
    V_{1} &= \left\{\mathbf{e}_S\bigg\vert S\subseteq [d+2] \text{ s. t.} \sum_{i \in S} w_{i} \leq \beta\right\} \text{ and} \\
    V_{2} &= \left\{\mathbf{e}_{S} + \frac{\beta+1/4-\sum_{i\in S} w_{i}}{w_{k}} \mathbf{e}_{k}\bigg\vert \sum_{i \in S} w_{i} < \beta + 1/4< \sum_{j \in S \cup \{k\}} w_{j} \text{ or }\sum_{i \in S} w_{i} > \beta + 1/4 > \sum_{j \in S \cup \{k\}} w_{j}\right\}.
\end{align*}
\end{lem}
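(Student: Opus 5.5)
The plan is to classify the vertices of $P_{\mathbf{b}}$ by which facet-defining inequalities are tight. By Lemma~\ref{lem:knapsacksimple}, $P_{\mathbf{b}}$ is simple and full-dimensional, so every vertex $\mathbf{x}$ is the unique solution of exactly $d+2$ tight inequalities drawn from the $2(d+2)$ box constraints $0\le x_i\le 1$ and the knapsack constraint $\mathbf{w}^{\intercal}\mathbf{x}\le \beta+1/4$. (The statement says no special assumptions on $\mathbf{w}$ are used; for a general $\mathbf{w}$ I would only add a nondegeneracy caveat, namely that no $\{0,1\}$-vector lies on the hyperplane. This is exactly what Lemma~\ref{lem:knapsacksimple} provides here.) I would split into two cases according to whether the knapsack inequality is tight.

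\emph{Case 1: the knapsack inequality is not tight.} Then all $d+2$ tight inequalities are box constraints. Since $x_i=0$ and $x_i=1$ cannot hold simultaneously, they fix every coordinate, so $\mathbf{x}=\mathbf{e}_S$ for some $S$. Feasibility means $\sum_{i\in S}w_i\le\beta+1/4$. Because all partial sums lie in $\mathbb{Z}[1/2]$, as shown in the proof of Lemma~\ref{lem:knapsacksimple}, this is equivalent to $\sum_{i\in S}w_i\le\beta$. Conversely, any $0/1$-vector satisfying this bound is a vertex of the cube lying in $P_{\mathbf{b}}$, hence a vertex of $P_{\mathbf{b}}$. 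This gives exactly $V_1$.

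\emph{Case 2: the knapsack inequality is tight.} Then $d+1$ box constraints are tight, so all coordinates except one index $k$ are fixed at $0/1$ values. Write $S$ for the set of fixed coordinates equal to $1$, so $k\notin S$. The tight knapsack equation then determines
\[
x_k=\frac{\beta+1/4-\sum_{i\in S}w_i}{w_k},
\]
and we need $w_k\neq 0$. Here $w_k\neq 0$ always holds for our $\mathbf{w}$; in general it is needed for the $d+2$ constraints to be linearly independent. Feasibility requires $0\le x_k\le 1$, and by the non-integrality argument the endpoint values are excluded, so in fact $0<x_k<1$. Unwinding this according to the sign of $w_k$ gives precisely the two alternatives in the definition of $V_2$: $\beta+1/4$ lies strictly between $\sum_S w_i$ and $\sum_{S\cup\{k\}}w_j$, in either order. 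Conversely, any such point satisfies all constraints. At that point $d+2$ linearly independent constraints are tight (the $d+1$ box equations plus the knapsack equation, which is independent because $w_k\neq0$), so it is a vertex.

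The only delicate step is making sure nothing is lost or double counted: vertices with the knapsack inequality tight must have exactly one fractional coordinate. This follows from simplicity together with the observation that no $0/1$ point lies on $H_{\mathbf{b}}$. I expect this bookkeeping, rather than any real difficulty, to be the main point needing care. Since the lemma describes the vertex set of the graph, nothing further about edges needs to be proved here.
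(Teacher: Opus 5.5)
Your proposal is correct and follows essentially the same route as the paper. The paper splits the vertices into cube vertices lying in the halfspace and intersection points of the hyperplane $\mathbf{w}^{\intercal}\mathbf{x}=\beta+1/4$ with cube edges $[\mathbf{e}_S,\mathbf{e}_S+\mathbf{e}_k]$; this is exactly your split by whether the knapsack inequality is tight, since the $d+1$ tight box constraints cut out such an edge. Your version is also more careful than the paper's: you make explicit that half-integrality of the partial sums is what justifies the bound $\le\beta$ in $V_1$ and the strict inequalities in $V_2$, even though the paper remarks that no special assumptions on $\mathbf{w}$ are used.
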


\begin{proof}
Every vertex of $[0,1]^{d+2}$ that is in the halfspace $\mathbf{w}^{\intercal} \mathbf{x} \leq \beta + 1/4$ remains a vertex, since 
\[P_{\mathbf{b}} = [0,1]^{d+2} \cap  \{\mathbf{x} \in \mathbb{R}^{d+2}: \mathbf{w}^{\intercal} \mathbf{x} \leq \beta + 1/4\}\]
This encompasses every vertex in $V_{1}$. Every other vertex is given by the intersection of the hyperplane $\{\mathbf{x} \in \mathbb{R}^{d+2}: \mathbf{w}^{\intercal} \mathbf{x} = \beta + 1/4\}$ with an edge of $[0,1]^{d+2}$. All edges of the hypercube $[0,1]^{d+2}$ are spanned between $\mathbf{e}_{S}$ and $\mathbf{e}_{S} + \mathbf{e}_{k}$ for some $S \subseteq [d]$ and $k \in [d] \setminus S$. Then the claimed description of the remaining set of vertices $V_{2}$ is obtained by computing the intersection points of such edges with the hyperplane defined by $\mathbf{w}^{\intercal} \mathbf{x} = \beta + 1/4$. 
\end{proof}

In what remains, we will encode the vertices of $P_\mathbf{b}$ purely combinatorially by identifying vertices in $V_1$ with their corresponding sets $S$ and vertices in $V_2$ with the unique pair $(S,k)$ of a set $S\subseteq [d]$ and an element $k\in [d]\setminus S$ satisfying the inequalities in the definition of $V_2$. We describe the graph using this terminology.

\begin{figure}
    \centering
    \begin{tikzpicture}
        \draw[thick, black] (0,-2) -- (2,0) -- (0,2) -- (-2,0) -- cycle;
        \draw (0,-2) node[red, circle, fill, scale =.5] {};
        \draw (-2,0) node[red, circle, fill, scale =.5] {};
        \draw (0,2) node[red, circle, fill, scale =.5] {};
        \draw (2,0) node[red, circle, fill, scale =.5] {};
        \draw (0,-2.5) node {$S$};
        \draw (0,2.5) node {$S \cup \{i,j\}$};
        \draw (3, 0) node {$S \cup \{j\}$};
        \draw (-3,0) node {$S \cup \{i\}$};
        \draw[ultra thick, red, dashed] (-1.5,-1) -- (1.5, -1);
        \draw (0,-.5) node[red] {(c)};
        \draw[ultra thick, blue, dotted] (-1.5,1) -- (1.5, 1);
        \draw (0,1.5) node[blue] {(f)};
    \end{tikzpicture}
    \hspace{.5 cm}
    \begin{tikzpicture}
        \draw[thick, black] (0,-2) -- (2,0) -- (0,2) -- (-2,0) -- cycle;
        \draw (0,-2) node[red, circle, fill, scale =.5] {};
        \draw (-2,0) node[red, circle, fill, scale =.5] {};
        \draw (0,2) node[red, circle, fill, scale =.5] {};
        \draw (2,0) node[red, circle, fill, scale =.5] {};
        \draw (0,-2.5) node {$S$};
        \draw (0,2.5) node {$S \cup \{i,j\}$};
        \draw (3, 0) node {$S \cup \{j\}$};
        \draw (-3,0) node {$S \cup \{i\}$};
        \draw[ultra thick, magenta, dashed] (-1.5,-1.5) -- (1.5, 1.5);
        \draw (0,-.5) node[magenta] {(e)};
        \draw[ultra thick, green, dotted] (-1.5,-.75) -- (-1.5, .75);
        \draw (-1,0) node[green] {(d)};
    \end{tikzpicture}
    \caption{Depicted are the four different ways a hyperplane can slice two edges of a $2$-face of a hypercube, which gives rise to the notions (c), (d), (e), and (f) of adjacency in Lemma \ref{lem:edgetypes}. Note there are truly six ways this can occur, but the remaining two correspond to swapping $i$ and $j$ for edges of type (d) and (e).}
    \label{fig:edgetypes}
\end{figure}
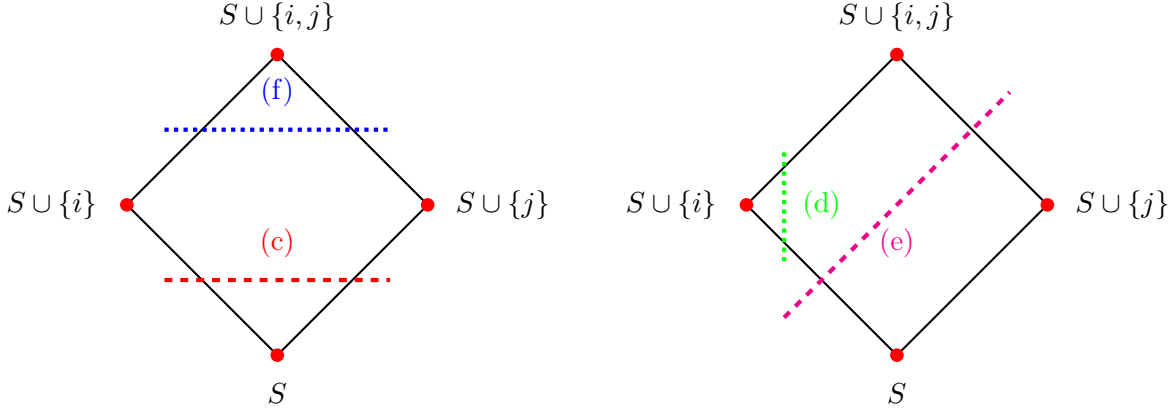

\begin{lem}
\label{lem:edgetypes}
Let $\mathbf{b}\in \mathbb{Z}_{>0}^d$ be such that $\sum_{i=1}^d b_i$ is even and let $\mathbf{u}$ and $\mathbf{v}$ be two vertices of $P_{\mathbf{b}}$. Then $\mathbf{u}$ and $\mathbf{v}$ are adjacent on $P_\mathbf{b}$ if and only if 
\begin{itemize}
    \item[(a)] $\mathbf{u} = S$ and $\mathbf{v} = T$ for some $S, T\subseteq [d+2]$ with $|S\Delta T|=1$, or
    \item[(b)]  $\mathbf{u} = S$ and $\mathbf{v} = (S,i)$ (for $i\notin S$) or $\mathbf{v} = (S\setminus \{j\}, j)$ (for $j\in S$) for some $S\subseteq [d+2]$, or
    \item[(c)] $\mathbf{u} = (S,i)$ and $\mathbf{v} = (S,j)$ for some $S\subseteq [d+2]$ and distinct $i,j\notin S$, or
    \item[(d)] $\mathbf{u} = (S,i)$ and $\mathbf{v} = (S \cup \{i\}, j)$ for some $S\subseteq [d+2]$ and distinct $i,j\notin S$, or
    \item[(e)] $\mathbf{u} = (S,i)$ and $\mathbf{v} = (S \cup \{j\}, i)$ for some $S\subseteq [d+2]$ and distinct $i,j\notin S$, or
    \item[(f)] $\mathbf{u} = (S \cup \{i\},j)$ and $\mathbf{v} = (S \cup \{j\}, i)$ for some $S\subseteq [d+2]$ and distinct $i,j\notin S$.
\end{itemize}
\end{lem}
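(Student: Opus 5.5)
We characterize the edges of $P_\mathbf{b}$ through the general face structure of the intersection of a simple polytope with a halfspace in general position. Write $C=[0,1]^{d+2}$, $H^{\le}=\{\mathbf{w}^\intercal\mathbf{x}\le\beta+1/4\}$ and $H$ its boundary hyperplane. By Lemma~\ref{lem:knapsacksimple}, $P_\mathbf{b}$ is simple and $H$ contains no vertex of $C$. Hence every vertex of $P_\mathbf{b}$ lies on exactly $d+2$ facets, and two vertices are adjacent if and only if they share exactly $d+1$ tight inequalities and their common edge is nonempty. Each edge of $P_\mathbf{b}$ is either contained in $H$ or not. This gives the following case split.

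\emph{Edges not in $H$.} Such an edge is $e\cap H^{\le}$ for an edge $e$ of $C$ meeting the open halfspace $\{\mathbf{w}^\intercal\mathbf{x}<\beta+1/4\}$. If both endpoints of $e$ are in $V_1$, the edge is unchanged, which is case (a). If exactly one endpoint $\mathbf{e}_S$ lies in $H^{\le}$, then $e$ is cut at the point $e\cap H$. By Lemma~\ref{lem:knapsackvertices}, this point is labelled $(S,i)$ when $e$ goes from $S$ to $S\cup\{i\}$, and $(S\setminus\{j\},j)$ when $e$ goes from $S$ down to $S\setminus\{j\}$. This is case (b). Conversely, every pair listed in (a) or (b) arises in this way, since the labels in Lemma~\ref{lem:knapsackvertices} are defined so that the relevant endpoint is feasible.

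\emph{Edges in $H$.} These are exactly the edges of the $(d+1)$-polytope $C\cap H$. Since $H$ is generic (it avoids all vertices of $C$), each face of $C\cap H$ is $F\cap H$ for a face $F$ of $C$ of one dimension higher. So edges of $C\cap H$ correspond to the $2$-faces $F$ of $C$ that $H$ cuts. Such a face is a square with vertices $\mathbf{e}_S,\mathbf{e}_{S\cup\{i\}},\mathbf{e}_{S\cup\{j\}},\mathbf{e}_{S\cup\{i,j\}}$, and $H$ separates these vertices into two nonempty groups. The separating line crosses exactly two sides of the square, and its two crossing points are the endpoints of the edge.

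We enumerate which two sides can be crossed, as in Figure~\ref{fig:edgetypes}. If the line separates $S$ alone, it crosses the two sides at $S$, giving (c). If it separates $S\cup\{i,j\}$ alone, it crosses the two sides at that vertex, giving (f). If it separates $S\cup\{i\}$ alone, the crossed sides are $S\to S\cup\{i\}$ and $S\cup\{i\}\to S\cup\{i,j\}$, giving (d), and symmetrically with $i,j$ swapped. If it separates $\{S,S\cup\{j\}\}$ from the other two vertices, the crossed sides are the two parallel $i$-sides, labelled $(S,i)$ and $(S\cup\{j\},i)$, giving (e), and symmetrically with $i,j$ swapped.

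The converse direction needs one check: whenever two labelled vertices of $V_2$ have the form in (c)–(f), the corresponding square really is cut in the required pattern. This holds because each label $(T,k)$ records that $H$ crosses the side from $\mathbf{e}_T$ to $\mathbf{e}_{T\cup\{k\}}$. Both crossing points lie in the common square, so $H$ cuts that $2$-face and the segment between them is an edge of $C\cap H$.

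The main obstacle is justifying rigorously that the faces of $C\cap H$ are exactly the sets $F\cap H$ with $H$ meeting the relative interior of $F$, with the dimension dropping by exactly one. We would prove this via tight constraints. A point of $C\cap H$ in the relative interior of $F\cap H$ has the coordinate constraints of $F$ tight, together with the constraint from $H$. Genericity of $H$ with respect to the vertices of $C$ guarantees that these constraints are independent, so the dimension drops by exactly one. The remaining work is bookkeeping to match the vertex labels of Lemma~\ref{lem:knapsackvertices}.
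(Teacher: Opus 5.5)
Your proposal is correct and follows essentially the same route as the paper. Edges of $P_\mathbf{b}$ not lying in the cutting hyperplane come from (possibly cut) edges of the cube, giving (a) and (b). Edges inside the hyperplane correspond to $2$-faces of the cube crossed by $H$, and the six crossing patterns of a square give (c)--(f). You are also more explicit than the paper about the converse direction and the genericity step (that $H$ contains no cube vertex, so $\dim(F\cap H)=\dim F-1$), both of which the paper leaves implicit.
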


\begin{proof}
Case (a) corresponds to adjacency on the hypercube, and two vertices in $V_{1}$ will be adjacent if and only if they are adjacent on the hypercube. 

Since $P_\mathbf{b}$ is simple, the only inequalities which are tight at a vertex $\mathbf{u}\in V_{1}$ are those coming from the hypercube $[0,1]^{d+2}$. It follows that $\mathbf{u}$ is adjacent to a vertex $\mathbf{v}\in V_{2}$ if and only if $\mathbf{v}$ is the intersection of an edge incident to the $V_{1}$ vertex on the hypercube $[0,1]^{d+2}$ with the hyperplane $\mathbf{w}^\intercal \mathbf{x}=\beta+1/4$. That is precisely what is captured by Case (b). 

All of cases (c), (d), (e), and (f) correspond to adjacency between vertices in $V_2$. Since the vertices in $V_2$ are those obtained as intersections of edges of the hypercube with the hyperplane $\mathbf{w}^\intercal\mathbf{x}=\beta+1/4$, the edges between them correspond exactly to the one-dimensional intersections of the two-dimensional faces $F$ of the hypercube $[0,1]^{d+2}$ with the hyperplane $\mathbf{w}^{\intercal} \mathbf{x} = \beta +(1/4)$. Furthermore, the vertices in $V_2$ connected by such an edge are the intersection points of two of the edges of the hypercube contained in $F$ with the hyperplane $\mathbf{w}^{\intercal} \mathbf{x} = \beta +(1/4)$. 

Note that the vertex-sets of the two-faces of the hypercube are exactly of the form $S$, $S \cup \{i\}$, $S \cup \{j\}$, $S \cup \{i,j\}$ where $S\subseteq [d+2]$ and $i,j\notin S$, and their four connecting edges are $[S, S\cup\{i\}]$, $[S, S\cup\{j\}]$, $[S\cup\{i\}, S\cup \{i,j\}]$ as well as $[S\cup\{j\}, S\cup \{i,j\}]$. Thus, for a fixed such two-dimensional face $F$ of the hypercube, there are up to $\binom{4}{2} = 6$ pairs of these four edges which could potentially be cut by the hyperplane and lead to adjacent vertices in $V_2$ on $P_\mathbf{b}$. Up to symmetry by swapping $i$ and $j$, there are really only four types of adjacency that can arise between vertices of $V_2$:

Case (c) corresponds to adjacency between the vertices of $P_{\mathbf{b}}$ coming from the edge from $S$ to $S \cup \{i\}$ and the edge from $S$ to $S \cup \{j\}$. Case $(d)$ comes from the edges $[S, S\cup \{i\}]$ and $[S \cup \{i\}, S \cup \{i,j\}]$. Case (e) comes from the edges $[S, S \cup \{i\}]$ and $[S \cup \{j\}, S \cup \{i,j\}]$. Finally case (f) comes from the pair $[S \cup \{i\}, S \cup \{i,j\}]$ and $[S \cup \{j\}, S \cup \{i,j\}]$. See Figure~\ref{fig:edgetypes} for a visualization of these cases.
\end{proof}

It turns out that the relevance of this characterization comes down to the following insight. If $S \subseteq T$ and $(S,i)$ and $(T,i)$ are both vertices, then the shortest a path between $(S,i)$ and $(T,i)$ in the graph of $P_\mathbf{b}$ could potentially be is $|T| - |S|$ by adding one element of $T$ to $S$ at a time. What we will prove is that if a shortest path of length $|T| - |S|$ exists, then it must be of that form, and that checking if such a path exists is \NP-hard by a reduction to \textsc{Partition with even sum}.

\begin{lem}
\label{lem:partitionreduction}
Let $\mathbf{b}=(b_1,\ldots,b_d)\in \mathbb{Z}_ {>0}^d$ such that $\sum_{i=1}^{d}{b_i}$ is even. Then
\begin{itemize}
    \item $(\emptyset,d+2)$ and $([d+1],d+2)$ are vertices of $P_{\mathbf{b}}$. 
    \item The shortest path between $(\emptyset,d+2)$ and $([d+1],d+2)$ is of length at most $d+1$ if and only if there exists a solution to \textsc{Partition  with even sum} with instance $\mathbf{b}$.
\end{itemize}

\end{lem}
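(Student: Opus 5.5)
The plan is to first check the two vertex memberships directly from Lemma~\ref{lem:knapsackvertices}, and then show that any path of length $d+1$ is forced to be a chain of type (e) moves that all keep $k=d+2$. A chain of that form exists exactly when a partition exists. For the first bullet, $w_{d+2}=\beta+1/2>0$. For $(\emptyset,d+2)$ we need $0<\beta+1/4<\beta+1/2$, which holds. For $([d+1],d+2)$ we have $\sum_{i\in[d+1]}w_i=2\beta-\beta=\beta$, and $\beta<\beta+1/4<2\beta+1/2$ holds. Both are therefore vertices in $V_2$.

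For the lower bound I will use the potential $\phi$ that assigns $|S|$ both to a $V_1$ vertex $S$ and to a $V_2$ vertex $(S,k)$. Going through the cases of Lemma~\ref{lem:edgetypes}, each edge changes $\phi$ by at most one, and the change is by exactly one only for edges of type (a), (d) and (e) and for the edge $(S\setminus\{j\},j)$--$S$ in (b). Since $\phi$ goes from $0$ to $d+1$, a path of length at most $d+1$ has length exactly $d+1$, and every step raises $\phi$ by one. I then read off the structure of such a path.
\begin{itemize}
\item Each step adds exactly one element to $S$, so the sets form an increasing chain from $\emptyset$ to $[d+1]$. In particular $d+2$ never lies in any $S$.
\item A step that increases $\phi$ cannot go from a $V_1$ vertex to a $V_2$ vertex. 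Since the path starts in $V_2$, it stays in $V_2$.
\item A type (d) step out of $(S,d+2)$ would put $d+2$ into the set, which is impossible. Starting from $k=d+2$, the index therefore stays $d+2$, and every step is of type (e): $(S,d+2)\to(S\cup\{j\},d+2)$.
\end{itemize}
So the path is $(S_0,d+2),\dots,(S_{d+1},d+2)$, where $S_0=\emptyset\subset S_1\subset\dots\subset S_{d+1}=[d+1]$ is a maximal chain.

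The key computation is the vertex condition for $(S,d+2)$. Because $w_{d+2}>0$, it reads $\sum_{i\in S}w_i<\beta+1/4<\sum_{i\in S}w_i+\beta+1/2$. Since $\sum_{i\in S}w_i$ is an integer when $d+2\notin S$, this becomes $0\le \sum_{i\in S}w_i\le\beta$. Now take the step at which $d+1$ enters the chain, say from $S$ to $S\cup\{d+1\}$ with $S\subseteq[d]$. The first vertex gives $\sum_{i\in S}b_i\le\beta$, and the second gives $\sum_{i\in S}b_i-\beta\ge0$. Hence $\sum_{i\in S}b_i=\beta$, so $S$ is a solution of \textsc{Partition with even sum}.

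Conversely, let $S$ be a partition solution. I build the chain by adding the elements of $S$ one at a time, then $d+1$, then the elements of $[d]\setminus S$ one at a time. Along the first stretch the weights rise from $0$ to $\beta$, adding $d+1$ drops the weight to $0$, and the last stretch rises back to $\beta$, so every set satisfies $0\le w\le\beta$. Every $(S_t,d+2)$ is then a vertex, and consecutive ones are adjacent by type (e), which gives a path of length $d+1$.

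I expect the main obstacle to be the careful case check in the potential argument. It has to rule out type (b), (c), (d) and (f) steps, handle both orientations of each edge type, and exclude the negative-weight index $k=d+1$. The observation that any change of $k$ away from $d+2$ would force $d+2$ into the chain is the step that must be made airtight.
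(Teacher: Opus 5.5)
Your proof is correct and follows essentially the same route as the paper. A support-size potential forces every step of a length-$(d+1)$ path to be a support-increasing type (e) move with index $d+2$ (type (d) is excluded because $d+2$ would enter the set), the partition is read off at the step where $d+1$ enters, and the converse uses the same explicit chain; your rewriting of the vertex condition for $(S,d+2)$ as $0\le\sum_{i\in S}w_i\le\beta$ is a slightly cleaner way to run both directions.

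One small slip. With your potential $\phi=|S|$, the type (b) step $(S\setminus\{j\},j)\to S$ from $V_2$ into $V_1$ \emph{does} increase $\phi$, so "since the path starts in $V_2$, it stays in $V_2$" does not follow. The path stays in $V_2$ because it \emph{ends} there and increasing steps can never leave $V_1$, or because that step out of $(S,d+2)$ would put $d+2$ into the set. The paper avoids this by counting the full support $S\cup\{k\}$ of a $V_2$ vertex.
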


\begin{proof}
Since 
\[\mathbf{w}^{\intercal} \mathbf{e}_{\emptyset} = 0 \leq \beta +1/4 < \beta +1/2 = \mathbf{w}^{\intercal} \mathbf{e}_{d+2},\] and 
\[\mathbf{w}^{\intercal} \mathbf{e}_{[d+1]} = \sum_{i=1}^{n} b_{i} -\beta = 2\beta - \beta = \beta < \beta + 1/4 < \beta + \beta +1/2 = \mathbf{w}^{\intercal} \mathbf{e}_{[d+2]},\]
$(\emptyset,d+2)$ and $([d+1],d+2)$ are vertices of $P_{\mathbf{b}}$ by the characterization of the vertices in Lemma \ref{lem:knapsackvertices}. This proves the first item of the lemma.

In the following, we will determine precisely the structure of the paths of length at most $d+1$ from $(\emptyset,d+2)$ to $([d+1],d+2)$ on $P_\mathbf{b}$, which will then yield the second item of the lemma.

From the characterization of edges in Lemma \ref{lem:edgetypes}, moving along any edge of $P_\mathbf{b}$ can only increase the size of the support of the current vertex by at most $1$. Thus, any path between $(\emptyset, d+2)$ and $([d+1], d+2)$ of length at most $d+1$ must in fact be of length \emph{exactly} $d+1$ and each step along the path must increase the size of the support by exactly $1$. The only edge types from Lemma \ref{lem:edgetypes} that increase the size of the support when we move along them starting from a vertex of the form $(S,i)$ are those of type (d) and (e). Since our path starts at $(\emptyset, d+2)$ and since moving along type (d) and (e) edges we stay within vertices of type $(S,i)$, it follows that any path of length $d+1$ from $(\emptyset,d+2)$ to $([d+1],d+2)$ on $P_\mathbf{b}$ must only use type (d) and (e) edges. We now claim that any such path in fact only uses type (e) edges. Indeed, towards a contradiction suppose it uses some type (d) edge and consider the earliest such edge along the path when starting from $(\emptyset, d+2)$. Since edges of type (e) always move from a vertex of the form $(S,i)$ to a vertex of the form $(S',i)$ and hence always preserve the ``second coordinate'', and since we start from the vertex $(\emptyset,d+2)$, the first edge of type (d) along the path must then start at a vertex of the form $(S,d+2)$ for some $S\subseteq [d+1]$ and go to $(S\cup \{d+2\},j)$ for some $j\notin S$ distinct from $d+2$. To have a total length of $d+1$, we would then need to reach $([d+1],d+2)$ from $(S\cup \{d+2\},j)$ using only type (d) and (e) edges which increase the support. However, this is impossible, since $S\cup \{d+2\}$ contains the element $d+2$ while $[d+1]$ does not, and since any type (d) and (e) edges used after will have to increase the support and hence preserve that $d+2$ is an element of the set in the tuple. Hence, we have reached the desired contradiction, and it follows that indeed all edges used along the path must be support-increasing edges of type (e).

Recalling the definition of type (e) edges, it now follows that every path of length at most $d+1$ from $(\emptyset, d+2)$ to $([d+1], d+2)$ on $P_\mathbf{b}$ must be of the form $(S_0,d+2),(S_1,d+2),\ldots,(S_{d+1},d+2)$ where \[\emptyset = S_{0} \subsetneq  S_{1} \subsetneq S_{2}, \dots \subsetneq S_{d+1} = [d+1]\] are such that $(S_{i},d+2)$ is a vertex of $P_\mathbf{b}$ and $S_{i} = S_{i-1} \cup \{k\}$ for some $k \in [d+1] \setminus S_{i-1}$ for all $1\le i \le d+1$.

We claim that such a sequence of sets exists (and hence the distance from $(\emptyset,d+2)$ to $([d+1],d+2)$ is at most $d+1$) if and only if there is a solution to \textsc{Partition with even sum}. Suppose first that such a sequence of sets exists. Let $i$ be minimal such that $d+1 \in S_{i}$. Then, since $(S_{i}, d+2)$ is a vertex of $P_{\mathbf{b}}$,

\begin{align*}
 -\beta +\sum_{j \in S_{i-1}} b_{j} &= \mathbf{w}^{\intercal}\mathbf{e}_{d+1} + \mathbf{w}^{\intercal} \mathbf{e}_{S_{i-1}}\\
 &= \mathbf{w}^{\intercal} \mathbf{e}_{S_{i}} \\
 &\leq \beta + 1/4 \\
 &\leq \mathbf{w}^{\intercal}\mathbf{e}_{S_{i} \cup \{d+2\}} \\
 &= \mathbf{w}^{\intercal} \mathbf{e}_{d+2} + \mathbf{w}^{\intercal} \mathbf{e}_{S_{i}} \\
 &=  \beta + 1/2 - \beta + \sum_{j \in S_{i-1}} b_{j} \\
 &= 1/2 + \sum_{j \in S_{i-1}} b_{j}.      
\end{align*}

In particular, $\beta + 1/4 \leq 1/2 + \sum_{j \in S_{i-1}} b_{j}$, so
\[\sum_{j \in S_{i-1}} b_{j} \geq \beta -1/4.\]

Similarly, since $(S_{i-1},d+2)$ is also a vertex of $P_{\mathbf{b}}$ and since $S_{i-1}$ does not contain $d+1$ by definition of $i$, we have:
\[\sum_{j \in S_{i-1}}b_{j} = \mathbf{w}^{\intercal}\mathbf{e}_{S_{i-1}} \leq \beta + 1/4.\]
It follows that 
\[\beta - 1/4 \leq \sum_{j \in S_{i-1}} b_{j} \leq \beta + 1/4\]
Since $b_{i} \in \mathbb{Z}$ for all $i \in [n]$ and $\beta\in \mathbb{Z}$, it follows that $\sum_{j \in S_{i-1}} b_{j} = \beta$. Therefore, in that case, \textsc{Partition with even sum} has a solution.

Suppose instead that \textsc{Partition with even sum} has a solution. Up to reordering we may without loss of generality assume then that
\[\sum_{i=1}^{k} b_{i} = \beta.\]
For each $j\in [d+1]$, define 
\[S_{j} =\begin{cases}
    \{1, \dots,j\} &\text{if } j \leq k \\ \\
    \{1,\dots, j-1\} \cup \{d+1\} &\text{if } j\ge k+1. 
\end{cases}\]
Then it suffices to show that $(S_{j}, d+2)$ is a vertex for each $j \in [d+1]$. If $j \leq k$, then 
\[\mathbf{w}^{\intercal} \mathbf{e}_{S_{j}} = \sum_{i \in S_{j}} b_{i} = \sum_{i=1}^{j} b_{i} \leq \sum_{i=1}^{k} b_{i} \leq \beta < \beta + 1/4 < \beta + 1/2 + \sum_{i\in S_{j}} b_{i} = \mathbf{w}^{\intercal} \mathbf{e}_{S_{j} \cup \{d+2\}}.\]
Hence, $(S_{j}, d+2)$ is a vertex in that case. 

If $j = k+1$, then
\[\mathbf{w}^{\intercal} \mathbf{e}_{S_{j}} = -\beta + \sum_{i=1}^{k} b_{i} = 0 < \beta + 1/4 < \beta + 1/2 + 0 = w_{d+2} + \sum_{i \in S_{j}} w_{i} = \mathbf{w}^{\intercal} \mathbf{e}_{S_{j} \cup \{d+2\}}.\]

Finally, suppose that $j \geq k+2$. Then $\sum_{i \in S_{j}} w_{i} \geq \sum_{i\in S_{k+1}}w_i=0$, and
\[\mathbf{w}^{\intercal}\mathbf{e}_{S_{j}} = \sum_{i \in S_{j}} w_{i} \leq \sum_{i \in [d+1]} w_{i} = \beta < \beta + 1/4 < \beta + 1/2 \leq \beta + 1/2 + \sum_{i \in S_{j}} w_{i} = \mathbf{w}^{\intercal} \mathbf{e}_{S_{j} \cup \{d+2\}}.\]
Hence, in all cases, $(S_{j}, d+2)$ is a vertex and so there is a path from $(\emptyset, d+2)$ to $([d+1], d+2)$ of length at most $d+1$ of the desired form. This concludes the proof of the equivalence claimed in the second item of the lemma. 
\end{proof}

This lemma yields Theorem \ref{thm:shortestpathsimple} as an immediate consequence.

\begin{proof}[Proof of Theorem \ref{thm:shortestpathsimple} and Corollary~\ref{mdcor}]
By Lemma \ref{lem:partitionreduction}, one can solve \textsc{Partition with even sum} by deciding whether the distance between two specified vertices of  $P_{\mathbf{b}}$ is at most $d+1$. Since $P_{\mathbf{b}}$ may be constructed from $\mathbf{b}$ in polynomial time, and its encoding length (in inequality description) is polynomially tied to the encoding length of the input $\mathbf{b}$ of \textsc{Partition with even sum}, and since $P_{\mathbf{b}}$ is simple by Lemma \ref{lem:knapsacksimple}, it follows that \textsc{$k$-Distance on simple polytopes} is \NP-hard when setting $k=d+1$. Since this equals $(2d+5)-(d+2)-2$ which is the number of defining inequalities of $P_\mathbf{b}$ minus the dimension of $P_\mathbf{b}$ minus two, this also proves Corollary~\ref{mdcor}.
\end{proof}

We next extend our result to the monotone setting and prove Theorem~\ref{thm:monotone}. Recall that $\mathbf{e}_{S} = \sum_{i \in S} \mathbf{e}_{i}$ for any $S \subseteq [n]$. 

\begin{lem}
\label{lem:monotone}
Let $\mathbf{b} = (b_{1}, \dots, b_{d}) \in \mathbb{Z}^{d}_{> 0}$ such that $\sum_{i=1}^{d} b_{i} = 2\beta$ is even. Let $\varepsilon:=\frac{1}{5\beta}$ and $\mathbf{c} = \mathbf{e}_{[d+1]} + \varepsilon \mathbf{e}_{d+2}$. Then 
\begin{itemize}
    \item $([d+1], d+2)$ is the unique $\mathbf{c}$-maximum.
    \item If $S \subsetneq T \subseteq [d+1] $, then $(S, d+2)$ has objective value less than $(T,d+2)$.
\end{itemize}
\end{lem}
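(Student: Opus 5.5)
The plan is a direct computation with the explicit vertex description of Lemma~\ref{lem:knapsackvertices}. Write $\mathbf{x}(S,k)$ for the vertex $(S,k)$, so $\mathbf{x}(S,k) = \mathbf{e}_S + t\,\mathbf{e}_k$ with $t = (\beta+1/4-\mathbf{w}^\intercal \mathbf{e}_S)/w_k \in [0,1]$.

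For the second item, take $S\subsetneq T\subseteq[d+1]$ with both $(S,d+2)$ and $(T,d+2)$ vertices. The objective value is
\[
\mathbf{c}^\intercal \mathbf{x}(S,d+2) = |S| + \varepsilon\, t_S, \qquad t_S=\frac{\beta+1/4-\mathbf{w}^\intercal\mathbf{e}_S}{\beta+1/2}\in(0,1),
\]
and similarly for $T$. Since $|T|\ge |S|+1$ and $\varepsilon t_S<\varepsilon<1$, while $\varepsilon t_T> 0$, we get $\mathbf{c}^\intercal\mathbf{x}(T,d+2) \ge |S|+1 > |S|+\varepsilon t_S$. This item is immediate.

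For the first item, first compute the value at $([d+1],d+2)$. Since $\mathbf{w}^\intercal\mathbf{e}_{[d+1]}=\beta$ (as in Lemma~\ref{lem:partitionreduction}), we have $t=(1/4)/(\beta+1/2)$, giving value $d+1+\varepsilon/(4\beta+2)$. Any maximizer is a vertex, so I compare this against every vertex.

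\begin{itemize}
\item A vertex of $V_1$ other than $[d+1]$ either omits some index of $[d+1]$, so its value is at most $d+\varepsilon<d+1$, or contains $d+2$. But $\mathbf{w}^\intercal\mathbf{e}_{[d+2]}=2\beta+1/2>\beta$, so $[d+2]\notin V_1$; and any other set containing $d+2$ misses an index of $[d+1]$. The set $[d+1]$ itself is not in $V_1$, since $\mathbf{w}^\intercal\mathbf{e}_{[d+1]}=\beta\le \beta$ would put it there\dots

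Correction: it is in $V_1$ with value exactly $d+1$. So $[d+1]$ is a genuine competitor, but its value $d+1$ is strictly below $d+1+\varepsilon/(4\beta+2)$.

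\item A vertex $(S,k)$ of $V_2$ with $S\cup\{k\}\not\supseteq[d+1]$ has value at most $d-1+1+\varepsilon=d+\varepsilon<d+1$.

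\item The remaining $V_2$ vertices have $S\cup\{k\}\supseteq[d+1]$. Either $k=d+2$ and $S=[d+1]$, which is our vertex, or $k\in[d+1]$.

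\item If $k\in[d+1]$ and $d+2\notin S$, then $S=[d+1]\setminus\{k\}$. The edge from $S$ to $S\cup\{k\}$ has endpoint values $\beta-w_k$ and $\beta$, both below $\beta+1/4$, so it is not cut. Hence no such vertex exists.

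\item If $k\in[d+1]$ and $d+2\in S$, then $S=[d+2]\setminus\{k\}$. The value is $d+\varepsilon+t$ with $t<1$, and I must show it is below $d+1+\varepsilon/(4\beta+2)$, i.e.\ $t<1-\varepsilon+\varepsilon/(4\beta+2)$.
\end{itemize}

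This last case is the main obstacle. For $k\le d$, we have $\mathbf{w}^\intercal\mathbf{e}_S=2\beta+1/2-b_k$, so $t=(b_k-\beta-1/4)/b_k$. Cutting requires $b_k>\beta+1/4$, and then $1-t=(\beta+1/4)/b_k\ge(\beta+1/4)/(2\beta)>1/2>\varepsilon$. For $k=d+1$, we have $w_k=-\beta$ and $\mathbf{w}^\intercal\mathbf{e}_S=3\beta+1/2$, so $t=(2\beta+1/4)/\beta>1$, which is invalid; so $S\cup\{k\}$ lies on the feasible side, and the vertex must be read as the cut on the edge with the reversed orientation.

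I would handle the $V_2$ parametrization carefully here, taking $S$ as the feasible endpoint: for negative $w_k$, the vertex is $\mathbf{e}_{S\cup\{k\}}$ minus a fraction of $\mathbf{e}_k$, i.e.\ the pair $(S\setminus\{k\},k)$ in the other orientation. In every case the coordinate sum of $[d+1]$ is at most $d+(1-\delta)$, with $\delta$ bounded below by a quantity of order $1/\beta$ that exceeds $\varepsilon=1/(5\beta)$. Verifying this bound in each subcase is where I expect the real work to lie.
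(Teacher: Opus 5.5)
Your overall plan (list vertices via Lemma~\ref{lem:knapsackvertices}, discard those whose support misses an index of $[d+1]$ because their value is at most $d+\varepsilon$, then inspect the few survivors) is sound, and your proof of the second item is the paper's. The gap is along the direction $\mathbf{e}_{d+1}$, the only coordinate with negative weight $w_{d+1}=-\beta$. You get this direction wrong in both of your last two cases.

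In the case $k\in[d+1]$, $d+2\notin S$, you claim the edge from $[d+1]\setminus\{k\}$ to $[d+1]$ is never cut. For $k=d+1$ this is false, since $\mathbf{w}^\intercal\mathbf{e}_{[d]}=2\beta>\beta+1/4>\beta=\mathbf{w}^\intercal\mathbf{e}_{[d+1]}$. So $([d],d+1)=\mathbf{e}_{[d]}+(1-\tfrac{1}{4\beta})\mathbf{e}_{d+1}$ is a vertex, and your case analysis never compares it with the optimum.

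Conversely, for $S=[d]\cup\{d+2\}$, $k=d+1$, both endpoints have $\mathbf{w}$-values $3\beta+1/2$ and $2\beta+1/2$ and are infeasible, so there is simply no vertex there. Your remark that $S\cup\{k\}$ ``lies on the feasible side'' contradicts your own observation that $[d+2]\notin V_1$. No reorientation of the parametrization is needed: in Lemma~\ref{lem:knapsackvertices} the set $S$ is always the endpoint not containing $k$, and the coefficient of $\mathbf{e}_k$ lies in $(0,1)$ for either sign of $w_k$. Your final paragraph then asserts the required bound ``in every case'' but leaves it unverified. As written, uniqueness is therefore not established.

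The repair is easy. The missed vertex has value $d+1-\tfrac{1}{4\beta}<d+1$. A cleaner route is the following.
\begin{itemize}
\item Any vertex with $x_{d+2}=0$ has $\mathbf{c}$-value $\sum_{j\le d+1}x_j\le d+1<d+1+\tfrac{\varepsilon}{4\beta+2}$, so only vertices with $x_{d+2}>0$ matter.
\item Among vertices with $x_{d+2}>0$ whose support contains $[d+1]$, the only ones other than $([d+1],d+2)$ are of the form $([d+2]\setminus\{k\},k)$ with $k\le d$. For $k=d+1$ any such point has $\mathbf{w}^\intercal\mathbf{x}\ge 2\beta+1/2$ and is infeasible.
\item The vertices $([d+2]\setminus\{k\},k)$ with $k\le d$ are dispatched by your computation $1-t=(\beta+1/4)/b_k>1/2$.
\end{itemize}

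With that fix, your argument is a valid and more targeted alternative to the paper's. The paper instead bounds every fractional coordinate $\alpha'$ with $i\neq d+2$ uniformly by $1-\tfrac{1}{4|w_i|}$, using that $4\beta+1-\sum_{j\in S}4w_j$ is odd, and then invokes $|w_i|\le\beta$. Your explicit computation has the advantage of not needing $b_k\le\beta$, which the lemma's hypotheses do not guarantee (e.g.\ $\mathbf{b}=(3,1)$).
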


\begin{proof}
By Lemma \ref{lem:partitionreduction}, $([d+1], d+2)$ is a vertex of $P_{\mathbf{b}}$. Furthermore, \[\mathbf{w}^{\intercal} \mathbf{e}_{[d+2]} = \sum_{i=1}^{d+2} \mathbf{w}_{i} = \sum_{i=1}^{n} b_{i} -\beta + (\beta +1/2) = 2\beta + 1/2 > \beta + 1/4.\]
Hence, $\mathbf{e}_{[d+2]} \notin P_{\mathbf{b}}$. Let $\mathbf{v}$ be the vector corresponding to $([d+1], d+2)$. Then by Lemma \ref{lem:knapsackvertices},
\[\mathbf{v} = \mathbf{e}_{[d+1]} + \alpha \mathbf{e}_{d+2}\]
 for some $\alpha > 0$. It follows that 
\[\mathbf{c}^{\intercal}\mathbf{v} = d+1 + \varepsilon\alpha. \]
Any other vertex is of the form $\mathbf{e}_{S} + \alpha' \mathbf{e}_{i}$, where $S\subsetneq [d+2]$, $i\notin S$ and $0 \leq \alpha' < 1$. In particular, by Lemma \ref{lem:knapsackvertices}, if $\alpha' > 0$, then
\[\alpha' = \frac{\beta + 1/4 - \sum_{j \in S} w_{j}}{w_{i}}.\]
Suppose first that $i \neq d+2$. Then $S\cap [d+1]$ is a proper subset of $[d+1]$, and $w_{i}$ is integral with $|w_{i}| \leq \beta$. Furthermore, $4w_{j} \in \mathbb{Z}$ for each $j \in S$. Note that $\alpha'=0$ or $0<\alpha'<1$. In the latter case, we have
\[\alpha' =\frac{\beta + 1/4 - \sum_{j \in S} w_{j}}{w_{i}}=\frac{\left|4\beta + 1 - \sum_{j \in S} 4w_{j}\right|}{4\left|w_{i}\right|} \leq 1 - \frac{1}{4|w_i|} < 1 - \frac{1}{5 \beta}.\]
Since $\varepsilon = \frac{1}{5\beta}$, it follows that $\alpha' + \varepsilon < 1$ (and this clearly also holds in the case $\alpha'=0$). Consequently,
\[\mathbf{c}^{\intercal}(\mathbf{e}_{S} + \alpha' \mathbf{e}_{i}) \le (|S \cap [d+1]|+\varepsilon) + \alpha' \le d + \alpha'+\varepsilon<d+1+\varepsilon\alpha= \mathbf{c}^{\intercal}\mathbf{v}.\] 
For the second case, suppose, $i = d+2$ (and hence $S\subseteq [d+1]$). We then obtain
\[\mathbf{c}^{\intercal}(\mathbf{e}_{S} + \alpha' \mathbf{e}_{d+2}) = |S| + \varepsilon\alpha' \leq  \max\{d + \varepsilon\alpha',d+1\}<d+1+\varepsilon\alpha = \mathbf{c}^{\intercal}\mathbf{v} \] where in the second step we used that $\mathbf{e}_S+\alpha'\mathbf{e}_i\neq \mathbf{v}$
meaning that $S \neq [d+1]$ or $\alpha'=0$. Thus, $\mathbf{v}$ is the unique $\mathbf{c}$-maximizer, as desired. This concludes the proof of the first item of the lemma.

For the second item, consider any $S \subsetneq T \subseteq [d+1]$. Then for any $0  < \alpha < 1$ and $0 < \beta < 1$, $\mathbf{c}^{\intercal}(\mathbf{e}_{S} + \alpha \mathbf{e}_{d+2})= |S| + \varepsilon\alpha < |S|+1 \leq |T| < \mathbf{c}^{\intercal}(\mathbf{e}_{T} + \beta \mathbf{e}_{d+2})$. It follows that $(S,d+2)$ has lower objective value than $(T,d+2)$, as desired.
\end{proof}

This lemma allows us to immediately extend our result to the monotone setting.

\begin{proof}[Proof of Theorem \ref{thm:monotone}]
Note that by Lemma \ref{lem:partitionreduction} one can solve \textsc{Partition with even sum} by checking whether a path from $(\emptyset, d+2)$ to $([d+1],d+2)$ of length at most $d+1$ exists. By Lemma \ref{lem:monotone}, $([d+1],d+2)$ is the optimum of the objective $\mathbf{c}$ from the statement of Lemma \ref{lem:monotone}. From the proof of Lemma \ref{lem:partitionreduction}, a path of length at most $d+1$ exists if and only if a path exists of the form \[(S_{0}, d+2), (S_{1},d+2),\dots,(S_{d+1}, d+2),\]
where $S_{i} \subsetneq S_{i+1}$ for each $i \in [0,n]$. By Lemma \ref{lem:monotone}, that path is increasing with respect to $\mathbf{c}$. Hence, a $\mathbf{c}$-increasing path of length at most $d+1$ from $(\emptyset,d+2)$ to $([d+1],d+2)$ exists if and only if there is a path of length at most $d+1$ from $(\emptyset,[d+2])$ to $([d+1],d+2)$. This is true if and only if there is a solution to \textsc{Partition with even sum}. Hence, the same reduction works, showing that monotone distance on simple polytopes is \NP-hard. 
\end{proof}

\section{Diameters}\label{sec:truncate}

Throughout this section, we only consider simple polytopes of dimension $d\ge 3$. Furthermore, we will always assume that we only work with irredundant inequality descriptions of our polytopes. In particular, we assume that every vertex of our polytopes satisfies exactly $d$ of the defining inequalities with equality. Furthermore, we will always assume that the entries of the matrix and the right-hand side defining our polytope have rational entries.  This is crucial for some of our statements and lemmas, even though it will not always be explicitly mentioned. We will also throughout use the notation $d_P(\mathbf{u},\mathbf{v})$ to denote the (combinatorial) distance between two vertices $\mathbf{u}, \mathbf{v}$ in the graph of a polytope $P$. 

Suppose we are given a $d$-dimensional simple polytope $P$ described by $m$ inequalities and a vertex $\mathbf{v}$ of $P$. We can then compute the $d$ neighbors $\mathbf{v}_1,\ldots,\mathbf{v}_d$ of $\mathbf{v}$ on $P$ and ``cut $\mathbf{v}$ off'' from each of these neighbors by adding a single new inequality. Namely, we may compute the mid-points $\mathbf{m}_i:=\frac{\mathbf{v}+\mathbf{v}_i}{2}, i=1,\ldots,d$ of the incident edges of $\mathbf{v}$ and then compute the unique hyperplane passing through the points $\mathbf{m}_1,\ldots,\mathbf{m}_d$. It is easy to see that this hyperplane separates $\mathbf{v}$ from all other vertices of the polytope. Finally, we add a new inequality to $P$ describing the halfspace of this hyperplane which does not contain $\mathbf{v}$. This operation is called \textbf{truncation} and yields a new simple polytope $T(P,\mathbf{v})$. The vertices of $T(P,\mathbf{v})$ are exactly those of $P$ except $\mathbf{v}$ plus the $d$ additional vertices $\mathbf{m}_1,\ldots,\mathbf{m}_d$. 

Since we will later need it for our reductions, let us record the following useful statement about computing and encoding repeated truncations of polytopes.
\begin{lem}\label{lem:complexity}
Suppose $P$ is a simple $d$-dimensional polytope with rational irredundant inequality description and with bit-encoding length $L$, and let $r\in \mathbb{N}$. Suppose we are given as input $P$ as well as a sequence of $r$ vertices which are  revealed to us during the process one at a time, and each time a new vertex is revealed to us we have to perform a truncation at this vertex. Then an inequality description of the final polytope $Q$ (obtained after performing the sequence of $r$ truncations) with encoding length $\mathrm{poly}(L,r)$ can be computed in time $\mathrm{poly}(L,r)$.
\end{lem}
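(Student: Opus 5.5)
The main risk is that bit sizes compound over the $r$ rounds: each new inequality is computed from vertices, and later vertices are computed from those inequalities. To avoid this, I will bound the encoding length of every vertex that ever appears directly in terms of $P$, independently of the intermediate inequality descriptions. Each new inequality will then be computed from vertex coordinates alone, so its size is controlled by a fixed bound rather than by the sizes of earlier truncation inequalities.

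\emph{Step 1 (vertex coordinates).} Let $D$ be a common denominator of all coordinates of all vertices of $P$ and let $M$ bound their absolute values. By Cramer's rule applied to $d$ tight rows of the rational system, both $\log D$ and $\log M$ are at most $\mathrm{poly}(L)$. I will show by induction on the number $k$ of truncations performed that every vertex of the current polytope $P_k$ satisfies two properties:
\begin{itemize}
\item[(i)] it is a convex combination of vertices of $P$;
\item[(ii)] it lies in $\frac{1}{2^kD}\mathbb{Z}^d$.
\end{itemize}
The induction step uses the description of truncation recorded above: the vertices of $T(P_k,\mathbf{v})$ are the old vertices other than $\mathbf{v}$, together with the midpoints $(\mathbf{v}+\mathbf{v}_i)/2$ of two vertices of $P_k$. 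Midpoints preserve (i), and they increase the denominator in (ii) by at most a factor $2$. Property (i) gives $|x_j|\le M$ for every coordinate. Hence every vertex of $P_k$, for every $k\le r$, has encoding length $O(d(\log D+\log M+r))=\mathrm{poly}(L,r)$.

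\emph{Step 2 (inequalities).} The $k$-th truncation hyperplane passes through $d$ affinely independent points whose encoding length is bounded as in Step 1. I will compute it by solving the linear system $\mathbf{a}^\intercal\mathbf{m}_i=1$, $i=1,\ldots,d$, or a homogenized variant if the hyperplane passes through the origin. By Cramer's rule the solution has encoding length $\mathrm{poly}(L,r)$, and I will clear denominators to obtain an integral inequality. I then check the sign at $\mathbf{v}$ to orient the inequality. After $r$ steps there are $m+r$ inequalities, so the total encoding length is $\mathrm{poly}(L,r)$.

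\emph{Step 3 (irredundancy and running time).} The description stays irredundant for two reasons. The new inequality is facet-defining, since its facet is the simplex $\mathrm{conv}\{\mathbf{m}_1,\dots,\mathbf{m}_d\}$. No old facet disappears, because every old facet of a simple polytope of dimension $d\ge3$ has vertices other than $\mathbf{v}$, and those vertices survive. Simplicity is preserved because each $\mathbf{m}_i$ is tight at exactly $d$ inequalities: the $d-1$ facets that contain the edge $[\mathbf{v},\mathbf{v}_i]$ plus the new one.

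When a vertex $\mathbf{v}$ is revealed, I find its tight set by evaluating all current inequalities. I find its $d$ neighbors by standard pivoting: drop one tight row, move along the resulting edge direction, and apply a ratio test. Each pivot takes polynomial time in the current description size. Altogether each truncation costs $\mathrm{poly}(L,r)$, and $r$ truncations cost $\mathrm{poly}(L,r)$ in total.
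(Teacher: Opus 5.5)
Your overall strategy matches the paper's. You bound every vertex that ever appears directly in terms of $P$, using that each new vertex is a dyadic convex combination of original vertices. Each truncation inequality is then obtained by Cramer's rule from the midpoints, so its size does not depend on earlier truncation inequalities.

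However, Step~1 contains a false claim. You take $D$ to be a common denominator of the coordinates of \emph{all} vertices of $P$ and assert $\log D\le \mathrm{poly}(L)$ ``by Cramer's rule''. Cramer's rule only bounds the denominator of each \emph{single} vertex, by $|\det A_B|\le 2^{\mathrm{poly}(L)}$ after clearing denominators row-wise. A simple polytope can have exponentially many vertices; for instance, a generically perturbed cube with $2d$ facets has $2^d$. These vertices come from different bases $B$ with different determinants. The least common multiple of exponentially many $\mathrm{poly}(L)$-bit integers can have exponentially many bits. So as written, your bound $O(d(\log D+\log M+r))$ need not be $\mathrm{poly}(L,r)$. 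The magnitude bound $M$ is fine, since it is a maximum rather than an lcm.

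The missing observation is the one the paper makes explicitly. The only original vertices that ever enter a midpoint computation are those that at some point are the truncated vertex or one of its $d$ neighbours, and there are at most $(d+1)r$ of these. Let $D$ be the product of the denominators of just these vertices, so that $\log D\le (d+1)r\cdot \mathrm{poly}(L)$. Then state (ii) only for newly created vertices; by the same induction they lie in $\frac{1}{2^kD}\mathbb{Z}^d$. The surviving original vertices are bounded individually by Cramer's rule. Your algorithm never uses $D$, so this repairs the analysis without changing anything else, and Steps~2 and~3 then go through as you wrote them.
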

\begin{proof}
Recall that we assume that the coefficients and constants of the inequalities defining $P$ are rational numbers, and hence the same is true for all vertices of $P$. Since taking midpoints keeps the coordinates of vectors rational, all new vertices constructed during the process are rational. In particular, the vertices of $Q$ are rational.

To start, we bound the encoding-lengths of any vertices appearing in the process polynomially in $L$ and in $r$. Consider first the vertices of $P$. Each such vertex is the solution of a linear equation system over a $d\times d$ invertible submatrix of the constraint matrix. Hence, by a standard application of Cramer's rule and Hadamard's inequality, the bit-encoding length of each vertex of $P$ is upper-bounded by $O(d^2L)$ each. Next observe that by definition of truncation, each new vertex obtained in one of the $r$ truncations to obtain $Q$ can be written as a convex combination of vertices of $P$ where all coefficients are in $\left\{0,\frac{1}{2^r},\ldots,\frac{2^r-1}{2^r},1\right\}$. Moreover, note that in this convex combination we only need to consider vertices of $P$ that at some point are either picked as the truncated vertex or a neighbor of it. Since in constructing $P$, we certainly consider at most $(d+1)\cdot r$ such vertices of $P$, it follows that each new vertex constructed at some point of the process is a convex combination of at most $(d+1)r$ vertices of $P$ with coefficients in $\{0,\frac{1}{2^r},\ldots,\frac{2^r-1}{2^r},1\}$. Recall that each vertex of $P$ has encoding length $O(d^2L)$ and in particular each entry of a vector in $P$ has numerator and denominator at most $2^{O(d^2L)}$. Hence, every entry of any vertex computed in the construction process for $Q$ can be written in the form
$$\sum_{i=1}^{(d+1)r}\frac{\alpha_i}{2^r}\cdot \frac{a_i}{b_i},$$ where $\alpha_i\in \{0,1\ldots,2^r\}$ and $|a_i|,|b_i|\le 2^{O(d^2L)}$ for every $i$. This equals $\frac{p}{q}$, where 
$$|q|=2^r\prod_{i=1}^{(d+1)r}{|b_i|}\le 2^{r+(d+1)r\cdot O(d^2L)}=2^{O(d^3Lr)},$$ and 
$$|p|= \left|\sum_{i=1}^{(d+1)r}\alpha_ia_i\prod_{j\neq i}b_j\right|\le \sum_{i=1}^{(d+1)r}2^r\cdot 2^{(d+1)r\cdot O(d^2L)}=2^{O(d^3Lr).}$$ Hence, every vertex computed in the construction process for $Q$ has encoding length at most $d\cdot O(d^3Lr)=O(d^4Lr)=O(L^3r)$, where we used that $L\ge d^2$ in the last step. 

It remains to argue that we can compute an inequality description of $Q$ with encoding length $\mathrm{poly}(L,r)$ in time $\mathrm{poly}(L,r)$. Let $\mathbf{s}_1,\ldots,\mathbf{s}_r$ be the vertices revealed to us one by one during the process, and suppose that for some $1\le j \le r$ we have already computed an inequality description of the polytope $P^{j-1}$ obtained from $P$ after performing truncations at vertices $\mathbf{s}_1,\ldots,\mathbf{s}_{j-1}$ in this order, and now a new vertex $\mathbf{s}_j$ of $P^{j-1}$ is revealed to us, at which we are supposed to perform the next truncation. To obtain the inequality description of the next polytope $P^{j}=T(P^{j-1},\mathbf{s}_j)$, we proceed as follows:

\begin{itemize}
    \item We first compute all the $d$ neighbors $\mathbf{u}_1^j,\ldots,\mathbf{u}_d^j$ of $\mathbf{s}_j$ on $P^{j-1}$, in polynomial time in the encoding length of $P^{j-1}$. Concretely, we can do this by trying out all possible base exchanges at $\mathbf{s}_j$ and thus solving up to $d(m(P^{j-1})-d)\le d(m(P)+r-d)\le d(L+r-d)$ linear equation systems (here $m(P^{j-1}), m(P)$ denote the number of inequalities describing $P^j$ and $P$, respectively) of size $d\times d$ whose coefficients form a submatrix of the constraint matrix of $P^{j-1}$. Hence, this can be executed in polynomial time in the encoding length of $P^{j-1}$.
    \item We then compute the $d$ midpoints $\frac{\mathbf{u}_1^j+\mathbf{s}_j}{2},\ldots,\frac{\mathbf{u}_d^j+\mathbf{s}_d}{2}$. This can be done in polynomial time in the encoding lengths of $\mathbf{u}_1^j,\ldots,\mathbf{u}_d^j$ and $\mathbf{s}_j$, and hence, by what we argued about the encoding lengths of these vectors above, in time $\mathrm{poly}(L,r)$.
    \item Finally, we compute the coefficients of the one new inequality to be added to $P^{j-1}$ to obtain $P^{j}$, i.e., a vector $\mathbf{w}$ and some $\alpha\in \mathbb{R}$ such that the hyperplane $\mathbf{w}^\intercal \mathbf{x}=\alpha$ passes through all the midpoints $\frac{\mathbf{u}_1^j+\mathbf{s}_j}{2},\ldots,\frac{\mathbf{u}_d^j+\mathbf{s}_d}{2}$. To do so, it suffices to find the (up to scaling unique) non-trivial solution to the $d\times (d+1)$-sized homogeneous linear equation system whose row vectors are obtained from $\frac{\mathbf{u}_1^j+\mathbf{s}_j}{2},\ldots,\frac{\mathbf{u}_d^j+\mathbf{s}_d}{2}$ by appending $1$-s at the end. Since by what we showed above also the encoding lengths of $\frac{\mathbf{u}_1^j+\mathbf{s}_j}{2},\ldots,\frac{\mathbf{u}_d^j+\mathbf{s}_d}{2}$ are polynomial in $L$ and in $r$, it follows again by using Cramer's rule and Hadamard's inequality, that we can compute a desired non-trivial solution $(\mathbf{w},\alpha)$ to this linear system whose encoding length is bounded by $\mathrm{poly}(L,r)$, and of course, it can be also computed in time $\mathrm{poly}(L,r)$ by solving the linear system.
\end{itemize}

By the last point, we find that each of the $r$ truncation steps increases the encoding length of the polytope by at most $\mathrm{poly}(L,r)$, and hence each of the $r$ polytopes $P^{1}, P^{2},\dots,P^{r} = Q$ built in the process has encoding length at most $\mathrm{poly}(L,r)\cdot r=\mathrm{poly}(L,r)$. With this knowledge, it follows that each of the three steps above can be executed in time $\mathrm{poly}(L,r)$ for each of the $r$ truncations, and hence computing the inequality description of the final polytope $Q$ also can be done in time $\mathrm{poly}(L,r)\cdot r=\mathrm{poly}(L,r)$. This establishes the desired statements and concludes the proof of the lemma.
\end{proof}

As an organizational tool to keep track of the impact of truncating repeatedly on the combinatorial structure of the polytope, we use a generating function. Namely, let a $d$-dimensional simple polytope $P$ with $m$ inequalities labeled by numbers $1, 2,\dots,m$ be given. Let $\mathcal{B}\subseteq \binom{[m]}{d}$ denote the set of feasible bases of $P$. Consider the polynomial ring $\mathbb{Z}[x_{1}, x_{2}, \dots, x_{m}]$, and for a subset $S \subseteq [m]$ let us denote $\mathbf{x}^{S} := \prod_{i \in S} x_{i}$. We now define the \textbf{generating function of feasible bases} of $P$ by 
\[f_{P}(\mathbf{x}) = \sum_{B \in \mathcal{B}} \mathbf{x}^{B}.\]
Now consider a vertex $\mathbf{v}^\ast$ of $P$. Now when we truncate $P$ at $\mathbf{v}^\ast$ to obtain $T(P,\mathbf{v}^\ast)$, we will associate a new variable $x_{m+1}$ in the polynomial ring with the added inequality. Our next lemma precisely describes how truncation changes the generating function.

\begin{lem}
\label{lem:firsttruncation}
Let $P$ be a $d$-dimensional simple polytope with $m$ facets labeled $1,2,\dots,m$. Let $\mathcal{B} \subseteq \binom{[m]}{d}$ denote the set of feasible bases of $P$. Let $\mathbf{v}^\ast$ be a vertex of $P$ and $B^{\ast}$ the corresponding feasible basis. Then we have
\[f_{T(P,\mathbf{v}^\ast)}(\mathbf{x}) = \sum_{B \in \mathcal{B} \setminus \{B^{\ast}\}} \mathbf{x}^{B} + \sum_{i \in B^{\ast}} \mathbf{x}^{B^\ast \setminus \{i\}} x_{m+1} = f_{P}(\mathbf{x}) - \mathbf{x}^{B^{\ast}} + \sum_{i \in B^{\ast}} \mathbf{x}^{B^{\ast} \setminus \{i\}} x_{m+1}. \]
\end{lem}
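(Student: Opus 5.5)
The plan is to identify the feasible bases of $T(P,\mathbf{v}^\ast)$ directly by classifying its vertices, using the description of the vertex set of $T(P,\mathbf{v}^\ast)$ given above: it consists of the vertices of $P$ other than $\mathbf{v}^\ast$, together with the $d$ midpoints $\mathbf{m}_i=\frac{\mathbf{v}^\ast+\mathbf{v}_i}{2}$. Since $T(P,\mathbf{v}^\ast)$ is simple, each vertex corresponds to exactly one feasible basis, namely the set of $d$ inequalities tight at it. The identity then follows by summing $\mathbf{x}^B$ over these bases, and the second equality is just a rewriting, since $f_P(\mathbf{x})-\mathbf{x}^{B^\ast}=\sum_{B\in\mathcal{B}\setminus\{B^\ast\}}\mathbf{x}^B$.

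\emph{Old vertices.} Let $\mathbf{u}\neq\mathbf{v}^\ast$ be a vertex of $P$ with basis $B$. The truncating hyperplane strictly separates $\mathbf{v}^\ast$ from all other vertices of $P$, so $\mathbf{u}$ satisfies the new inequality $m+1$ strictly. Hence the tight inequalities at $\mathbf{u}$ in $T(P,\mathbf{v}^\ast)$ are exactly those of $B$, and $\mathbf{u}$ contributes $\mathbf{x}^B$.

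\emph{New vertices.} Let $\mathbf{v}_1,\ldots,\mathbf{v}_d$ be the neighbors of $\mathbf{v}^\ast$. By simplicity, the edge $[\mathbf{v}^\ast,\mathbf{v}_i]$ is cut out by exactly $d-1$ inequalities of $B^\ast$, namely $B^\ast\setminus\{j\}$ for a unique $j\in B^\ast$. Pivoting out the $d$ different elements of $B^\ast$ gives the $d$ different edges, so $i\mapsto j$ is a bijection between the neighbors and $B^\ast$. The midpoint $\mathbf{m}_i$ lies in the relative interior of this edge, so:
\begin{itemize}
    \item the inequalities in $B^\ast\setminus\{j\}$ are tight at $\mathbf{m}_i$;
    \item the inequality $j$ is not tight at $\mathbf{m}_i$, since $\mathbf{m}_i\neq\mathbf{v}^\ast$ on the edge;
    \item no inequality outside $B^\ast$ is tight at $\mathbf{m}_i$, because each such inequality is satisfied strictly at $\mathbf{v}^\ast$, so the set where it is tight meets the edge in at most an endpoint;
    \item the new inequality $m+1$ is tight at $\mathbf{m}_i$ by construction of the hyperplane.
\end{itemize}
So the basis at $\mathbf{m}_i$ is $(B^\ast\setminus\{j\})\cup\{m+1\}$, which contributes $\mathbf{x}^{B^\ast\setminus\{j\}}x_{m+1}$. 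Since $j$ ranges over all of $B^\ast$, this gives the second sum, and combining the two cases yields the formula.

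The only genuinely delicate point is the third item in the new-vertex case: checking that no inequality of $P$ outside $B^\ast$ becomes tight at a midpoint, so that the basis at $\mathbf{m}_i$ is exactly the one claimed. This is handled by the argument that such an inequality is strict at $\mathbf{v}^\ast$ and is linear along the edge, so it can be tight on the edge only at the endpoint $\mathbf{v}_i$, never at the midpoint.
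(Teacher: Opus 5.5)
Your proposal is correct and follows essentially the same route as the paper: every vertex of $P$ other than $\mathbf{v}^\ast$ keeps its basis, and the $d$ new vertices on the edges at $\mathbf{v}^\ast$ have bases $(B^\ast\setminus\{j\})\cup\{m+1\}$ for $j\in B^\ast$. You also spell out the tight-constraint checks that the paper leaves implicit, namely that the new inequality is strict at the old vertices and that no inequality outside $B^\ast$ is tight at the midpoints.
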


\begin{proof}
By definition, $T(P,\mathbf{v}^\ast)$ has one additional new inequality and so has $m+1$ inequalities. The truncation only removes precisely one vertex, namely $\mathbf{v}^\ast$. Thus, each feasible basis of $P$ other than $B^\ast$ remains a feasible basis of $Q$. Furthermore, a new feasible basis is also added for each new vertex. There are exactly $d$ new vertices in $T(P,\mathbf{v}^\ast)$ compared to $P$, namely those corresponding to the intersection of the $d$ edges of $P$ incident with $\mathbf{v}$ with the hyperplane defining the new inequality we added. Therefore, the feasible bases corresponding to new vertices are precisely of the form $(B^{\ast} \setminus \{i\}) \cup \{m+1\}$, where $i$ ranges through the elements of $B^\ast$. Putting these facts together yields the desired formula for the generating function of $T(P,\mathbf{v}^\ast)$. 
\end{proof}

Next, we would like to understand the effect of \emph{repeated} truncation on the generating function. As a first step, it will thus be convenient for us to reformulate the expression in Lemma~\ref{lem:firsttruncation} in the case that the polytope $P$ to which we apply the truncation already comes with two classes of inequalities, namely $m$ ``old'' inequalities associated with variables $x_1,\ldots,x_m$ in the generating function, and $k-1$ ``new'' inequalities associated with $k-1$ new variables $y_{1}, y_{2}, \dots, y_{k-1}$, which we think of arising from $k-1$ previous truncations. Here, we will use the convention that variable $y_{i}$ corresponds to the inequality added in the $i$th previous truncation. In particular, in the following $y_1,\ldots,y_k$ will take the role of the variables $x_{m+1},\ldots,x_{m+k}$ in the previous formulation of Lemma~\ref{lem:firsttruncation}. Our next corollary is simply a restatement of Lemma \ref{lem:firsttruncation} in this new set-up. To simplify notation we define, for each $S\subseteq [m]$ and $T\subseteq [k]$, the following shorthand: $$\mathfrak{C}_{k}(\mathbf{x}^{S}\mathbf{y}^{T}) :=\sum_{t \in T} \mathbf{x}^{S} \mathbf{y}^{(T \setminus \{t\} )\cup \{k\}}+\sum_{s \in S} \mathbf{x}^{S \setminus \{s\}} \mathbf{y}^{T \cup \{k\}}.$$

\begin{cor}\label{cor:truncateys}
Let $P$ be a $d$-dimensional simple polytope with facets in two classes of size $m$ and $k-1$, which are labeled $1,2,\dots,m$ and $1,2,\dots,(k-1)$, respectively. Let $\mathcal{B} \subseteq 2^{[m]} \times 2^{[k-1]}$ denote the set of feasible bases of $P$. Let $B^{\ast} = (S,T)$ be a feasible basis, with corresponding vertex $\mathbf{v}^\ast$. Then we have
\[f_{T(P,\mathbf{v}^\ast)}(\mathbf{x},\mathbf{y}) =  %\sum_{B \in \mathcal{B}\setminus\{B^\ast\}} (\mathbf{x},\mathbf{y})^{B} + \mathfrak{C}_k(\mathbf{x}^{S}\mathbf{y}^{T})= 
f_P(\mathbf{x},y_1,\ldots,y_{k-1})-\mathbf{x}^S\mathbf{y}^T + \mathfrak{C}_k(\mathbf{x}^{S}\mathbf{y}^{T}).\]
\end{cor}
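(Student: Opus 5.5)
This corollary is a relabeling of Lemma~\ref{lem:firsttruncation}, so the plan is to translate between the two labelings and check that the new terms match $\mathfrak{C}_k$. First, I identify the $m+k-1$ inequalities of $P$ with a single index set. The old inequalities keep the labels $1,\dots,m$ and variables $x_1,\dots,x_m$. The inequality labeled $j\in[k-1]$ in the second class gets the label $m+j$, and its variable $y_j$ plays the role of $x_{m+j}$. Under this identification a feasible basis $B=(S',T')\in 2^{[m]}\times 2^{[k-1]}$ is the subset $S'\cup(m+T')$ of $[m+k-1]$, and its monomial $\mathbf{x}^{S'}\mathbf{y}^{T'}$ is exactly $\mathbf{x}^{B}$ in the single-class notation. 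Hence $f_P(\mathbf{x},y_1,\ldots,y_{k-1})$ is the generating function of Lemma~\ref{lem:firsttruncation} after renaming variables.

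Next, I apply Lemma~\ref{lem:firsttruncation} with $m+k-1$ in place of $m$. The inequality added by the truncation receives the label $m+k$, i.e.\ the variable $y_k$, which matches the convention that $y_k$ corresponds to the $k$th truncation. The lemma gives
\[f_{T(P,\mathbf{v}^\ast)} = f_P - \mathbf{x}^{S}\mathbf{y}^{T} + \sum_{i\in B^\ast}(\mathbf{x},\mathbf{y})^{B^\ast\setminus\{i\}}\,y_k .\]

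It remains to split the last sum according to whether the removed element $i$ of $B^\ast=(S,T)$ lies in the old class or the new class. If $i=s\in S$, the term is $\mathbf{x}^{S\setminus\{s\}}\mathbf{y}^{T}y_k=\mathbf{x}^{S\setminus\{s\}}\mathbf{y}^{T\cup\{k\}}$. If $i=t\in T$, the term is $\mathbf{x}^{S}\mathbf{y}^{T\setminus\{t\}}y_k=\mathbf{x}^{S}\mathbf{y}^{(T\setminus\{t\})\cup\{k\}}$. These two families of terms are exactly the two sums defining $\mathfrak{C}_k(\mathbf{x}^S\mathbf{y}^T)$, which proves the formula.

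There is no real obstacle. The only point to check is that $k\notin T$, which holds because $T\subseteq[k-1]$; this ensures that multiplying by $y_k$ is the same as adjoining $k$ to the index set.
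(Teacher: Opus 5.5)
Your proposal is correct and takes the same approach as the paper, which presents this corollary simply as a restatement of Lemma~\ref{lem:firsttruncation} under the relabeling $y_j \leftrightarrow x_{m+j}$. Splitting $\sum_{i\in B^\ast}$ into the terms with $s\in S$ and with $t\in T$, and noting that $k\notin T$, is exactly the bookkeeping needed to identify the new monomials with $\mathfrak{C}_k(\mathbf{x}^S\mathbf{y}^T)$.
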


Next, we will use this observation to describe the generating function of a new polytope constructed by a specific sequence of $d$ iterated truncations.

\begin{lem}\label{lem:function}
Let $P$ be a $d$-dimensional simple polytope with facets labeled $1, 2,\dots, m$ such that $[d]$ is a feasible basis. Then, given $P$ (in inequality description) and this feasible basis as input, we can, in polynomial time in the encoding length of $P$, construct a new simple polytope $S$ with $m + d$ facets, with corresponding variables $x_1,\ldots,x_m,y_1,\ldots,y_d$ and with feasible basis generating function 
\[f_{S}(\mathbf{x}, \mathbf{y}) = f_{P}(\mathbf{x}) - \mathbf{x}^{[d]} + \mathbf{y}^{[d]} + \sum_{k=0}^{d-1} \left(\sum_{i=1}^{k} \mathbf{x}^{[d] \setminus [k]} \mathbf{y}^{[k+1] \setminus \{i\}} + \sum_{j=k+2}^{d} \mathbf{x}^{[d] \setminus ([k] \cup \{j\})} \mathbf{y}^{[k+1]} \right).\] 
\end{lem}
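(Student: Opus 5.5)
We will construct $S$ by performing $d$ successive truncations, always at the vertex cut out by the most recently added inequalities. Let $P^0 := P$ and $\mathbf{v}_0$ be the vertex with basis $[d]$. For $k = 1,\ldots,d$, let $P^{k} := T(P^{k-1},\mathbf{v}_{k-1})$, where $\mathbf{v}_{k-1}$ is the vertex of $P^{k-1}$ with feasible basis
\[B_{k-1} := \bigl([d]\setminus[k-1]\bigr)\cup\{y_1,\ldots,y_{k-1}\},\]
written multiplicatively as $\mathbf{x}^{[d]\setminus[k-1]}\mathbf{y}^{[k-1]}$. We then set $S := P^d$.

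Computationally, each step needs the vertex with a prescribed basis, which we obtain by solving one $d\times d$ system. Lemma~\ref{lem:complexity} with $r=d$ then gives the inequality description of $S$ in time polynomial in the encoding length of $P$. Since truncation preserves simplicity and adds one facet, $S$ is simple with $m+d$ facets. The one thing to check is that each $B_{k-1}$ really is a feasible basis of $P^{k-1}$, and this will fall out of the generating function computation.

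For the generating function we induct on $k$, applying Corollary~\ref{cor:truncateys} at each step. The claimed intermediate formula is
\[f_{P^k} = f_P - \mathbf{x}^{[d]} + \mathbf{x}^{[d]\setminus[k]}\mathbf{y}^{[k]} + \sum_{\ell=0}^{k-1}\Bigl(\sum_{i=1}^{\ell}\mathbf{x}^{[d]\setminus[\ell]}\mathbf{y}^{[\ell+1]\setminus\{i\}} + \sum_{j=\ell+2}^{d}\mathbf{x}^{[d]\setminus([\ell]\cup\{j\})}\mathbf{y}^{[\ell+1]}\Bigr).\]
In particular, the monomial of $B_k$ appears in $f_{P^k}$, so $B_k$ is a feasible basis of $P^k$ and the next truncation is well defined. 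The base case $k=0$ is trivial, since the formula reads $f_P - \mathbf{x}^{[d]} + \mathbf{x}^{[d]} = f_P$.

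For the inductive step, we apply $\mathfrak{C}_{k+1}$ to $\mathbf{x}^{[d]\setminus[k]}\mathbf{y}^{[k]}$. This removes that monomial and adds two families of terms:
\begin{itemize}
\item[(i)] For each $t\in[k]$, the term $\mathbf{x}^{[d]\setminus[k]}\mathbf{y}^{([k]\setminus\{t\})\cup\{k+1\}} = \mathbf{x}^{[d]\setminus[k]}\mathbf{y}^{[k+1]\setminus\{t\}}$. These are exactly the $i$-sum at level $\ell=k$.
\item[(ii)] For each $s\in[d]\setminus[k]$, the term $\mathbf{x}^{[d]\setminus([k]\cup\{s\})}\mathbf{y}^{[k+1]}$. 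The choice $s=k+1$ gives $\mathbf{x}^{[d]\setminus[k+1]}\mathbf{y}^{[k+1]}$, the new distinguished monomial. The choices $s\ge k+2$ give the $j$-sum at level $\ell=k$.
\end{itemize}
This closes the induction.

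At $k=d$ the distinguished monomial is $\mathbf{x}^{\emptyset}\mathbf{y}^{[d]} = \mathbf{y}^{[d]}$, which yields exactly the stated formula. The main care point is the bookkeeping of indices in Corollary~\ref{cor:truncateys}, namely that the $(k+1)$st truncation introduces variable $y_{k+1}$. Beyond that, we only need that each truncation is at a genuine vertex, which the induction provides. The subtlety is essentially notational.
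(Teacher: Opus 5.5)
Your proof is correct and follows the paper's argument. You use the same $d$ truncations at the vertices $\mathbf{x}^{[d]\setminus[k]}\mathbf{y}^{[k]}$ and the same splitting of $\mathfrak{C}_{k+1}$, with the term $s=k+1$ isolated as the next distinguished monomial; your induction on the intermediate formula is just a repackaging of the paper's tally of added and subtracted monomials, and your citation of Lemma~\ref{lem:complexity} for the running time is the appropriate one (the paper's proof cites Lemma~\ref{lem:firsttruncation} at that point, apparently a slip).
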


\begin{proof}
By definition, we have for all $0 \leq k < d$: 
\begin{align*}
\mathfrak{C}_{k+1}(\mathbf{x}^{[d] \setminus [k]} \mathbf{y}^{[k]}) &=\sum_{i=1}^{k} \mathbf{x}^{[d] \setminus [k]}\mathbf{y}^{[k+1] \setminus \{i\}} + \sum_{j=k+1}^{d}\mathbf{x}^{[d] \setminus ([k] \cup \{j\})} \mathbf{y}^{[k+1]} \\
&= \mathbf{x}^{[d]\setminus [k+1]}\mathbf{y}^{[k+1]} + \sum_{i=1}^{k} \mathbf{x}^{[d] \setminus [k]}\mathbf{y}^{[k+1]\setminus \{i\}} + \sum_{j=k+2}^{d }\mathbf{x}^{[d] \setminus ([k] \cup \{j\})} \mathbf{y}^{[k+1]}.
\end{align*}
To construct the polytope $S$, we will construct a sequence $T^0,\ldots,T^{d}$ of polytopes, where we initialize $T^0:=P$, and for $k=0,\ldots,d-1$ we define $T^{k+1}:=T(T^{k},\mathbf{v}^{k})$, where $\mathbf{v}^{k}$ is defined as the vertex of $T^{k}$ corresponding to the the monomial $\mathbf{x}^{[d]\setminus[k]}\mathbf{y}^{[k]}$. Note that this is always a well-defined operation, since by the above calculation and by Corollary~\ref{cor:truncateys}, we can see that if the generating function of $T^{k}$ contains the monomial  $\mathbf{x}^{[d]\setminus[k]}\mathbf{y}^{[k]}$ then the generating function of the resulting polytope $T^{k+1}$ after truncation will contain the monomial $\mathbf{x}^{[d] \setminus [k+1]} \mathbf{y}^{[k+1]}$. Finally, we set $S:=T^{d}$. Then, by Corollary~\ref{cor:truncateys} and our above computation, the overall sum of the monomials added to the generating function across the whole procedure amounts to
\[ \sum_{k=0}^{d-1} \left(\mathbf{x}^{[d]\setminus [k+1]}\mathbf{y}^{[k+1]} + \sum_{i=1}^{k} \mathbf{x}^{[d] \setminus [k]}\mathbf{y}^{[k+1]\setminus \{i\}} + \sum_{j=k+2}^{d }\mathbf{x}^{[d] \setminus ([k] \cup \{j\})} \mathbf{y}^{[k+1]} \right).\]
Similarly, the overall sum of the monomials subtracted from the generating function across the whole procedure (cf.~Corollary~\ref{cor:truncateys}) equals $\sum_{k=0}^{d-1} \mathbf{x}^{[d] \setminus [k]}\mathbf{y}^{[k]}$. Subtracting that off yields 
\[f_{S}(\mathbf{x}, \mathbf{y}) = f_{P}(\mathbf{x}) - \mathbf{x}^{[d]} + \mathbf{y}^{[d]} + \sum_{k=0}^{d-1} \left(\sum_{i=1}^{k} \mathbf{x}^{[d] \setminus [k]} \mathbf{y}^{[k+1] \setminus \{i\}} + \sum_{j=k+2}^{d} \mathbf{x}^{[d] \setminus ([k] \cup \{j\})} \mathbf{y}^{[k+1]} \right),\] as desired. Finally, note that since $S$ arises from $P$ by a sequence of $d$ truncations, by Lemma~\ref{lem:firsttruncation} we can compute an inequality description of the final polytope $S$ with encoding length polynomial in the encoding length of $P$, in polynomial time in the encoding length of $P$.
\end{proof}

\begin{figure}
    \centering
    \[\begin{tikzpicture}
        \draw[black, thick] (-1,-1) -- (0,.5) -- (1,-1) -- (3,-2) -- (0,2) -- (-3,-2) -- (-1,-1) -- (1,-1) -- (-3,-2) -- (3,-2);
        \draw[black,thick] (0,.5) -- (0,2);
        \draw[black,thick] (-1,-1) -- (0,2);
        \draw[black, thick] (1,-1) -- (0,2);        
    \end{tikzpicture}
    \hspace{.5cm}
    \begin{tikzpicture}
        \draw[black, thick] (-1,-1) -- (0,.5) -- (1,-1) -- (3,-2) -- (0,2) -- (-3,-2) -- (-1,-1) -- (1,-1) -- (-3,-2) -- (3,-2);
        \draw[black,thick] (0,.5) -- (0,2);
        \draw[black,thick] (-1,-1) -- (0,2);
        \draw[black, thick] (1,-1) -- (0,2);
        \filldraw[red, opacity = .25] (-1,-1) -- (0,2) -- (0,.5) -- cycle;
        \filldraw[blue, opacity = .25] (-3,-2) -- (3,-2) -- (1,-1) -- cycle;
    \end{tikzpicture}
    \]
    \caption{Depicted is the silo construction in $d=3$ dimensions in the normal fan of the polytope. Namely, the outer triangle corresponds to the normal cone of the vertex being cut off. We visualize this as a triangle by slicing the cone with a plane. Then the siloing subdivides that slice. The basis exchange graph corresponds to the dual graph of the triangulation. In this picture it is already visible that two cells may be of distance $d=3$ away from each other as is the case for the highlighted cells on the right side of the picture. }
    \label{fig:triangulation}
\end{figure}
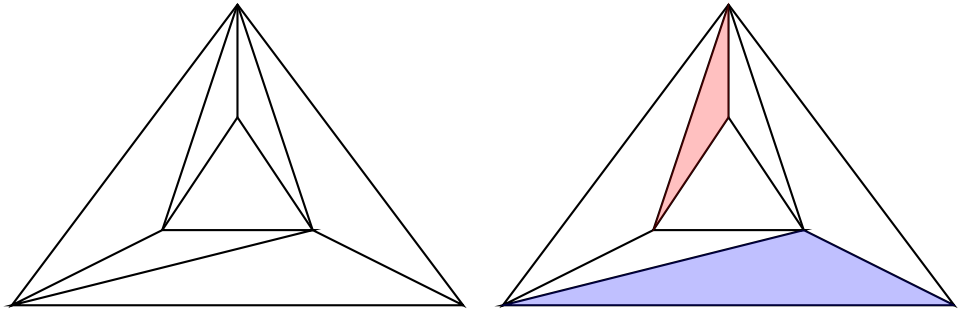

In the remainder of the paper, the construction of a simple polytope $S$ with $m+d$ facets starting from an ordered feasible basis of a simple polytope $P$ with $m$ facets as described in Lemma~\ref{lem:function} will be referred to as \textbf{siloing} due to its interpretation as building a tower to create an isolated vertex as depicted visually in the normal fan in Figure \ref{fig:triangulation}. Note that the assumption that the feasible basis corresponding to the vertex where we start truncating equals $[d]$ is not of any essence, and thus, more generally, can be applied to any \emph{ordered} feasible basis of a polytope by relabeling. Thus, given a simple polytope $P$, a vertex $\mathbf{v}$ of $P$ and a linear ordering $b_1\prec b_2\prec\cdots\prec b_d$ on the elements of the feasible basis $\{b_1,\ldots,b_d\}$ associated with $\mathbf{v}$, we use the notation $S(P,\mathbf{v},\prec)$ for the polytope $S$ obtained from Lemma~\ref{lem:firsttruncation} applied after relabeling such that $b_i$ is the $i$th inequality of the polytope $P$ for $i=1,\ldots,d$, and call it the \textbf{silo} of $P$ at $(\mathbf{v},\prec)$. We then always have that $S(P,\mathbf{v},\prec)$ can be computed in polynomial time given $P, \mathbf{v}$ and $\prec$ and satisfies a formula for the generating function corresponding to that of Lemma~\ref{lem:firsttruncation} after suitable relabeling. As further terminology, when constructing a silo $S(P,\mathbf{v},\prec)$, we say that the vertex $\mathbf{v}$ is \textbf{being siloed}. We also call the final vertex added in the construction process for $S(P,\mathbf{v},\prec)$ (concretely, the vertex whose feasible basis corresponds to the monomial $\mathbf{y}^{[d]}$) the \textbf{peak} of the silo. Overall, the construction effectively replaces the vertex $\mathbf{v}$ being siloed with a tower that peaks at the peak of the silo, much like in the reduction of Frieze and Teng's paper \cite{FriezeTeng1994}. 

Siloing is almost enough to achieve our goal of reducing the shortest path problem on simple polytopes to the problem of computing the diameter of simple polytopes: By design of the construction (and as will be formally verified later), given some vertex $\mathbf{u}$ of the original polytope $P$ distinct from the siloed vertex $\mathbf{v}$, the distance from $\mathbf{u}$ to the peak of $S(P,\mathbf{v},\prec)$ is precisely the distance from $\mathbf{u}$ to $\mathbf{v}$ in $P$ plus $d-1$. 

This property naturally leads to the following idea for our reduction: Namely, to apply the silo construction repeatedly. Concretely, suppose we are given as input a simple polytope $P$ and a pair of vertices $\mathbf{u}, \mathbf{v}$ between which we want to solve \textsc{$k$-Distance on Simple Polytopes}. Then we silo $\mathbf{u}$, silo at the peak of that silo, and keep siloing at peaks repeatedly $r$ times (for some suitably chosen, large enough, parameter $r$), such that the last peak $\mathbf{u}'$ we created will have distance at least $r(d-1)$ from any vertex of the original polytope. Then we repeat the same process at $\mathbf{v}$, yielding another ``last'' peak vertex $\mathbf{v}'$ with the same property. One can then check that in the graph of the resulting polytope $Q$, we will have found a new pair $\mathbf{u}', \mathbf{v}'$ of vertices satisfying $d_Q(\mathbf{u}',\mathbf{v}') = d_P(\mathbf{u},\mathbf{v}) + 2r(d-1)$. Furthermore, one can check that for any pair of vertices in the original polytope, even after this repeated siloing their distance will have changed by an additive constant of at most $6$. Hence, (provided $r$ was chosen large enough) the distance between any two original vertices of $P$ in the final polytope $Q$ will be much smaller than that of $\mathbf{u'}$ and $\mathbf{v}'$. The hope would thus be to show that $d_P(\mathbf{u}', \mathbf{v}')$ equals the diameter of the new polytope $Q$, such that we could reduce the problem of computing/bounding the distance between $\mathbf{u}, \mathbf{v}$ in the graph of $P$ to the problem of computing the diameter of $Q$, providing the desired hardness result claimed by Theorem~\ref{thm:combodiam}. 

However, this idea narrowly fails to work, at least in the simple form that we now described. On the one hand, one can check that for any two vertices in the same so-called tower of silos, their distance is at most $r(d-1) + 1$ and hence these vertices will be closer to each other than $\mathbf{u}'$ and $\mathbf{v}'$, as desired. However, the problem is that it may happen that the distance between two vertices in different towers may be $2r(d-1) + d_P(\mathbf{u}, \mathbf{v}) + 2$ in the worst case and thus (slightly) bigger than $d_Q(\mathbf{u}',\mathbf{v}')$. Hence, in such a case all we may conclude is that the diameter of $Q$ lies somewhere between $d(\mathbf{u},\mathbf{v}) + 2r(d-1)$ and $d(\mathbf{u},\mathbf{v}) + 2r(d-1) + 2$. This, unfortunately, is not quite enough to determine the shortest path distance between $\mathbf{u}$ and $\mathbf{v}$ exactly. One would need an APX-hardness result here, which we however have no access to. In fact, we leave finding such an APX-hardness result as an open problem in Section \ref{sec:conclusion}. 

Due to this subtle technical difficulty, we must be careful with how exactly we perform the described sequence of repeated siloings. To do so, we need to delve down into the combinatorics of the polytopes resulting from siloing and identify precisely which types of vertices can lead to the aforementioned increased distances. Based on this deeper understanding of the construction, we can construct the towers of silos such that the aforementioned bad situation never arises. 

\begin{figure}
    \centering

\begin{comment}
    \[\begin{tikzpicture}[scale=1,
  dot/.style={circle,fill=black,inner sep=1.2pt},
  diag/.style={very thick,gray},
  up/.style={thin},
  skip/.style={thick, dotted, blue}
]
\def\d{5} % <-- set d here

% optional: light grid
\draw[step=1,gray!25] (1,1) grid (\d,\d);

% optional: diagonal for reference
\draw[diag] (1,1) -- (\d,\d);

% axis ticks/labels (optional)
\foreach \k in {1,...,\d} {
  \node[below] at (\k,1) {\scriptsize \k};
  \node[left]  at (1,\k) {\scriptsize \k};
}

% points (deleted product)
\foreach \i in {1,...,\d} {
  \foreach \j in {1,...,\d} {
    \ifnum\i=\j\relax\else
      \node[dot] at (\i,\j) {};
    \fi
  }
}

% arrows upward: (i,j) -> (i,j+1), only if both endpoints are off-diagonal
\foreach \i in {1,...,\d} {
  \foreach \j in {1,...,\d} {
    \ifnum\j<\d\relax
      \pgfmathtruncatemacro{\jp}{\j+1}
      \ifnum\i=\j\relax\else
        \ifnum\i=\jp\relax\else
          \draw[up] (\i,\j) -- (\i,\jp);
        \fi
      \fi
    \fi
  }
}

% extra arrows skipping over the diagonal:
% (i,i-1) -> (i,i+1), defined only when i-1 >= 1 and i+1 <= d
\foreach \i in {1,...,\d} {
  \ifnum\i>1\relax
    \ifnum\i<\d\relax
      \pgfmathtruncatemacro{\im}{\i-1}
      \pgfmathtruncatemacro{\ip}{\i+1}
      % both endpoints are automatically off-diagonal; draw the arrow
      \draw[skip] (\i,\im) -- (\i,\ip);
    \fi
  \fi
}

\end{tikzpicture}\]
\end{comment}
 \includegraphics[scale=1]{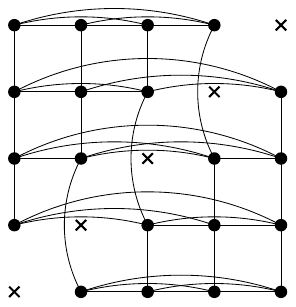}
    \label{fig:grid}
        \caption{Depicted is the graph $G_{d}$ for $d = 5$. Vertices of the same height (i.e., second coordinate) are pairwise adjacent. Otherwise, there is an edge from a vertex to the first vertex above it and below it that is in the graph.}
\end{figure}

To start making these high-level ideas more concrete, we first need to better understand the adjacencies between the new vertices after siloing. To do so, for a natural number $d$, we define a graph $G_d$ as having vertex set
$$V(G_d):=\{(a,b)\in [d]^2|a\neq b\}$$ and where two distinct vertices $(a,b)$ and $(a',b')$ with $b\le b'$ are adjacent if and only if one of the following holds:
\begin{itemize}
    \item $b=b'$, or
    \item $b'=b+1$, $a=a'$ and $b\neq a-1$, or
    \item $b'=b+2$, $a=a'$, and $b=a-1$.
\end{itemize}
We call $G_d$ the \textbf{$d$-th silo graph}. See Figure~\ref{fig:grid} for a visual illustration for $d=5$. The following lemma describes the adjacencies between new vertices of the silo of a $d$-dimensional polytope precisely in terms of the graph $G_d$.
\begin{lem}\label{lem:isomorphism}
Let $P$ be a $d$-dimensional simple polytope described by $m$ irredundant inequalities. Let $\mathbf{v}$ be a vertex of $P$ and let $\prec$ be a linear order on the elements $b_1\prec b_2\prec \cdots\prec b_d$ of the feasible basis defining $\mathbf{v}$. Let $H$ be the subgraph of the graph of $S(P,\mathbf{v},\prec)$ induced by the vertices in $S(P,\mathbf{v},\prec)$ that are not vertices of $P$ and distinct from the peak. Then $H$ and $G_d$ are isomorphic. Moreover, assuming that we label the inequalities as $1,\ldots,m$ such that $b_i$ receives label $i$ for $i=1,\ldots,d$, an isomorphism from $G_d$ to $H$ is given by mapping a vertex $(a,b)$ of $G_d$ to the vertex of $S(P,\mathbf{v},\prec)$ associated with the monomial $\mathbf{x}^{[d]\setminus ([b-1] \cup \{a\})}\mathbf{y}^{[b]}$ if $a>b$ and  $\mathbf{x}^{[d]\setminus [b-1]}\mathbf{y}^{[b]\setminus \{a\}}$ if $a<b$.
\end{lem}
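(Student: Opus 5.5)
The plan is to read the graph of a simple polytope directly off its feasible bases: two vertices are adjacent exactly when their bases share $d-1$ elements. Since $S(P,\mathbf{v},\prec)$ is simple, this applies. Lemma~\ref{lem:function} gives the full list of feasible bases, so the lemma reduces to combinatorics on the monomials. Relabel so that $b_i$ is inequality $i$. The new non-peak monomials, as read off from Lemma~\ref{lem:function}, are of two kinds. For fixed $k\in\{0,\dots,d-1\}$ there are
\[
\mathbf{x}^{[d]\setminus[k]}\mathbf{y}^{[k+1]\setminus\{i\}}\quad (1\le i\le k)
\qquad\text{and}\qquad
\mathbf{x}^{[d]\setminus([k]\cup\{j\})}\mathbf{y}^{[k+1]}\quad (k+2\le j\le d).
\]
Setting $b=k+1$, these are exactly the images of $(a,b)$ with $a=i<b$ and $a=j>b$ under the proposed map. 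The first step is to check that this map is a bijection from $V(G_d)$ onto the set of new non-peak vertices. For each $b$ there are $d-1$ such monomials, and they are pairwise distinct, so the map hits all $d(d-1)$ of them.

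The second step is to compute, for two images, the size of the symmetric difference of their bases. A monomial $\phi(a,b)$ uses $y$-set $[b]$ or $[b]\setminus\{a\}$, and $x$-set $[d]\setminus[b-1]$ with $a$ removed when $a>b$.

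\textbf{Same level $b=b'$.} Compare the two vertices.
\begin{itemize}
\item If $a,a'<b$, the $x$-parts agree and the $y$-parts differ by swapping $a$ and $a'$.
\item If $a,a'>b$, the $y$-parts agree and the $x$-parts differ by a swap.
\item If $a<b<a'$, the bases are $X\cup(Y\setminus\{a\})$ and $(X\setminus\{a'\})\cup Y$, which differ by one exchange.
\end{itemize}
In every case the vertices are adjacent, matching $G_d$.

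\textbf{Levels $b'=b+1$.} The $y$-part grows by one element ($b+1$, possibly with $a$ also affected), and the $x$-part loses $b$ (possibly with $a$ also affected). Going through the four sign cases of $a$ and $a'$ relative to $b$ and $b+1$, one sees that the bases differ in exactly one exchange precisely when $a=a'$. The exception is $a=b+1$: then $(a,b+1)$ is not a vertex of $G_d$, which is why the condition $b\ne a-1$ appears.

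\textbf{Levels $b'=b+2$ with $a=a'=b+1$.} Here $\phi(b+1,b)$ has $x$-set $[d]\setminus([b-1]\cup\{b+1\})$ and $y$-set $[b]$. Meanwhile $\phi(b+1,b+2)$ has $x$-set $[d]\setminus[b+1]$ and $y$-set $[b+2]\setminus\{b+1\}$. These differ by exchanging $x_b$ for $y_{b+2}$, so they are adjacent. Any other pair with $b'\ge b+2$ differs in at least two $y$-indices, since the $y$-sets have sizes $b$ or $b-1$ versus $b'$ or $b'-1$. The one borderline case is this $a=b+1$ case, which we have just handled. This gives non-adjacency in all remaining cases, as required.

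The main obstacle is bookkeeping: making the level-$(b+1)$ case analysis airtight, especially when $a'$ lies between the relevant indices. I would organize it by writing each basis as a subset of $[d]\sqcup[d]$ and tabulating the symmetric difference in each case. As an independent check, the total count of $H$-edges should match $d$-regularity of $S$ minus the edges to old vertices and to the peak.
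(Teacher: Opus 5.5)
Your route is the paper's: simplicity lets you read adjacency off the feasible bases, and Lemma~\ref{lem:function} gives the bijection. Your sufficiency checks (the three same-level subcases, the exchange $x_b\leftrightarrow y_{b+1}$, and the exchange $x_b\leftrightarrow y_{b+2}$) are exactly the paper's Cases~1--3. Your level-$(b+1)$ claim (adjacent iff $a=a'$) is also correct, though only asserted.

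The gap is in the non-adjacency step for $b'\ge b+2$. The assertion that every remaining pair differs in at least two $y$-indices is false. For every $a\ge b+2$, the vertex $\phi(a,b)$ has $y$-set $[b]$, while $\phi(b+1,b+2)$ has $y$-set $[b]\cup\{b+2\}$, so they differ in a single $y$-index. For instance, with $d=4$, $\phi(3,1)=x_1x_2x_4y_1$ and $\phi(2,3)=x_3x_4y_1y_3$. Your size count cannot exclude this, because for $a>b$ and $a'<b'=b+2$ the $y$-sets have sizes $b$ and $b+1$. The borderline case $a=a'=b+1$ that you handled is only one member of this family. These pairs are in fact non-adjacent, but for a reason your argument never invokes: the $x$-parts differ in $x_b$, $x_{b+1}$ and $x_a$, so the symmetric difference is $\{x_b,x_{b+1},x_a,y_{b+2}\}$. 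Relatedly, differing in two $y$-indices does not by itself rule out adjacency; you also need the $x$-parts to differ.

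To repair it, write the bases as $M=X\sqcup Y$ and $M'=X'\sqcup Y'$. Since $|X|+|Y|=|X'|+|Y'|=d$, you get $|M\Delta M'|\ge 2\bigl||Y|-|Y'|\bigr|$. This is at least $4$ for all $b'\ge b+2$ except when $a>b$ and $a'<b'=b+2$. In that remaining case:
\begin{itemize}
\item $a'\le b$ gives three $y$-differences;
\item $a'=b+1=a$ is the genuine edge;
\item $a'=b+1<a$ is the family above.
\end{itemize}

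The paper sidesteps the case split. It writes $M=A\setminus\{s\}$ and $M'=A'\setminus\{s'\}$ with $A=\{y_1,\dots,y_b,x_b,\dots,x_d\}$, so that $|A\Delta A'|=2(b'-b)$. It then expands $|M\Delta M'|=2$ in one identity that forces $b'\le b+2$ and pins down $s,s'$.

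Alternatively, your ``independent check'' is already a complete proof of necessity. The $d(d-1)$ vertices of $H$ have degree $d$ in $S(P,\mathbf{v},\prec)$. Exactly $d$ edges leave $H$ towards old vertices: the only new vertices with a single $y$ are $\phi(1,2)$ and the $\phi(i,1)$, and each is adjacent to exactly one old vertex. Another $d$ edges go to the peak. Hence $|E(H)|=\tfrac12\bigl(d^2(d-1)-2d\bigr)=\tfrac{d(d-2)(d+1)}{2}=d\binom{d-1}{2}+d(d-2)=|E(G_d)|$. Since you showed that $\phi$ maps edges of $G_d$ to edges of $H$, it must map them onto.
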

\begin{proof}
Notice that by Lemma~\ref{lem:function} the vertices of $S(P,\mathbf{v},\prec)$ that are considered in $H$ are exactly those whose monomials in the generating function appear in the sum
\[\sum_{k=0}^{d-1} \left(\sum_{i=1}^{k} \mathbf{x}^{[d] \setminus [k]} \mathbf{y}^{[k+1] \setminus \{i\}} + \sum_{j=k+2}^{d} \mathbf{x}^{[d] \setminus ([k] \cup \{j\})} \mathbf{y}^{[k+1]} \right).\]
These are exactly the monomials of the form
$\mathbf{x}^{[d]\setminus ([b-1] \cup \{a\})}\mathbf{y}^{[b]}$ for some $1\le b<a\le d$ and $\mathbf{x}^{[d]\setminus [b-1]}\mathbf{y}^{[b]\setminus \{a\}}$ for some $1\le a<b\le d$, giving the desired bijection between vertices of $H$ and $G_d$. 

Now consider any distinct vertices $(a,b), (a',b')$ of $G_d$ with $b\le b'$. Let $\mathbf{v}, \mathbf{v'}$ be their associated distinct vertices in $H$. To prove the statement of the lemma, we have to show that $\mathbf{v}$ and $\mathbf{v}'$ are adjacent in $H$ if and only if $b=b'$; or $b'=b+1$, $a=a'$ and $b\neq a-1$; or $b'=b+2, a=a'$ and $b=a-1$. We start by showing sufficiency and split this into cases.

\textbf{Case~1.} Suppose first that $b=b'$. Then the monomials associated with $\mathbf{v}$ and $\mathbf{v}'$ are both obtained from $\mathbf{x}^{[d]\setminus [b-1]}\mathbf{y}^{[b]}$ by omitting exactly one variable. Hence, their bases have a symmetric difference of at most two and so $\mathbf{v}$ and $\mathbf{v}'$ are adjacent in the graph of $S(P,\mathbf{v},\prec)$ and hence also in $H$, as desired.

\textbf{Case~2.} Suppose next that $b'=b+1$, $a=a'$ and $b\neq a-1$. Then then we can obtain the monomial of $\mathbf{v}'$ from that of $\mathbf{v}$ by replacing the variable $x_{b}$ with the variable $y_{b+1}$ (note that since $a=a'\notin \{b,b+1\}$, the variable $x_{b}$ indeed always occurs in the monomial representing $\mathbf{v}$, and the variable $y_{b+1}$ indeed occurs in the monomial representing $\mathbf{v}'$). Hence, again the corresponding bases have a symmetric difference of size at most two and so $\mathbf{v}$ and $\mathbf{v}'$ are adjacent in $H$. 

\textbf{Case~3.} Finally suppose that $b'=b+2$, $a=a'$ and $b=a-1$. Then $\mathbf{v}$ and $\mathbf{v}'$ are represented by $\mathbf{x}^{[d]\setminus ([b-1]\cup \{b+1\})}\mathbf{y}^{[b]}$ and $\mathbf{x}^{[d]\setminus [b+1]}\mathbf{y}^{[b]\cup \{b+2\}}$, respectively. Since the latter can be obtained from the first by exchanging the variable $x_{b}$ for the variable $y_{b+2}$, indeed $\mathbf{v}$ and $\mathbf{v}'$ are adjacent also in this last case.

It remains to show necessity of the conditions. So suppose that $\mathbf{v}$ and $\mathbf{v}'$ are adjacent in $H$, i.e. their corresponding feasible bases of $S(P,\mathbf{v},\prec)$ have symmetric difference of size two, and let us prove that at least one of the three conditions for adjacency in $G_d$ is satisfied. Since the first condition holds if $b=b'$, in what follows we may and will assume $b'>b$. 

Let us denote by $M, M'\subseteq \{x_1,\ldots,x_d,y_1,\ldots,y_d\}$ the sets of variables occurring in the monomials representing $\mathbf{v}$ and $\mathbf{v}'$, respectively. We then have $$M=\{y_1,\ldots,y_b,x_b,\ldots,x_d\}\setminus \{s\}, M'=\{y_1,\ldots,y_{b'},x_{b'},\ldots,x_d\}\setminus \{s'\},$$ where $s\in \{x_a,y_a\}\cap \{y_1,\ldots,y_b,x_b,\ldots,x_d\}$ and $s'\in \{x_{a'},y_{a'}\}\cap \{y_1,\ldots,y_{b'},x_{b'},\ldots,x_d\}$. Since $\mathbf{v}$~and $\mathbf{v}'$ are adjacent, the sets $M$ and $M'$ must have symmetric difference exactly two. We therefore find
$$2=|M\Delta M'|=|(\{y_1,\ldots,y_b,x_b,\ldots,x_d\}\Delta \{s\})\Delta (\{y_1,\ldots,y_{b'},x_{b'},\ldots,x_d\}\Delta \{s'\})|$$
$$=|\{x_b,\ldots,x_{b'-1},y_{b+1},\ldots,y_{b'}\}\Delta (\{s\}\Delta \{s'\})|$$ $$=2(b'-b)+|\{s\}\Delta\{s'\}|-2|\{x_b,\ldots,x_{b'-1},y_{b+1},\ldots,y_{b'}\}\cap(\{s\}\Delta\{s'\})|.$$
This immediately implies that either $s=s'$ and $b'=b+1$ or $s\neq s'$ and $$b'-b=|\{x_b,\ldots,x_{b'-1},y_{b+1},\ldots,y_{b'}\}\cap \{s,s'\}|\le 2.$$  

In the first case, since $s\in \{x_a,y_a\}$ and $s'\in \{x_{a'},y_{a'}\}$, we must have $a=a'$, and hence $b\neq a-1$, for otherwise $a'=a=b+1=b'$. However, this means that the second condition on $(a,b)$ and $(a',b')$ is satisfied, as desired.

So moving on suppose that the second case holds, i.e. $s\neq s'$, $b'\in \{b+1,b+2\}$ and $b'-b=|\{x_b,\ldots,x_{b'-1},y_{b+1},\ldots,y_{b'}\}\cap \{s,s'\}|$.  

%Similarly, if the monomials of $\mathbf{v}$ and $\mathbf{v}'$ are obtained by omitting \emph{different} variables (i.e., if $a\neq a'$), then the edit-distance in terms of variables between them must be at least four, hence they also cannot be adjacent, a contradiction. So, we must have $b'\in \{b+1,b+2\}$ and $a=a'$ in this case.

Suppose first that $b'=b+1$. Then $\{x_b,y_{b+1}\}$ shares exactly one element with $\{s,s'\}$. Since $s\in \{y_1,\ldots,y_b,x_b,\ldots,x_d\}$ and $s'\in \{y_1,\ldots,y_{b+1},x_{b+1},\ldots,x_d\}$, it follows that either $s=x_b$ or $s'=y_{b+1}$. Recall that $s\in \{x_a,y_a\}$ and $s'\in \{x_{a'},y_{a'}\}$. Hence, in the first case, we obtain $a=b$, a contradiction, and in the second we obtain $a'=b+1=b'$, also a contradiction. It follows that this case is impossible and we may move on with the case $b'=b+2$. We then have that $\{x_b,x_{b+1},y_{b+1},y_{b+2}\}$ shares exactly two common elements with $\{s,s'\}$, i.e., we have that $s,s'$ are distinct elements of $\{x_b,x_{b+1},y_{b+1},y_{b+2}\}$. Since $s\in \{y_1,\ldots,y_b,x_b,\ldots,x_d\}$ and $s'\in \{y_1,\ldots,y_{b+2},x_{b+2},\ldots,x_d\}$, we conclude that in fact $s\in \{x_b,x_{b+1}\}$ and $s'\in \{y_{b+1},y_{b+2}\}$. Recalling further that $a\neq b$ and $a'\neq b'=b+2$, we find that necessarily $s=x_{b+1}$ and $s'=y_{b+1}$. Hence, we have $a=a'=b+1$ and so $b=a-1$. Thus the third of the three conditions for adjacency in $G_d$ is satisfied. This concludes the proof. 
\end{proof}

Note that for any simple polytope $P$, the peak of the silo $S(P,\mathbf{v},\prec)$ is of distance at least $d$ from any of the original vertices in $P$ distinct from $\mathbf{v}$, since its feasible basis is disjoint from all feasible bases of original vertices in $P$, and hence the symmetric difference with these bases is of size $2d$. 

Under the graph isomorphism in Lemma \ref{lem:isomorphism}, the new vertices of the silo whose corresponding monomial has $\mathbf{x}$-support of size $d-1$ are associated to the vertices $(1,2)$ and $(i,1)$ for $2 \leq i \leq d$ of $G_{d}$. These are also exactly the vertices that have some neigbor in the original polytope. At the same time, the new vertices of the silo whose corresponding monomial has $\mathbf{y}$-support of size $d-1$ are associated to the vertices $(i,d)$ for $1 \leq i \leq d-1$ and $(d,d-1)$ of $G_d$. These are also exactly the neighbors of the peak in the silo.

With our next lemma below, we will bound the distances between pairs of new vertices in a silo. To do so, by Lemma~\ref{lem:isomorphism} it suffices to bound the distances between the associated vertices in the $d$-th silo graph $G_d$. Parts (a) and (b) of the following lemma show that any new vertex of a silo $S(P,\mathbf{v},\prec)$ adjacent to an original vertex of $P$ has a path of length at most $d-2$ to a vertex of the silo adjacent to the peak. In fact, part (a) shows that there is a path of length at most $d-2$ from the vertex with associated monomial $\mathbf{x}^{[d]\setminus \{1\}}y_{2}$ to all but one of the $d$ neighbors of the peak in the silo. This flexibility of endpoints of paths starting from the vertex represented by $\mathbf{x}^{[d] \setminus \{1\}}y_{2}$ in conjunction with a rotation action will later allow us to effectively analyze a specific variant of the ``repeated siloing'' construction mentioned further above. Finally, part~(c) of the lemma shows that all new vertices of $S(P,\mathbf{v},\prec)$ can reach a vertex with a neighbor in the original polytope in at most $d-2$ steps, and part~(d) guarantees that any two vertices adjacent to original vertices of $P$ are close to each other.

\begin{lem}
\label{lem:dminus2}
Let $G_{d}$ be the $d$-th silo graph. Then 
\begin{itemize}
    \item[(a)] There is a path of length $d-2$ from $(1,2)$ to $(i,d)$ for all $i \in [d-1] \setminus \{2\}$ and to $(d,d-1)$.
    \item[(b)] For each $2 \leq i \leq d-1$, there is a path of length $d-2$ from $(i,1)$ to $(i,d)$ and from $(d,1)$ to $(d,d-1)$. 
    \item[(c)] Every vertex of $G_d$ can reach some vertex in $\{(1,2)\}\cup \{(i,1)|2\le i \le d\}$ in at most $d-2$ steps.
    \item[(d)] Any two vertices in the set $\{(1,2)\}\cup \{(i,1)|2\le i\le d\}$ have distance at most $3$ from each other.
\end{itemize}
\end{lem}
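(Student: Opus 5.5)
The plan is to argue directly in the graph $G_d$, exploiting the fact that its edges organize it into ``columns'' and ``rows''. For fixed $a\in[d]$, the vertices $(a,b)$ with $b\in[d]\setminus\{a\}$ form a \emph{column}, and the second and third adjacency rules say exactly the following. Consecutive members of this column, listed in increasing order of $b$, are adjacent. Here we skip $b=a$, so that $(a,a-1)$ is joined to $(a,a+1)$. Hence every column induces a path through $d-1$ vertices, of length $d-2$. Each \emph{row} $\{(a,b)\colon a\neq b\}$ with $b$ fixed is a clique by the first rule. For $a\ge 2$ the bottom of column $a$ is $(a,1)$, and for $a=1$ it is $(1,2)$. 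The top of column $a$ is $(a,d)$ for $a\le d-1$, and $(d,d-1)$ for $a=d$. All four parts will then follow from short explicit paths built from column segments and at most one row step.

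For (b), we simply walk up column $i$ from its bottom $(i,1)$ to its top. This gives $(i,d)$ for $2\le i\le d-1$ and $(d,d-1)$ for $i=d$, each a path of length $d-2$. For (a), the case $i=1$ is again the full column $1$, running from $(1,2)$ to $(1,d)$ with length $d-2$. For $3\le i\le d$, we take one row step inside row $2$, from $(1,2)$ to $(i,2)$, which is legal because $i\neq 2$. We then walk up column $i$ from row $2$ to its top. This segment visits the rows in $\{2,\dots,d\}\setminus\{i\}$, which are $d-2$ vertices, so it has length $d-3$. The total length is $d-2$, and the endpoint is $(i,d)$ for $i\le d-1$ and $(d,d-1)$ for $i=d$.

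For (c), from an arbitrary vertex $(a,b)$ we walk down its column to the bottom. If $a\ge2$, the number of steps equals $|[b-1]\setminus\{a\}|$. This is at most $b-1\le d-2$ unless $b=d$, and in that case $a\le d-1$ lies in $[b-1]$, so the count is again $d-2$. If $a=1$, the walk ends at $(1,2)$ after $b-2\le d-2$ steps. For (d), any two vertices $(i,1)$ and $(j,1)$ lie in row $1$ and are therefore adjacent. Between $(1,2)$ and $(i,1)$ with $i\ge3$ we use the path $(1,2),(i,2),(i,1)$. It remains to connect $(1,2)$ with $(2,1)$. Here we use $d\ge3$ and take the path $(2,1),(3,1),(3,2),(1,2)$, which has length $3$.

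The only real care needed is checking that the vertical edges used are legal, especially the skip over the diagonal entry $b=a$, and handling the boundary cases $i=2$, $i=d$ and $b=d$. I expect (a) to be the main point, since the row step has to be placed in row $2$ rather than row $1$. That placement is what excludes the target $(2,d)$ and yields exactly length $d-2$.
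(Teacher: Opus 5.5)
Your proof is correct and follows essentially the same route as the paper. Both walk up a column path, using the skip edge from $(a,a-1)$ to $(a,a+1)$. For (a), both first take a single row-$2$ step from $(1,2)$ to $(i,2)$. For (c), both descend a column. For (d), both combine the row-$1$ clique with the path $(1,2),(3,2),(3,1)$. Your column/row bookkeeping in (c) states explicitly what the paper's proof intends there; the paper's wording mixes up rows and columns.
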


\begin{proof}
For (a), if $i = 1$, simply increase the second coordinate until reaching $(1,d)$, and this takes at most $d-2$ steps. If $3 \leq i \leq d-1$, take one step to move from $(1,2)$ to $(i,2)$. Then increase the second coordinate until reaching $(i,d)$. This takes at most $d-3$ steps, since each step increases the second index by $1$ except for the step from $(i,i-1)$ to $(i,i+1)$, which increases it by $2$. Thus, it takes $d-2$ steps overall to reach $(i,d)$ for $3 \leq i \leq d-1$. Finally, for moving to $(d,d-1)$, first take $1$ step to move from $(1,2)$ to $(d,2)$ and then increase the second coordinate $d-3$ times to reach $(d,d-1)$ in $d-2$ steps.

For (b), increase the second coordinate iteratively. This takes $d-2$ steps for moving from $(i,1)$ to $(i,d)$ for $i \leq d-1$, because all except one of the steps increase the second coordinate by $1$, and as in the justification for part (a), the second coordinate increases by $2$ from $(i,i-1)$ to $(i,i+1)$. For moving from $(d,1)$ to $(d,d-1)$, increasing the second coordinate straightforwardly takes $d-2$ steps. 

%For (c), let $(c,d)$ be any other vertex. Then if $b \neq c$, move from $(a,b)$ to $(c,b)$ in a single step. Then changing the second coordinate until reaching $(c,d)$ takes at most $d-2$ steps as there are only $d-1$ vertices with the same first coordinate. 

%Suppose instead that $b = c$. Suppose that $a \neq d$. Then move from $(a,b)$ to $(a,d)$ in at most $d-2$ steps and then to $(c,d)$ in $1$ step.

%Thus, the only remaining case is when $c= b$ and $d = a$, so the path moves from $(a,b)$ to $(b,a)$. Suppose without loss of generality that $a \leq b$. Since $a \neq b$, we have $a < b$. Move to $(x,b)$ for some $x \notin \{a,b\}$. This takes $1$ step. Then move from $(x,b)$ to $(x,a)$. We claim that the latter takes at most $d-3$ steps: Since $1\le a<b\le d-1$, it certainly takes at most $b-a$ steps, which is good enough to establish the desired bound unless $a=1, b=d-1$. But in this case it also takes at most $d-3$ steps, since then $a<x<b$ and so we than decrease the second coordinate once by two when Then move from $(x,a)$ to $(b,a)$ in $1$ step. This takes at most $d-3 + 1 + 1 = d-1$ steps. 
For (c), consider any vertex of $(a,b)$ of $G_d$. Note that the vertices of $G_d$ in $\{(x,b)|x\in [d]\setminus \{b\}\}$ induce a path on $d-1$ vertices (and hence of length $d-2$) as a subgraph of $G_d$, whose vertex with the lowest second coordinate is $(2,1)$ for $b=1$ and $(1,b)$ otherwise. Hence, by moving along this path we can always connect $(a,b)$ to a vertex in $\{(1,2)\}\cup \{(i,1)|2\le i \le d\}$ in at most $d-2$ steps, as desired.

For (d), it suffices to note that by definition of the graph $G_d$, the set of vertices $\{(i,1)|2\le i\le d\}$ form a clique, and hence have pairwise distance one. Furthermore, since $(1,2)$ is adjacent to $(3,2)$, which is adjacent to $(3,1)$ in $G_d$, the vertex $(1,2)$ has distance at most $2$ from some vertex in this clique. Hence, it has distance at most $3$ from any vertex in this clique, proving the desired statement.
\end{proof}

The following corollary records some further observations and consequences of Lemma~\ref{lem:isomorphism} and Lemma~\ref{lem:dminus2} in a somewhat different language, which shall become useful later.

\begin{cor}
\label{cor:d-2paths}
Let $\mathbf{v}$ be a vertex of a simple polytope $P$ and let $\prec$ be a linear order on the elements $b_1\prec\cdots\prec b_d$ of the corresponding feasible basis. Let $H$ be the subgraph of the graph of $S(P,\mathbf{v},\prec)$ induced by the vertices not in $P$ and distinct from the peak, and let $\phi:V(G_d)\rightarrow V(H)$ denote the graph isomorphism from Lemma~\ref{lem:isomorphism}. Let $\mathbf{u}_1:=\phi(1,2)$, $\mathbf{u}_i:=\phi(i,1)$ for $2\le i\le d$, $\mathbf{v}_i:=\phi(i,d)$ for $1\le i\le d-1$ and $\mathbf{v}_d:=\phi(d,d-1)$. Furthermore, for $1\le i \le d$ let $\mathbf{s}_i$ denote the unique neighbor of $\mathbf{v}$ on $P$ whose feasible basis contains the elements $\{b_1,\ldots,b_d\}\setminus \{b_i\}$. Then the following hold:

\begin{itemize}
    \item $\mathbf{s}_i$ and $\mathbf{u}_i$ are adjacent on $S(P,\mathbf{v},\prec)$ for every $i\in [d]$.
    \item For each $i\in [d]$, there exists a path of length at most $d-2$ from $\mathbf{u}_i$ to $\mathbf{v}_i$ in $H$, as well as a path of length at most $d-2$ from $\mathbf{u}_1$ to $\mathbf{v}_i$ for each $i\neq 2$.
    \item $\mathbf{v}_1,\ldots,\mathbf{v}_d$ are the neighbors of the peak of $S(P,\mathbf{v},\prec)$.
    \item Every vertex in $H$ has distance at most $d-2$ from some vertex in $\{\mathbf{u}_1,\ldots,\mathbf{u}_d\}$.
    \item Any two of $\mathbf{u}_1,\ldots,\mathbf{u}_d$ have distance at most $3$ from each other in $H$.
\end{itemize}
\end{cor}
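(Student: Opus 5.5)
The plan is to translate each bullet into a statement about monomials via the isomorphism $\phi$ of Lemma~\ref{lem:isomorphism}, and then read it off from Lemma~\ref{lem:dminus2}. Throughout, I relabel so that $b_i$ is inequality $i$. Then $\mathbf{v}$ has basis $[d]$ in the $\mathbf{x}$-variables, and $\mathbf{s}_i$ has a feasible basis of the form $([d]\setminus\{i\})\cup\{j_i\}$ for some original inequality $j_i\notin[d]$. By Lemma~\ref{lem:function}, every vertex of $P$ other than $\mathbf{v}$ keeps its basis in $S(P,\mathbf{v},\prec)$.

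For the first bullet, I will compute the monomials of the $\mathbf{u}_i$. For $\mathbf{u}_1=\phi(1,2)$ we have $a=1<b=2$, so its monomial is $\mathbf{x}^{[d]\setminus\{1\}}\mathbf{y}^{\{2\}}$. For $\mathbf{u}_i=\phi(i,1)$ with $i\ge 2$ we have $a=i>b=1$, so its monomial is $\mathbf{x}^{[d]\setminus\{i\}}\mathbf{y}^{\{1\}}$. In each case the basis of $\mathbf{u}_i$ is $([d]\setminus\{i\})\cup\{y\}$ for a single new inequality $y$. It therefore differs from the basis of $\mathbf{s}_i$ by the single exchange of $j_i$ for $y$, which is symmetric difference two. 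By simplicity, this means the two vertices are adjacent.

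I expect this first bullet to be the only delicate point. One must check that the exchange genuinely corresponds to an edge, and that there is no mismatch between $\mathbf{u}_1$, which involves $y_2$, and $\mathbf{s}_1$. Since both the old and new bases are feasible bases of the simple polytope $S(P,\mathbf{v},\prec)$ and differ in exactly two elements, adjacency follows directly from the definition of the feasible basis graph for simple polytopes. Geometrically, this is the edge of $P$ from $\mathbf{v}$ toward $\mathbf{s}_i$, shortened by the truncations.

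For the third bullet, the peak has monomial $\mathbf{y}^{[d]}$. Its neighbors are the vertices whose bases differ from $[d]$ in the $\mathbf{y}$-variables by one exchange, meaning they contain exactly $d-1$ of the $y$'s. From the monomial list in Lemma~\ref{lem:function}, these are the vertices with $b=d$ (monomials $x_d\mathbf{y}^{[d]\setminus\{a\}}$) together with $(d,d-1)$ (monomial $\mathbf{x}^{[d]\setminus([d-2]\cup\{d\})}\mathbf{y}^{[d-1]}=x_{d-1}\mathbf{y}^{[d-1]}$). This matches the remark made right after Lemma~\ref{lem:isomorphism}. No other vertex can be adjacent: original vertices contain no $y$'s at all, and every other new vertex contains at most $d-2$ of the $y$'s.

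The remaining bullets are direct transfers through $\phi$, which is a graph isomorphism onto $H$.
\begin{itemize}
\item The second bullet is Lemma~\ref{lem:dminus2}(b) for $2\le i\le d$. For $i=1$ it is the path from $(1,2)$ to $(1,d)$ given in Lemma~\ref{lem:dminus2}(a). For paths from $\mathbf{u}_1$ to $\mathbf{v}_i$ with $i\ne 2$, it is Lemma~\ref{lem:dminus2}(a).
\item The fourth bullet is Lemma~\ref{lem:dminus2}(c).
\item The fifth bullet is Lemma~\ref{lem:dminus2}(d).
\end{itemize}
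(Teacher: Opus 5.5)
Your proof is correct and follows the paper's own argument: the first and third bullets come from comparing the bases (monomials) of the relevant vertices, which differ by a single exchange, and the remaining bullets come from transporting parts (a)--(d) of Lemma~\ref{lem:dminus2} through the isomorphism $\phi$. You also make explicit what the paper leaves as ``easily checked'': the computation $\mathbf{u}_1\leftrightarrow\mathbf{x}^{[d]\setminus\{1\}}y_2$, and the converse for the peak, namely that no vertex other than $\mathbf{v}_1,\ldots,\mathbf{v}_d$ uses $d-1$ of the $y$-inequalities.
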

\begin{proof}
The first and the third items can easily be checked by inspecting the monomials representing the respective vertices and observing that they only differ by exchanging a single variable.

The second item follows immediately by combining Lemmas~\ref{lem:isomorphism} and~\ref{lem:dminus2}. Finally, the last two items follow directly from parts (c) and (d) of Lemma~\ref{lem:dminus2}, since $\{\mathbf{u}_1,\ldots,\mathbf{u}_d\}$ is the image of $\{(1,2)\}\cup \{(i,1)|2\le i \le d\}$ under the graph isomorphism $\phi$.
\end{proof}

% Given a polytope $P$, a vertex $\mathbf{v}$ of $P$, and a neighbor $\mathbf{u}$ of that vertex, we can add $d$ inequalities  $(\mathbf{u},\mathbf{v})$-silo $Q$, for which the graph of $Q$ is equal to the graph of $P$ with $v$ replaced with a copy of $G_{d}$ and a new vertex $\mathbf{v}'$ such that there is a path from $\mathbf{u}$ to all but one neighbor of $\mathbf{v}'$ of length at most $d-1$. 

Guided by these structural observations about short paths between the new vertices in a silo, we will now introduce the previously announced construction which involves repeated siloing in a cyclic manner, which we call \textbf{cyclic siloing}. In the following, we will repeatedly use the following notation: For an integer $z\in \mathbb{Z}$, we denote by $\overline{z}$ the unique member of $[d]$ which is congruent to $z$ modulo $d$. 

To explain this construction, suppose we are given as input a simple $d$-dimensional polytope $P$ described by $m$ inequalities, and a vertex $\mathbf{v}$. Suppose further we are given as input some positive integer $r$. We will then construct, in polynomial time in the encoding length of $P$ and in $r$, a sequence of simple $d$-dimensional polytopes $C_0,C_1,\ldots,C_{rd}$, and for each $0\le j\le rd$ a special vertex $\mathbf{v}_j$ on $C_j$, an enumeration $\mathbf{v}_{1,j},\ldots,\mathbf{v}_{j,d}$ of the $d$ neighbors of $\mathbf{v}_j$ on $C_j$, and for every $j\in [rd]$ another sequence $\mathbf{u}_{1,j},\ldots,\mathbf{u}_{d,j}$ of special vertices on $C_j$, as follows.

\begin{itemize}
    \item To initialize, for $j=0$ we set $C_0:=P$, $\mathbf{v}_0:=\mathbf{v}$. Finally, we fix some arbitrary enumeration $\mathbf{v}_{1,0},\ldots,\mathbf{v}_{d,0}$ of the $d$ neighbors of $\mathbf{v}$ on $P$.  
    \item Next let $j\ge 1$, and suppose we already computed $C_{j-1},\mathbf{v}_{j-1}$ and enumerated the $d$ neighbors of $\mathbf{v}_{j-1}$ on $C_{j-1}$ as $\mathbf{v}_{1,j-1},\ldots,\mathbf{v}_{d,j-1}$.
    Now compute the unique labeling $b_{1,j-1},\ldots,b_{d,j-1}$ of the elements of the feasible basis of $C_{j-1}$ corresponding to $\mathbf{v}_{j-1}$ such that the tight inequalities shared by $\mathbf{v}_{i,j-1}$ and $\mathbf{v}_{j-1}$ are exactly $\{b_{1,j-1},\ldots,b_{d,j-1}\}\setminus \{b_{i,j-1}\}$. Let $\prec_{j-1}$ be the linear order on the elements of the feasible basis of $\mathbf{v}_{j-1}$ in $C_{j-1}$ defined by $$b_{\overline{j},j-1}\prec_{j-1}b_{\overline{j+1},j-1}\prec_{j-1}\cdots\prec_{j-1}b_{\overline{j+d-1},j-1}.$$ We then set $C_j:=S(C_{j-1},\mathbf{v}_{j-1},\prec_{j-1})$. Next, we set $\mathbf{v}_j$ to be the peak of the silo $C_j$. Finally, we consider the isomorphism $\phi_j$ from the $d$-th silo graph $G_d$ to the subgraph $H_j$ of the graph of $C_j$ induced by all vertices not in $C_{j-1}$ and distinct from the peak $\mathbf{v}_{j}$, as described in Lemma~\ref{lem:isomorphism}. We then set $\mathbf{u}_{i,j}:=\phi_j(\overline{i-j+1},1)$ for all $i\in [d]\setminus \{\overline{j}\}$  and $\mathbf{u}_{\overline{j},j}:=\phi_j(1,2)$. Furthermore, we set $\mathbf{v}_{i,j}:=\phi_j(\overline{i-j+1},d)$ for all $i\in [d]\setminus \{\overline{j-1}\}$ and $\mathbf{v}_{\overline{j-1},j}:=\phi_j(d,d-1)$. Note that by Corollary~\ref{cor:d-2paths}, the so-defined vertices $\mathbf{v}_{1,j},\ldots,\mathbf{v}_{d,j}$ indeed form the neighbors of the peak $\mathbf{v}_j$ of $C_j$.
\end{itemize}

Finally, the last polytope in our construction, $C_{rd}$, is a simple $d$-dimensional polytope, which we call the \textbf{$r$-cyclic siloing} of $P$ and denote by $C^r(P,\mathbf{v})$. Note that by following the steps described above, and by Lemma~\ref{lem:complexity}, given as input $P$, $\mathbf{v}$, $\prec$ and $r$ we can compute in polynomial time in the encoding length of $P$ and in $r$ an inequality description of $C^r(P,\mathbf{v})$ whose encoding length is polynomial in the encoding length of $P$ and in $r$.

We call the subgraph of the graph of the $r$-cyclic siloing of a polytope $P$ induced by all vertices not in the original polytope $P$ the \textbf{cyclic silo}. We also refer to the set of vertices $\{\mathbf{u}_{1,1},\ldots,\mathbf{u}_{d,1}\}$ of the cyclic silo as the \textbf{ground layer} of the cyclic silo.

Before proceeding to analyze the distances between vertices on the $r$-cyclic siloing of a polytope in more detail, we record the following observation which follows directly from our previous lemmas and the construction described above.

\begin{remark}\label{rem:record}
\noindent
\begin{itemize}
    \item For every $(i,j)\in [d]\times[rd]$ we have that $\mathbf{u}_{i,j}$ and $\mathbf{v}_{i,j}$ are vertices in the cyclic silo of $C^r(P,\mathbf{v})$. 
    \item For every $(i,j)\in [d]\times [rd]$, we have that $\mathbf{v}_{i,j-1}$ and $\mathbf{u}_{i,j}$ are adjacent vertices of the polytope $C^r(P,\mathbf{v})$.
    \item For every $(i,j)\in [d]\times [rd]$, we have that $\mathbf{u}_{i,j}$ and $\mathbf{v}_{i,j}$ have distance at most $d-2$ in the cyclic silo. Additionally, if $i\neq \overline{j+1}$, then $\mathbf{u}_{\overline{j},j}$ and $\mathbf{v}_{i,j}$ have distance at most $d-2$ in the cyclic silo.
    \item Any two vertices in the ground layer have distance at most $3$ in the cyclic silo.
\end{itemize}
\end{remark}
\begin{proof}
\noindent
\begin{itemize}
    \item Consider any $(i,j)\in [d]\times [rd]$. Then by definition of the process for building the $r$-cyclic siloing of $P$, we have that $\mathbf{u}_{i,j}$ and $\mathbf{v}_{i,j}$ are vertices of the silo $C_j$ constructed during the process distinct from the peak $\mathbf{v}_j$ of $C_j$. In the process, the final polytope $C^r(P,\mathbf{v})=C_{rd}$ arises from $C_j$ by repeatedly siloing vertices, starting with $\mathbf{v}_j$ and then continuing always by siloing \emph{new} vertices created in the previous siloing step. In particular, throughout the rest of the process, no vertex of $C_j$ except $\mathbf{v}_j$ gets siloed, and hence all vertices of $C_j$ distinct from $\mathbf{v}_j$ remain vertices of $C^r(P,\mathbf{v})$. This includes $\mathbf{u}_{i,j}$ and $\mathbf{v}_{i,j}$, confirming the statement claimed in the first item. 
    \item By the same argument we can observe that for every $j\in \{0,\ldots,rd\}$ all adjacencies between vertices of $C_j$ distinct from $\mathbf{v}_j$ remain intact in the final polytope $C^r(P,\mathbf{v})$. Hence, for the second item it suffices to verify that $\mathbf{v}_{i,j-1}$ and $\mathbf{u}_{i,j}$ are adjacent on the polytope $C_j=S(C_{j-1},\mathbf{v}_{j-1},\prec_{j-1})$. By our description of the process, we have that $\mathbf{v}_{i,j-1}$ is the unique neighbor of $\mathbf{v}_{j-1}$ on $C_{j-1}$ whose feasible basis includes $\{b_{1,j-1},\ldots,b_{d,j-1}\}\setminus \{b_{i,j-1}\}$. By definition of $\prec_{j-1}$, this means that $\mathbf{v}_{i,j-1}$ corresponds to the vertex $\mathbf{s}_{\overline{i-j+1}}$ in the notation of Corollary~\ref{cor:d-2paths} (applied to $C_{j-1}$ and $\prec_{j-1}$ instead of $P$ and $\prec$). Hence, it follows from the first item of Corollary~\ref{cor:d-2paths} that $\mathbf{v}_{i,j-1}$ is adjacent to $\phi_j(\overline{i-j+1},1)$ on $C_j$ if $i\neq \overline{j}$ and to $\phi_j(1,2)$ on $C_j$ if $i=\overline{j}$. By definition of $\mathbf{u}_{i,j}$ in the process, it follows that indeed $\mathbf{v}_{i,j-1}$ is adjacent to $\mathbf{u}_{i,j}$ on $C_j$, as desired.
    \item Let $(i,j)\in [d]\times [rd]$. Recall that $C_{j}=S(C_{j-1},\mathbf{v}_{j-1},\prec_{j-1})$, where $\prec_j$ is the linear order on the elements of the feasible basis representing $\mathbf{v}_{j-1}$ in $C_{j-1}$, defined as 
$$b_{\overline{j},j-1}\prec_{j-1} b_{\overline{j+1},j-1}\prec_{j-1} \cdots \prec_{j-1} b_{\overline{j+d-1},j-1}.$$ Recall that $\phi_{j}$ denotes the isomorphism from $G_d$ to $H_{j}$ as given by Lemma~\ref{lem:isomorphism}, and that $\mathbf{u}_{i,j}=\phi_{j}(\overline{i-j+1},1)$ for $i\in [d]\setminus\{\overline{j}\}$ as well as $\mathbf{u}_{\overline{j},j}=\phi_{j}(1,2)$ by definition in the process. Similarly, we have $\mathbf{v}_{i,j}=\phi_{j}(\overline{i-j+1},d)$ for all $i\in [d]\setminus \{\overline{j-1}\}$ and $\mathbf{v}_{\overline{j-1},j}=\phi_{j}(d,d-1)$.  

Hence, for each $i\in [d]$, the vertices $\mathbf{u}_{i,j}$ and $\mathbf{v}_{i,j}$ correspond exactly to the vertices $\mathbf{u}_{\overline{i-j+1}}$ and $\mathbf{v}_{\overline{i-j+1}}$ in the notation of Corollary~\ref{cor:d-2paths}, applied to $C_{j}=S(C_{j-1},\mathbf{v}_{j-1},\prec_{j-1})$ in place of $S(P,\mathbf{v},\prec)$. It thus follows by the second item of Corollary~\ref{cor:d-2paths} that for each $i\in [d]$ there exists a path of length at most $d-2$ on $C_{j}$ from $\mathbf{u}_{i,j}$ to $\mathbf{v}_{i,j}$, and from $\mathbf{u}_{\overline{j},j}$ to $\mathbf{v}_{i,j}$, provided that $\overline{i-j+1}\neq 2$, i.e., $i\neq \overline{j+1}$. Moreover, all vertices of these paths are distinct from the peak of $C_{j}$. As we argued in the first and second item of this proof, this means that these paths also exist in the cyclic silo of $C^r(P,\mathbf{v})$. This establishes the statement claimed in the third item of the remark and concludes its proof.
\item For the fourth statement claimed in the remark, note that in the language of Corollary~\ref{cor:d-2paths} $\mathbf{u}_{1,1},\ldots,\mathbf{u}_{1,d}$ correspond exactly to the vertices $\mathbf{u}_1,\ldots,\mathbf{u}_d$ in the first silo $C_1=S(C_0,\mathbf{v}_0,\prec_0)$ constructed in the process. Hence, by the fifth item of that corollary, we have that any two of $\mathbf{u}_{1,1},\ldots,\mathbf{u}_{1,d}$ can be connected by a path of length at most $3$ in the graph of $C_1$ which does not use the peak $\mathbf{v}_1$. By what we observed before, any such path also forms a path in the cyclic silo of $C^r(P,\mathbf{v})$, and so we obtain the claimed statement.
\end{itemize}
\end{proof}
Our next lemma is a key technical step towards the proof of our Theorem~\ref{thm:combodiam}. It gives an upper bound on the distances between certain pairs of vertices in the cyclic silo.
\begin{lem}
\label{lem:silodiameter}
Let $P$ be a simple $d$-dimensional polytope, $\mathbf{v}$ a vertex of $P$ and let $r\ge 3$ be an integer. Then for every pair of vertices $\mathbf{s},\mathbf{t}$ of the the cyclic silo, at least one of which lies in the ground layer, we have that their distance in the cyclic silo is at most $rd(d-1)$. In particular, the diameter of the cyclic silo is at most $2rd(d-1)$. Furthermore, the distance on $C^r(P,\mathbf{v})$ from the peak $\mathbf{v}_{rd}$ to any vertex of $P$ distinct from $\mathbf{v}$ is at least $rd(d-1)+1$.
\end{lem}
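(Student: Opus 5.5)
Three things need proving: the upper bound from the ground layer, the diameter bound (which follows from it), and the lower bound from the peak. For the first, I would organize the cyclic silo into the layers $H_1,\ldots,H_{rd}$, where $H_j$ consists of the non-peak new vertices created when forming $C_j$, and finally the peak $\mathbf{v}_{rd}$.

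\textbf{Upper bound.} By Remark~\ref{rem:record}, any two ground-layer vertices are within distance $3$. So it suffices to fix one ground-layer vertex, say $\mathbf{u}_{\overline{1},1}=\mathbf{u}_{1,1}$, show every vertex of the cyclic silo is within $rd(d-1)-3$ of it, and then pay $3$ more. I would build the basic climbing path inductively from Remark~\ref{rem:record}. From $\mathbf{u}_{i,j}$ one reaches $\mathbf{v}_{i,j}$ in $d-2$ steps, and then $\mathbf{u}_{i,j+1}$ in one more step. Hence any $\mathbf{u}_{i,j}$ reaches $\mathbf{u}_{i,j'}$ in $(j'-j)(d-1)$ steps, and the peak from $\mathbf{v}_{i,rd}$ in one step. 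This naive estimate gives roughly $rd(d-1)+1+3$ from the ground layer to the peak, which is slightly too large. This is exactly where the cyclic ordering must be used.

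The flexibility in Remark~\ref{rem:record} is that from $\mathbf{u}_{\overline{j},j}$ one reaches every $\mathbf{v}_{i,j}$ with $i\neq\overline{j+1}$ in $d-2$ steps. So at layer $j$, a path that currently sits at index $\overline{j}$ may switch to any index other than $\overline{j+1}$ for free, and in particular to index $\overline{j+2}$. That index is again ``special'' two layers later, so switching repeatedly lets a path starting at a cheap point end at any desired $\mathbf{v}_{i,rd}$.

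For a general target $\mathbf{t}\in H_j$, the last item of Corollary~\ref{cor:d-2paths} says $\mathbf{t}$ is within $d-2$ of some $\mathbf{u}_{i,j}$. I would then bound the distance from the ground layer to $\mathbf{u}_{i,j}$ by $(j-1)(d-1)$ plus the switching cost, and the total by $rd(d-1)$ using $j\le rd$. Since paths within $H_j$ toward $\mathbf{u}$'s can be reversed, the accounting goes
\[
\text{ground}\to \mathbf{u}_{i,j}\to \mathbf{t}\le 3+(j-1)(d-1)+(d-2)\le rd(d-1)
\]
for all $j\le rd$, once $r\ge3$ absorbs the constant. The peak is the tight case. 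Starting from the ground vertex $\mathbf{u}_{\overline1,1}$, I would exploit the $d-2$ flexibility so that the $+3$ ground-layer switch is unnecessary: choose the starting ground vertex equal to the index one wants. The main obstacle I anticipate is this precise bookkeeping for the peak and for the top layers, making the additive constants vanish exactly. The diameter claim then follows from the triangle inequality through a ground-layer vertex.

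\textbf{Lower bound.} I would assign each vertex of $C^r(P,\mathbf{v})$ a ``level'' and argue that every edge changes the level by at most $1$. Let the level be determined by the feasible basis: for a vertex of $H_j$, the number of inequalities in its basis introduced during silos $1,\ldots,j$, counted as $(j-1)(d-1)$ plus the $\mathbf{y}$-support of its monomial minus $1$. Original vertices get level $0$ and the peak gets level $rd(d-1)+1$. Since an edge exchanges a single inequality, and inequalities of earlier silos are dropped progressively, the level changes by at most one per step (the analogue of the support argument in Lemma~\ref{lem:partitionreduction}). The check is that bases in $H_j$ contain the $\mathbf{y}$'s of layer $j$ together with only the $\mathbf{y}$'s of layer $j-1$ that they replace, so exactly $d$ inequalities are tracked at each time. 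This yields distance at least $rd(d-1)+1$ from the peak to every original vertex other than $\mathbf{v}$.
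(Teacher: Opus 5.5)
Your upper-bound accounting breaks in the top layer. The displayed estimate equals $j(d-1)+2$. That is at most $rd(d-1)$ for $j\le rd-1$ (using $d\ge 3$), but it exceeds $rd(d-1)$ at $j=rd$ for every $r$, so $r\ge 3$ does not absorb the constant there. The reduction ``within $rd(d-1)-3$ of one fixed ground vertex, then pay $3$'' cannot be rescued either: by the level count below, vertices such as $\mathbf{v}_{i,rd}$ are at distance at least $rd(d-1)-1$ from every ground vertex.

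For $\mathbf{t}\in H_{rd}$ you need a route from the \emph{given} $\mathbf{s}=\mathbf{u}_{i,1}$ to the \emph{specific} $\mathbf{u}_{i',rd}$ within $d-2$ of $\mathbf{t}$. It must consist of exactly $rd-1$ moves of cost $d-1$, with no ground-layer switch. This is the paper's Claim~1:
\begin{itemize}
\item climb in index $i$ to $\mathbf{u}_{i,i}$, where $i$ is the special index;
\item if $i'\neq\overline{i+1}$, switch to $\mathbf{u}_{i',i+1}$;
\item otherwise switch to $\overline{i+2}$, climb one layer, and switch to $\overline{i+1}$, landing in layer $i+3$ (this uses $d\ge 3$).
\end{itemize}
Since $rd\ge d+3$, this gives total length $rd(d-1)-1$. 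You describe precisely this double switch, but you attach it to the peak, where it is not needed. The peak is adjacent to \emph{every} $\mathbf{v}_{i,rd}$, so climbing straight up in the index of $\mathbf{s}$ gives $(rd-1)(d-1)+(d-2)+1=rd(d-1)$ directly.

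For the lower bound, the level function is a genuinely different route from the paper's. The paper inducts on $j$: it uses that $\{\mathbf{v}_{1,j-1},\ldots,\mathbf{v}_{d,j-1}\}$ separates the new silo and that the new peak's basis is disjoint from every old basis. However, your normalization is off by one exactly where it matters. With the ``$-1$'', the ground vertices $\mathbf{u}_{i,1}$ get level $0$, the same as their original neighbours $\mathbf{v}_{i,0}$. The peak's neighbours $\mathbf{v}_{i,rd}$ get level $(rd-1)(d-1)+(d-2)=rd(d-1)-1$. So giving the peak level $rd(d-1)+1$ violates the $1$-Lipschitz property. The consistent value $rd(d-1)$ gives a bound one short of the $rd(d-1)+1$ that Theorem~\ref{thm:diameterprecise} needs.

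Dropping the $-1$ fixes this: use level $0$ on $P$, $(j-1)(d-1)+s$ on a vertex of $H_j$ whose basis contains $s$ inequalities of silo $j$, and $rd(d-1)+1$ on the peak. The Lipschitz property should then be derived from the structural fact you allude to, not from ``an edge exchanges one inequality'' alone. That fact is: a vertex of $H_j$ has its basis inside the inequalities of silos $j-1$ and $j$ (for $j=1$, the basis of $\mathbf{v}$ together with silo $1$), and this persists in $C^r(P,\mathbf{v})$. It has the following consequences:
\begin{itemize}
\item there are no edges from $H_j$ to $H_{j'}$ with $j'\ge j+2$, nor from $P$ to $H_j$ with $j\ge 2$;
\item edges between $P$ and $H_1$ force $s=1$;
\item edges between $H_j$ and $H_{j+1}$ force $s=d-1$ on one side and $s=1$ on the other, i.e.\ levels $j(d-1)$ and $j(d-1)+1$;
\item within a layer, $s$ changes by at most one.
\end{itemize}
With this repair, your argument is a clean global alternative to the paper's induction.
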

\begin{proof}
%There are $rd$ vertices that get siloed by construction. For $i\in [d]$ and $j\in [rd]$ let $u_{i,j}$ denote the $i$th vertex corresponding to an edge incident to the $j$th vertex that gets siloed. Then, by Corollary \ref{cor:d-2paths} there is a path from $u_{1,1}$ to $u_{i,2}$ of length $d-1$ for each $i \neq 2$ and from $u_{i,1}$ to $u_{i,2}$ of length $d-1$ for all $i \neq 1$. More generally, by the construction of cyclic siloing, pause to note that if $i\equiv j \text{ (mod }d)$, then $u_{i,j}$ will have a path of length $d-1$ to $u_{k,j+1}$ whenever $k \not\equiv i+1 \text{ (mod }d)$, and that $u_{i,j}$ will always have a path of length $d-1$ to $u_{i,j+1}$.

% What I wanted to say was: 
Throughout this proof, we will use the same notation for vertices and polytopes as defined in the description of the process for constructing the $r$-cyclic siloing $C^r(P,\mathbf{v})$. 

In the following, let us define an auxiliary graph $\Gamma$ on vertex-set $\{\mathbf{u}_{i,j}|(i,j)\in [d]\times [rd]\}$, where we make $\mathbf{u}_{i,j}$ and $\mathbf{u}_{i',j'}$ adjacent if and only if there is a path of length at most $d-1$ between them in the cyclic silo. It follows from the first three items of Remark~\ref{rem:record} that there is an edge in $\Gamma$ from $\mathbf{u}_{i,j}$ to $\mathbf{u}_{i,j+1}$ for all $(i,j)\in [d]\times [rd-1]$. Additionally, Remark~\ref{rem:record} implies that $\mathbf{u}_{\overline{j},j}$ is adjacent to $\mathbf{u}_{i,j+1}$ in $\Gamma$ provided $i \neq \overline{j+1}$. It will be useful to first prove the following claim.

\textbf{Claim~1.} Let $i,i'\in [d]$ and $j\in [rd]$ be such that $j\ge d+3$. Then $\mathbf{u}_{i,1}$ and $\mathbf{u}_{i',j}$ have distance at most $(d-1)(j-1)$ in the cyclic silo.
\begin{claimproof}
By definition of the graph $\Gamma$, it clearly suffices to show that the distance between $\mathbf{u}_{i,1}$ and $\mathbf{u}_{i',j}$ in $\Gamma$ is at most $j-1$. To do so, we will construct a path from $\mathbf{u}_{i,1}$ to $\mathbf{u}_{i',j}$ in $\Gamma$ where whenever we move along an edge of the path, the second index of the current vertex increases by exactly $1$. Clearly, such a path will always have the desired length of $j-1$.

Suppose first that $i'\neq \overline{i+1}$.
Starting at $\mathbf{u}_{i,1}$, we can then move in $\Gamma$ to $\mathbf{u}_{i,i}$ in $i-1$ steps by successively increasing the second index. Next, we can move to $\mathbf{u}_{i',i+1}$ in one step. Finally, we again increase the second index successively until arriving at the desired vertex $\mathbf{u}_{i',j}$ along a path of the desired in type in $\Gamma$. This is possible since $i+1\le d+1\le j$.

Next, suppose $i'=\overline{i+1}$. As before, starting at $\mathbf{u}_{i,1}$ we first successively increase the second index to reach $\mathbf{u}_{i,i}$. We then move from $\mathbf{u}_{i,i}$ to $\mathbf{u}_{\overline{i+2},i+1}$ in one step and then to $\mathbf{u}_{\overline{i+2},i+2}$ in another step. Since $\overline{i+1}\neq \overline{i+3}$ (here we use $d\ge 3$), we can next move to $\mathbf{u}_{\overline{i+1}, i+3}$. Finally we successively increase the second coordinate until reaching $\mathbf{u}_{\overline{i+1},j}=\mathbf{u}_{i',j}$ via a path in $\Gamma$ of the desired form. This is possible since $i+3\le d+3\le j$.

Having found the desired path in $\Gamma$ of length $j-1$ in both cases, we may conclude the proof.
\end{claimproof}
Next we want to show the desired upper bound on the distance between pairs $\mathbf{s},\mathbf{t}$ of vertices in the cyclic silo at least one of which belongs to the ground layer.

We first prepare some useful setup. Recall that for $j\in [rd]$ we denote by $H_j$ the subgraph of the graph of $C_j$ induced by all vertices distinct from the peak $\mathbf{v}_j$ of $C_j$ that are not vertices of $C_{j-1}$. Pause to note that each $H_j$ is in fact a subgraph of the cyclic silo, and that the vertices of the cyclic silo partition into the disjoint sets of vertices $V_1:=V(H_1),\ldots,V_{rd}:=V(H_{rd})$ and $V_{rd+1}:=\{\mathbf{v}_{rd}\}$. Furthermore, note that each graph $H_j$ is isomorphic to the $d$-th silo graph by Lemma~\ref{lem:isomorphism}. Additionally, it follows straightforwardly from the fourth item of Corollary~\ref{cor:d-2paths} and from the definition of the vertices $\mathbf{u}_{i,j}$ in the process of constructing the $r$-cyclic siloing that for each $j\in [rd]$, every vertex $\mathbf{w}$ in $H_j$ has distance at most $d-2$ from \emph{some} vertex in $\{\mathbf{u}_{1,j},\ldots,\mathbf{u}_{d,j}\}$.

So let now a pair $\mathbf{s},\mathbf{t}$ of vertices of the cyclic silo be given to us such that $\mathbf{s}$ belongs to the ground layer. Let $i\in [d]$ be such that $\mathbf{s}=\mathbf{u}_{i,1}$ and let $j\in [rd+1]$ be the unique index such that $\mathbf{t}\in V_{j}$.

Suppose first that $j\le rd$. Then $\mathbf{t}\in V(H_j)$. By our above remark, there then exists some $i'\in [d]$ such that $\mathbf{t}$ has distance at most $d-2$ from $\mathbf{u}_{i',j}$ in $H_j$, and hence in the cyclic silo.

If $j\ge d+3$, then by Claim~1 we have that $\mathbf{s}=\mathbf{u}_{i,1}$ and $\mathbf{u}_{i',j}$ have distance at most $(d-1)(j-1)$ in the cyclic silo. By the triangle inequality, we then find that $\mathbf{s}$ and $\mathbf{t}$ have distance at most $(d-1)(j-1)+d-2=j(d-1)-1< rd(d-1)$ in the cyclic silo, as desired. 

On the other hand, if $j\le d+2$, then by successively decreasing the second index we can see that $\mathbf{u}_{i',j}$ has a path of length at most $j-1$ in $\Gamma$ to $\mathbf{u}_{i',1}$. In particular, the distance between $\mathbf{u}_{i',j}$ and $\mathbf{u}_{i',1}$ in the cyclic silo is at most $(d-1)(j-1)$. Furthermore, by the last item of Remark~\ref{rem:record}, we have that $\mathbf{s}=\mathbf{u}_{i,1}$ and $\mathbf{u}_{i',1}$ have distance at most $3$ in the cyclic silo. Altogether, it follows from the triangle inequality that $\mathbf{s}$ and $\mathbf{t}$ have distance at most $3+(d-1)(j-1)+(d-1)=j(d-1)+3\le (d+2)(d-1)+3\le rd(d-1)$, where we used our assumption $r\ge 3$ as well as our general assumption that $d\ge 3$, in the last step. 

Hence, it only remains to consider the case that $j=rd+1$, i.e., $\mathbf{t}=\mathbf{v}_{rd}$. Then, since by definition $\mathbf{v}_{rd}$ is the peak of the silo $C_{rd}=S(C_{rd-1},\mathbf{v}_{rd-1},\prec_{rd-1})$, it follows by the third item of Corollary~\ref{cor:d-2paths} that each of the vertices $\mathbf{v}_{rd,1},\ldots,\mathbf{v}_{rd,d}$ are the neighbors of $\mathbf{t}=\mathbf{v}_{rd}$ on $C_{rd}$ and hence in the cyclic silo. Further, by the third item of Remark~\ref{rem:record} we have that $\mathbf{u}_{i,rd}$ has distance at most $d-2$ from $\mathbf{v}_{i,rd}$ in the cyclic silo, and hence distance at most $d-1$ from $\mathbf{v}_{rd}=\mathbf{t}$ in the cyclic silo. By Claim~1 or simply by walking in the graph $\Gamma$, we can also see that $\mathbf{s}=\mathbf{u}_{i,1}$ has distance at most $(d-1)(rd-1)$ from $\mathbf{u}_{i,1}=\mathbf{s}$ in the cyclic silo. Hence, by the triangle inequality, we find that $\mathbf{s}$ and $\mathbf{t}$ have distance at most $(d-1)(rd-1)+(d-1)=rd(d-1)$ in the cyclic silo. This concludes the proof of the first part of the lemma.  

It remains to prove that the distance on $C_{rd}=C^r(P,\mathbf{v})$ from the peak $\mathbf{v}_{rd}$ to any vertex of $P$ distinct from $\mathbf{v}$ is at least $rd(d-1)+1$. We will do this by proving the following, more general, statement by induction.

\textbf{Claim~2.} For every $j\in \{0,1,\ldots,rd\}$ and every $i\in [d]$, the distance on $C_j$ between $\mathbf{v}_j$ and any vertex of $P$ distinct from $\mathbf{v}$ is at least $j(d-1)+1$.
\begin{claimproof}
The induction basis $j=0$ is obvious, since $C_0=P$ and $\mathbf{v}_0=\mathbf{v}$. So suppose that for some $j\in [rd]$ we already proved that the distance on $C_{j-1}$ from $\mathbf{v}_{j-1}$ to any vertex of $P$ distinct from $\mathbf{v}$ is at least $(j-1)(d-1)+1$, and let us prove the analogous claim with $j-1$ replaced by~$j$. Let $R$ denote a shortest path from $\mathbf{v}_j$ to some vertex $\mathbf{w}$ of $P$ distinct from $\mathbf{v}$ on $C_j$. Recall that $C_j=S(C_{j-1},\mathbf{v}_{j-1},\prec_{j-1})$, and pause to note that by definition of the siloing operation, the members of the set $N:=\{\mathbf{v}_{1,j-1},\ldots,\mathbf{v}_{d,j-1}\}$ are the only vertices of $C_{j-1}$ who have a neighbor on $C_j$ not in $C_{j-1}$. Hence, $N$ forms a separator between  $\mathbf{w}$ and $\mathbf{v}_j$ in the graph of $C_j$. In particular, $R$ must contain at least one vertex in $N$. Let $\mathbf{n}$ denote the vertex in $N$ which we meet first when traversing $R$ from $\mathbf{w}$ to $\mathbf{v}_j$. Note that all vertices of the segment of $R$ from $\mathbf{w}$ to $\mathbf{n}$ must be vertices of $C_{j-1}$, for otherwise there would be a vertex in $N$ along the path that is closer to $\mathbf{w}$ than $\mathbf{n}$. Hence, we find that the length of the segment of $R$ from $\mathbf{w}$ to $\mathbf{n}$ is at least $d_{C_{j-1}}(\mathbf{w},\mathbf{n})\ge d_{C_{j-1}}(\mathbf{w},\mathbf{v}_{j-1})-d_{C_{j-1}}(\mathbf{n},\mathbf{v}_{j-1})=d_{C_{j-1}}(\mathbf{w},\mathbf{v}_{j-1})-1\ge (j-1)(d-1)$, where we used the fact that $\mathbf{n}\in N$ is a neighbor of $\mathbf{v}_{j-1}$ on $C_{j-1}$ in the second to last step, and the inductive assumption in the last step. Now, let us consider the segment of $R$ from $\mathbf{n}$ to $\mathbf{v_j}$. Since $\mathbf{n}$ is a vertex of $C_{j-1}$ and $\mathbf{v}_j$ the peak of a siloing performed on $C_{j-1}$, we have that the feasible basis representing $\mathbf{v}_j$ on $C_j$ is disjoint from that of $\mathbf{n}$ (compare also our analogous remark directly after the proof of Lemma~\ref{lem:isomorphism}). Hence, the distance between $\mathbf{n}$ and $\mathbf{v}_j$ on $C_j$ and thus the length of the segment of $R$ from $\mathbf{n}$ to 
$\mathbf{v}_j$ must be at least $d$. It now follows that $d_{C_j}(\mathbf{w},\mathbf{v}_j)=|R|\ge (j-1)(d-1)+d=j(d-1)+1$, as desired. This established the inductive claim and concludes the proof.
\end{claimproof}
\end{proof}

With Lemma~\ref{lem:silodiameter}, the key technical point of our argument has now been established. The construction we use for our hardness reduction from \textsc{$k$-Distance on Simple Polytopes} to \textsc{Diameter of simple polytopes} will look as follows. We are given as input a $d$-dimensional simple polytope $P$ and the two vertices $\mathbf{u}, \mathbf{v}$ of a simple input polytope $P$ for which we want to decide if their distance on $P$ is at most some input number $k\in \mathbb{N}$. We will then apply, for some large enough choice of $r\in \mathbb{N}$, the $r$-cyclic siloing operation both at $\mathbf{u}$ and then at $\mathbf{v}$. Finally, our goal in the following will be to show that the diameter of the arising simple polytope $Q$ equals $d_P(\mathbf{u},\mathbf{v})+2rd(d-1)$, and hence we will be able to determine the distance between $\mathbf{u}$ and $\mathbf{v}$ on $P$ by computing the diameter of $Q$, establishing the desired reduction.

As a first step we prove the following lemma, bounding the effect that $r$-cyclic siloing can have on the distances between the original vertices of the polytope we apply it to.

\begin{lem}
\label{lem:nonsiloedvertices}
Let $P$ be a simple polytope and $\mathbf{v}$ a vertex of $P$. Let $\mathbf{s}, \mathbf{t}$ be vertices of $C^r(P,\mathbf{v})$ not contained in the cyclic silo. In particular, $\mathbf{s}$ and $\mathbf{t}$ are also vertices of $P$. Then 
\[d_{C^r(P,\mathbf{v})}(\mathbf{s},\mathbf{t}) \leq d_{P}(\mathbf{s},\mathbf{t}) + 3.\]
\end{lem}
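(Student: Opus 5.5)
Take a shortest path $R$ from $\mathbf{s}$ to $\mathbf{t}$ in the graph of $P$ and convert it into a walk in $C^r(P,\mathbf{v})$ that is at most $3$ steps longer.

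First, record which edges of $P$ survive in $C^r(P,\mathbf{v})$. Every truncation in the construction of $C^r(P,\mathbf{v})$ cuts off a single vertex and no other vertex of the current polytope. The first truncation is at $\mathbf{v}$, and all later ones are at vertices not belonging to $P$. Hence every vertex of $P$ other than $\mathbf{v}$ remains a vertex of $C^r(P,\mathbf{v})$, and every edge of $P$ between two such vertices remains an edge. This is the same reasoning as in the first two items of Remark~\ref{rem:record}: a truncation at $\mathbf{w}$ only affects the edges incident to $\mathbf{w}$. So if $R$ avoids $\mathbf{v}$, it is already a path in $C^r(P,\mathbf{v})$ and the bound holds with $+0$.

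Otherwise $\mathbf{v}$ is an interior vertex of $R$, since $\mathbf{s},\mathbf{t}\neq\mathbf{v}$. Because $R$ is shortest, it visits $\mathbf{v}$ exactly once, entering and leaving through two distinct neighbors $\mathbf{v}_{a,0}$ and $\mathbf{v}_{b,0}$ of $\mathbf{v}$ on $P$. Replace the two-edge segment $\mathbf{v}_{a,0},\mathbf{v},\mathbf{v}_{b,0}$ by a detour through the ground layer:
\begin{itemize}
\item by the second item of Remark~\ref{rem:record} with $j=1$, $\mathbf{v}_{a,0}$ is adjacent to $\mathbf{u}_{a,1}$ and $\mathbf{v}_{b,0}$ is adjacent to $\mathbf{u}_{b,1}$ in $C^r(P,\mathbf{v})$;
\item by the last item of Remark~\ref{rem:record}, $\mathbf{u}_{a,1}$ and $\mathbf{u}_{b,1}$ are at distance at most $3$ in the cyclic silo.
\end{itemize}
The detour therefore has length at most $1+3+1=5$ and replaces a segment of length $2$. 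This gives $d_{C^r(P,\mathbf{v})}(\mathbf{s},\mathbf{t})\le d_P(\mathbf{s},\mathbf{t})+3$.

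The only point that needs care is that these adjacencies and the length-$3$ paths among ground-layer vertices persist after all the later siloings. This is exactly what Remark~\ref{rem:record} asserts, since those later siloings only ever truncate peaks. It remains to check the indexing: the neighbor of $\mathbf{v}$ whose feasible basis omits $b_{a,0}$ is attached to $\mathbf{u}_{a,1}$, including the special case $a=\overline{1}$, where $\mathbf{u}_{1,1}=\phi_1(1,2)$. This also follows from the proof of Remark~\ref{rem:record}.
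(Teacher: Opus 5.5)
Your proposal is correct and matches the paper's proof step for step. A shortest path that avoids $\mathbf{v}$ survives unchanged. Otherwise you replace the segment $\mathbf{v}_{a,0},\mathbf{v},\mathbf{v}_{b,0}$ by a detour of length at most $5$ through $\mathbf{u}_{a,1}$ and $\mathbf{u}_{b,1}$, using the second and fourth items of Remark~\ref{rem:record}. Your added checks on edge persistence and on the indexing case $\mathbf{u}_{1,1}=\phi_1(1,2)$ are accurate and only make explicit what the paper leaves implicit.
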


\begin{proof}
Suppose first that some shortest path from $\mathbf{s}$ to $\mathbf{t}$ on $P$ does not pass through $\mathbf{v}$. Then it is still a path in $C^r(P,\mathbf{v})$, so $d_{C^r(P,\mathbf{v})}(\mathbf{s},\mathbf{t}) \le d_{P}(\mathbf{s},\mathbf{t})$.

Next suppose that every shortest path between $\mathbf{s}$ and $\mathbf{t}$ on $P$ does go through $\mathbf{v}$. Pick one such shortest path $R$, and note that it uses precisely two edges incident to $\mathbf{v}$. Recall that in the process for constructing the $r$-cyclic siloing $C^r(P,\mathbf{v})$ of $P$ at $\mathbf{v}$, the vertices $\mathbf{v}_{1,0},\ldots,\mathbf{v}_{d,0}$ denote the neighbors of $\mathbf{v}$ on $P$. Hence, $R$ must use vertices $\mathbf{v}_{i,0},\mathbf{v},\mathbf{v}_{i',0}$ in this order, for some distinct $i,j\in [d]$. By the second item of Remark~\ref{rem:record}, we have that on the polytope $C^r(P,\mathbf{v})$, the vertices $\mathbf{v}_{i,0}, \mathbf{u}_{i,1}$ and $\mathbf{v}_{i',0}, \mathbf{u}_{i',1}$ are adjacent. Furthermore, by the fourth item of Remark~\ref{rem:record}, we have that there exists a path $R'$ on $C^r(P,\mathbf{v})$ between $\mathbf{u}_{i,1}$ and $\mathbf{u}_{i',1}$ of length at most $3$, all whose vertices are in the cyclic silo of $C^r(P,\mathbf{v})$ (and hence, in particular $R'$ shares no vertices with $R$). We can now see that by replacing the subpath $\mathbf{v}_{i,0},\mathbf{v},\mathbf{v}_{i',0}$ of $R$ of length two by the path of length at most $5$ in $C^r(P,\mathbf{v})$ obtained as the union of the edges $\mathbf{v}_{i,0}\mathbf{u}_{i,1}$, $\mathbf{v}_{i',0}\mathbf{u}_{i',1}$ and the path $R'$ defined above, we obtain a path between $\mathbf{s}$ and $\mathbf{t}$ on $C^r(P,\mathbf{v})$. Hence, we have $d_{C^r(P,\mathbf{v})}(\mathbf{s},\mathbf{t})\le d_P(\mathbf{s},\mathbf{t})-2+5=d_P(\mathbf{s},\mathbf{t})+3$, as desired. This concludes the proof.
\end{proof}

With this auxiliary statement at hand, we are now finally in the position to prove the desired formula for the diameter of the polytope $Q$ obtained after $r$-cyclic siloing at two given vertices $\mathbf{u}, \mathbf{v}$ of a polytope $P$, as we mentioned above.

\begin{theorem}\label{thm:diameterprecise}
Let $P$ be a simple polytope with vertices $\mathbf{u}\neq \mathbf{v}$. Let $Q$ be the result of applying an $r$-cyclic siloing at both $u$ and $v$, where $r \geq \max(\mathrm{diam}(P),6)$. Formally, $Q:=C^r(C^r(P,\mathbf{u}),\mathbf{v})$. Then
\[\mathrm{diam}(Q) = d_{P}(\mathbf{u},\mathbf{v}) + 2rd(d-1).\]
\end{theorem}

\begin{proof}
In the following, let us denote by $W_\mathbf{u}, W_\mathbf{v}$ the sets of vertices in the cyclic silos corresponding to $\mathbf{u}$ and $\mathbf{v}$, respectively. 

Let us first show that $\mathrm{diam}(Q)\le d_P(\mathbf{u},\mathbf{v})+2rd(d-1)$, i.e. that every given pair $\mathbf{s},\mathbf{t}$ of vertices of $Q$ satisfies $d_Q(\mathbf{s},\mathbf{t})\le d_P(\mathbf{u},\mathbf{v})+2rd(d-1)$.

Suppose first that none of $\mathbf{s},\mathbf{t}$ lie in $W_\mathbf{u}\cup W_\mathbf{v}$. Then by Lemma~\ref{lem:nonsiloedvertices}, we have
\[d_{Q}(\mathbf{s},\mathbf{t}) \leq d_{P}(\mathbf{s},\mathbf{t}) + 6 \leq \mathrm{diam}(P) +6 \leq 2r \leq 2rd(d-1)\le d_P(\mathbf{u},\mathbf{v})+2rd(d-1),\] as desired.

Next, suppose that exactly one of $\mathbf{s},\mathbf{t}$ lie in $W_{\mathbf{u}}\cup W_{\mathbf{v}}$. W.l.o.g. (possibly after relabeling), we may then assume $\mathbf{s}\notin W_\mathbf{u}\cup W_{\mathbf{v}}$, $\mathbf{t}\in W_\mathbf{v}$. 

By Lemma~\ref{lem:nonsiloedvertices}, we then have that $d_{C^r(P,\mathbf{u})}(\mathbf{s},\mathbf{v})\le d_P(\mathbf{s},\mathbf{v})+3\le \mathrm{diam}(P)+3$. In particular, this implies that there exists a path $R$ (not traversing $\mathbf{v}$) of length at most $\mathrm{diam}(P)+2$ from $\mathbf{s}$ to a neighbor $\mathbf{v}'$ of $\mathbf{v}$ on $C^r(P,\mathbf{u})$. By the second item of Remark~\ref{rem:record}, applied with $j=1$, it now follows that $\mathbf{v}'$ has a neighbor in the ground layer of $W_{\mathbf{v}}$. Since also $\mathbf{t}\in W_{\mathbf{v}}$, Lemma~\ref{lem:silodiameter} implies that there exists a path of length at most $rd(d-1)$ between this neighbor of $\mathbf{v'}$ and $\mathbf{t}$ on $Q$, all whose vertices lie in $W_\mathbf{v}$. It now follows that $d_Q(\mathbf{s},t)\le |R|+1+rd(d-1)\le \mathrm{diam}(P)+3+rd(d-1)\le 2r+rd(d-1)\le 2rd(d-1)\le d_P(\mathbf{u},\mathbf{v})+2rd(d-1)$, as desired.

Finally, suppose that both of $\mathbf{s},\mathbf{t}$ lie in $W_\mathbf{u}\cup W_\mathbf{v}$. If both lie in the same cyclic silo, then by Lemma~\ref{lem:silodiameter} we have $d_Q(\mathbf{u},\mathbf{v})\le 2rd(d-1)\le d_P(\mathbf{u},\mathbf{v})+2rd(d-1)$, as desired. Hence, moving on we may assume that they lie in different cyclic silos, so we may w.l.o.g. assume $\mathbf{s}\in W_\mathbf{u}, \mathbf{t}\in W_\mathbf{v}$. Now, let $R$ be a path of length $d_P(\mathbf{u},\mathbf{v})-2$ on $P$ connecting a neighbor $\mathbf{u'}$ of $\mathbf{u}$ on $P$ to a neighbor $\mathbf{v}'$ of $\mathbf{v}$ on $P$, and note that all vertices of $R$ are distinct from $\mathbf{u}$ and $\mathbf{v}$. In particular, $R$ is also a path on $Q$. Again by the second item of Remark~3.8, applied with $j=1$, we have that $\mathbf{u}'$ is adjacent on $Q$ to a vertex $\mathbf{u}''$ in the ground layer of $W_\mathbf{u}$, and that $\mathbf{v}'$ is adjacent on $Q$ to a vertex $\mathbf{v}''$ in the ground layer of $W_\mathbf{v}$. Since $\mathbf{s}\in W_\mathbf{u}$ and $\mathbf{t}\in W_\mathbf{v}$, Lemma~\ref{lem:silodiameter} implies that $\mathbf{u}''$ and $\mathbf{s}$ as well as $\mathbf{v}''$ and $\mathbf{t}$ have distance at most $rd(d-1)$ on $Q$. By the triangle inequality, we may thus conclude that $d_Q(\mathbf{s},\mathbf{t})\le rd(d-1)+1+|R|+1+rd(d-1)\le d_P(\mathbf{u},\mathbf{v})+2rd(d-1)$, as desired. Having covered all the cases, this concludes the proof of the upper bound $\mathrm{diam}(Q)\le d_P(\mathbf{u},\mathbf{v})+2rd(d-1)$.

For the lower bound, let us denote by $\mathbf{p}_\mathbf{u}, \mathbf{p}_\mathbf{v}$ the final peaks of the cyclic silos $W_\mathbf{u}, W_\mathbf{v}$, respectively. We will show that $d_Q(\mathbf{p}_\mathbf{u},\mathbf{p}_\mathbf{v})\ge d_P(\mathbf{u},\mathbf{v})+2rd(d-1)$, which will clearly establish the desired lower bound on $\mathrm{diam}(Q)$. By construction of the $r$-cyclic siloing, we have that the set $N_\mathbf{u}$ of neighbors of $\mathbf{u}$ on $P$ separates $W_\mathbf{u}$ from the rest of the graph of $Q$, and similarly the set $N_\mathbf{v}$ of neighbors of $\mathbf{v}$ on $P$ separates $W_\mathbf{v}$ from the rest of the graph of $Q$. Now consider a shortest path from $\mathbf{p}_\mathbf{u}$ to $\mathbf{p}_\mathbf{v}$ on $Q$. Let $\mathbf{u'}$ denote the last vertex of the path in $N_\mathbf{u}$ when traversing it from $\mathbf{p}_\mathbf{u}$ to $\mathbf{p}_\mathbf{v}$. Furthermore, let $\mathbf{v}'$ denote the first vertex of $N_\mathbf{v}$ we meet when traversing the segment of the path from $\mathbf{u}'$ to $\mathbf{p}_\mathbf{v}$. Note that all vertices in the segment of the path between $\mathbf{u}'$ and $\mathbf{v}'$ must also be vertices of $P$, and hence this segment forms a path in $P$ and has length at least $d_P(\mathbf{u}',\mathbf{v}')\ge d_P(\mathbf{u},\mathbf{v})-d_P(\mathbf{u}',\mathbf{u})-d_P(\mathbf{v}',\mathbf{v})=d_P(\mathbf{u},\mathbf{v})-2$.

Furthermore, it follows directly from Lemma~\ref{lem:silodiameter} that the segment of the path from $\mathbf{p}_\mathbf{u}$ to $\mathbf{u}'\in N_\mathbf{u}$, as well as the segment of the path from $\mathbf{p}_\mathbf{v}$ to $\mathbf{v}'\in N_\mathbf{v}$, both must have length at least $rd(d-1)+1$.

Altogether, this implies that
the total length $d_Q(\mathbf{p}_\mathbf{u},\mathbf{p}_\mathbf{v}$) of the shortest path we considered is at least $(rd(d-1)+1)+(d_P(\mathbf{u},\mathbf{v})-2)+(rd(d-1)+1)=d_P(\mathbf{u},\mathbf{v})+2rd(d-1)$. This is what we wanted to prove, and hence we may conclude the proof of the theorem.
\end{proof}

The following result summarizes our observations made and the auxiliary results we proved so far.

\begin{theorem}
\label{thm:Reduction}
Given as input a simple polytope $P$ in inequality description, a pair of vertices $\mathbf{u}, \mathbf{v}$ of $P$ and a number $r\in \mathbb{N}$ such that $r\ge \max(\mathrm{diam}(P),6)$, one can compute another~simple polytope $Q$ (also in inequality description) as well as a constant $K$ for which the diameter of $Q$ equals $d_P(\mathbf{u}, \mathbf{v}) + K$, in time bounded polynomially in the encoding length of $P$ and in $r$. 
\end{theorem}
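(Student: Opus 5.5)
The plan is to take $Q:=C^r(C^r(P,\mathbf{u}),\mathbf{v})$ and $K:=2rd(d-1)$, where $d$ is the dimension of $P$. Theorem~\ref{thm:diameterprecise} then gives $\mathrm{diam}(Q)=d_P(\mathbf{u},\mathbf{v})+K$ directly, because $r\ge \max(\mathrm{diam}(P),6)$ is exactly its hypothesis. So the only remaining work is computational: we must show that $Q$ and $K$ can be produced in time polynomial in the encoding length $L$ of $P$ and in $r$. If $d<3$, we first reduce to $d\ge 3$ (e.g.\ by taking a product with a simplex, or by simply assuming $d\ge 3$ as throughout this section). Computing $d$ and $K$ is trivial.

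For $Q$, the first step is to find the feasible bases of $\mathbf{u}$ and $\mathbf{v}$. We do this by checking which inequalities are tight, which is possible because the description is irredundant and $P$ is simple. Next we carry out the process defining $C^r(P,\mathbf{u})$. This consists of $rd$ siloings, each of which is $d$ truncations, so $rd^2$ truncations in total. At each truncation the vertex to be truncated is determined by the current data: it is the vertex whose monomial is $\mathbf{x}^{[d]\setminus[k]}\mathbf{y}^{[k]}$ after relabeling by $\prec_{j-1}$. Similarly, the labeling $b_{i,j-1}$ and the order $\prec_{j-1}$ are obtained from the enumeration of the neighbours $\mathbf{v}_{i,j-1}$, and those neighbours are read off as the images $\phi_j(\cdot)$ given by the explicit monomials in Lemma~\ref{lem:isomorphism}. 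All of these objects are explicit feasible bases that can be written down in polynomial time. This is precisely the situation of Lemma~\ref{lem:complexity}: the truncated vertices are revealed one at a time. Applying that lemma with $rd^2\le rL$ truncations yields an inequality description of $C^r(P,\mathbf{u})$ of encoding length $\mathrm{poly}(L,r)$, computed in time $\mathrm{poly}(L,r)$.

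We then apply the same procedure at $\mathbf{v}$. Since $\mathbf{v}\neq\mathbf{u}$, $\mathbf{v}$ is still a vertex of $C^r(P,\mathbf{u})$ with the same feasible basis. This gives $Q$, and a second application of Lemma~\ref{lem:complexity} keeps both the encoding length and the running time at $\mathrm{poly}(\mathrm{poly}(L,r),r)=\mathrm{poly}(L,r)$. Alternatively, both phases can be viewed as a single run of $2rd^2$ revealed truncations.

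The only subtle point is bookkeeping. We must make sure that each revealed vertex, together with its neighbour enumeration, can be computed from the current inequality description in polynomial time. We handle this by tracking feasible bases, that is, labels of inequalities, rather than coordinates, and by noting that the neighbours of a vertex are found via at most $d(m+rd^2)$ basis exchanges. Once that is in place, the statement follows from Theorem~\ref{thm:diameterprecise}.
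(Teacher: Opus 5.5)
Your proposal is correct and is the same argument as the paper, which proves the statement in one line by combining Lemma~\ref{lem:complexity} (computing $Q=C^r(C^r(P,\mathbf{u}),\mathbf{v})$ in time $\mathrm{poly}(L,r)$) with Theorem~\ref{thm:diameterprecise} (giving $\mathrm{diam}(Q)=d_P(\mathbf{u},\mathbf{v})+2rd(d-1)$). Your extra bookkeeping, which identifies each truncated vertex through its feasible basis, only makes explicit what the paper leaves implicit. One small caveat concerns your parenthetical product-with-a-simplex trick for $d<3$: it raises the diameter by one, so you would need $r+1$ in place of $r$ to keep the hypothesis of Theorem~\ref{thm:diameterprecise}, whereas simply assuming $d\ge 3$, as the section does, avoids this.
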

\begin{proof}
This follows directly by combining Lemma~\ref{lem:complexity} and Theorem~\ref{thm:diameterprecise}.
\end{proof}

There is one last issue to consider before proving our main theorem: Since the polynomial Hirsch conjecture remains open, a priori the combinatorial diameter of the input polytope $P$ for \textsc{$k$-Distance on Simple Polytopes} in our attempted hardness reduction may not be polynomially bounded, potentially rendering the time needed to construct the polytope $Q$ in Theorem~\ref{thm:Reduction} superpolynomial. Hence, for the desired reduction to be polynomial, we must ensure that \textsc{$k$-Distance on Simple Polytopes} when restricted to instances $P$ with polynomially bounded diameter still remains \NP-hard. However, this directly follows from the following lemma, showing that the fractional knapsack polytopes we used in our hardness proof for \textsc{$k$-Distance on Simple Polytopes} in the first half of this paper do in fact have linear diameter.
\begin{lem}
\label{lem:knapsackdiameter}
For any choice of $\mathbf{b} \in \mathbb{Z}^{d}_{>0}$, $P_{\mathbf{b}}$ has combinatorial diameter at most $2(d+2)$.
\end{lem}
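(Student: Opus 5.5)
We will show that every vertex of $P_{\mathbf{b}}$ has distance at most $d+2$ from the vertex $\emptyset$, i.e.\ the origin $\mathbf{e}_\emptyset$, which lies in $V_1$ since $\mathbf{w}^\intercal \mathbf{0}=0\le\beta$. The triangle inequality then gives diameter at most $2(d+2)$. We work throughout with the combinatorial encoding of vertices from Lemma~\ref{lem:knapsackvertices} and the edge types of Lemma~\ref{lem:edgetypes}. We use only that $P_{\mathbf{b}}=[0,1]^{d+2}\cap\{\mathbf{w}^\intercal\mathbf{x}\le \beta+1/4\}$ and that the right-hand side $\beta+1/4$ is positive, so the origin lies strictly inside the halfspace.

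\emph{Vertices in $V_1$.} Let $S\in V_1$, so $\mathbf{w}^\intercal\mathbf{e}_S\le\beta$. We remove the elements of $S$ one at a time, each step a type (a) edge. The order matters, since every intermediate set must stay in $V_1$. We first remove all $i\in S$ with $w_i>0$, which only decreases $\mathbf{w}^\intercal\mathbf{e}$. After that, every remaining element has $w_i\le 0$, so the partial sums are $\le 0$ throughout the later removals. Every intermediate set therefore satisfies $\mathbf{w}^\intercal\mathbf{e}\le\beta$ and so lies in $V_1$. This gives a path of length $|S|\le d+2$.

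\emph{Vertices in $V_2$.} Let $(S,k)$ be a vertex, lying on the hypercube edge $[S,S\cup\{k\}]$. Exactly one of $S$ and $S\cup\{k\}$ is in the halfspace. Call that endpoint $T$; it is in $V_1$ and is adjacent to $(S,k)$ by a type (b) edge. This costs one step, after which the $V_1$ bound applies. The naive total is $|T|+1$, which can reach $d+3$ only when $|T|=d+2$, i.e.\ $T=[d+2]$. For our specific $\mathbf{w}$, $\mathbf{w}^\intercal\mathbf{e}_{[d+2]}=2\beta+1/2>\beta+1/4$, so $[d+2]\notin V_1$ and this case never occurs. For a general statement without that assumption, we would instead need to be smarter in this case.

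\emph{Main obstacle.} We must make sure the two bounds combine to at most $d+2$ in every case, and handle the case $|T|=d+2$ cleanly without using special facts about $\mathbf{w}$. If that case needs treatment, the fallback is to note that $(S,k)$ with $T=S\cup\{k\}=[d+2]$ has $|S|=d+1$. We then use type (e)/(d)/(c) edges to walk along vertices of $V_2$, decreasing the support by one per step until we reach some $(S',k')$ adjacent to a small $V_1$ set. If that bookkeeping is messy, we simply settle for the bound $d+3$ from each side, adjusted via the remark above that $[d+2]$ is infeasible. For our polytope this yields the claimed $2(d+2)$, which is all that is needed for the polynomial bound required by Theorem~\ref{thm:Reduction}.
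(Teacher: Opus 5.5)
Your proof is correct and follows essentially the same route as the paper. Both arguments bound the distance from every vertex to $\emptyset$ by $d+2$ using one type (b) edge followed by type (a) removals, and both use $\mathbf{w}^\intercal\mathbf{e}_{[d+2]}=2\beta+1/2>\beta+1/4$ to rule out $T=[d+2]$. Your positive-weights-first removal order coincides here with the paper's ``remove everything but $d+1$, then $d+1$''. The hedging in your final paragraph is unnecessary, since that observation already closes the only problematic case.
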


\begin{proof}
We will show that every vertex in the graph of $P_\mathbf{b}$ has distance at most $d+2$ to the vertex $\emptyset$, which will clearly imply the desired bound on the diameter.

Let us first consider a vertex of the form $(S,i)$ where $S\subseteq [d+2]$ and $i\in [d+2]\setminus S$. Suppose first that $\sum_{i \in S} w_{i} \leq \beta + 1/4$. Then $S$ is a neighbor of $(S,i)$ via an edge of type $(b)$ in Lemma \ref{lem:edgetypes}. Removing all elements of $S$ other than $d+1$ and then finally $d+1$ will lead to the vertex $\emptyset$ after at most $|S| + 1\le (d+1)+1=d+2$ many steps using type (a) moves.

Suppose instead that $\sum_{i \in S} w_{i} \geq \beta + 1/4$. Then move to $S \cup \{i\}$ via a type (b) move and apply the same argument. Notice that since $[d+2]$ is not a vertex, we have $|S|+1=|S\cup \{i\}|\le d+1$ in this situation. Hence, it takes a total of at most $|S| + 2\le (d+1)+1=d+2$ many steps to reach $\emptyset$ from $(S,i)$, as desired. 

For any vertex of the form $T\subseteq [d+2]$, the same decrementing procedure works and takes at most $|T|\le d+2$ many steps to reach vertex $\emptyset$. Since $|S| \leq d+2$ and $|T| \leq d+2$. This concludes the proof.
\end{proof}

\begin{cor}\label{cor:obvious}
    \textsc{$k$-Distance on simple polytopes}, restricted to input instances $P$ of diameter at most $2\dim(P)+4$, is \NP-hard.
\end{cor}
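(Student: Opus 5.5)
The plan is to reuse the reduction from the proof of Theorem~\ref{thm:shortestpathsimple} and check that its output instances already meet the diameter restriction. Given an instance $\mathbf{b}\in\mathbb{Z}_{>0}^d$ of \textsc{Partition with even sum}, we build in polynomial time the polytope $P_\mathbf{b}\subseteq\mathbb{R}^{d+2}$ with its two vertices $(\emptyset,d+2)$ and $([d+1],d+2)$, and we set $k=d+1$. By Lemma~\ref{lem:knapsacksimple}, $P_\mathbf{b}$ is simple and $(d+2)$-dimensional, so $\dim(P_\mathbf{b})=d+2$. By Lemma~\ref{lem:partitionreduction}, the two vertices have distance at most $d+1$ if and only if $\mathbf{b}$ is a yes-instance.

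It then remains to verify the diameter bound. Lemma~\ref{lem:knapsackdiameter} gives
\[
\mathrm{diam}(P_\mathbf{b})\le 2(d+2)=2\dim(P_\mathbf{b}).
\]
This is in particular at most $2\dim(P_\mathbf{b})+4$. Hence every instance produced by the reduction lies in the restricted class, and the restricted problem inherits \NP-hardness from \textsc{Partition with even sum}.

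There is no substantial obstacle. The only point to check is that Lemma~\ref{lem:knapsackdiameter} is stated for arbitrary $\mathbf{b}$, so it covers the reduction instances, which have even sum. The slack of $+4$ in the statement is harmless.
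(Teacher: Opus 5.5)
Your proposal is correct and follows the paper's own proof, which simply combines Lemma~\ref{lem:knapsackdiameter} with the reduction from the proof of Theorem~\ref{thm:shortestpathsimple}. You are also right that $\mathrm{diam}(P_\mathbf{b})\le 2(d+2)=2\dim(P_\mathbf{b})$, so the extra $+4$ in the statement is slack and is not needed.
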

\begin{proof}
    This follows directly from Lemma~\ref{lem:knapsackdiameter} and from our proof of Theorem~\ref{thm:shortestpathsimple}.
\end{proof}

Applying this theorem yields the proof of our second main result:

\begin{proof}[Proof of Theorem \ref{thm:combodiam}]
By Corollary~\ref{cor:obvious}, \textsc{$k$-Distance on Simple Polytopes} restricted to input instances $P$ with diameter at most $2\mathrm{dim}(P)+4$ is \NP-hard. Given an input instance $P,\mathbf{u},\mathbf{v}, k$ of this problem, we then compute $r:=\max(2\mathrm{dim}(P)+4,6)$ and apply Theorem~\ref{thm:Reduction}, using which we can construct, in polynomial time in the encoding length of $P$ and in $r$ (and hence simply in polynomial time in the encoding length of $P$) a simple polytope $Q$ in inequality description and a number $K$ such that $d_P(\mathbf{u},\mathbf{v})=\mathrm{diam}(Q)-K$. Hence, given access to an oracle for \textsc{Diameter of simple polytopes} we can compute the distance between $\mathbf{u}$ and $\mathbf{v}$ on $P$ and hence decide whether $d_P(\mathbf{u},\mathbf{v})\le k$, in polynomial time in the encoding length of $P$ and $k$ plus the time needed for executing the oracle. Hence, we have found a Turing reduction from \textsc{$k$-Distance on Simple Polytopes} restricted to input instances $P$ with diameter at most $2\mathrm{dim}(P)+4$ to \textsc{Diameter of simple polytopes}. Since the former is \NP-hard, so is the latter, concluding the proof.
\end{proof}
\section{Rock Extensions}

In \cite{RockExtensions}, Kaibel and Kukharenko made the stunning observation that linear programming may be reduced in strongly polynomial time to the case of linear programs over a special family of simple polytopes called \textbf{rock extensions}, which have linear diameters. In the degenerate setting, this is trivial as one can simply take a pyramid over the original polytope, and the resulting polytope will have diameter $2$. Hence, the notable feature of these polytopes is that they are simple. For understanding whether there exists a strongly polynomial time algorithm for linear programming, it suffices to study the case of linear programs over rock extensions.

For our purposes, the candidate algorithm we would be interested in is a path following algorithm like the simplex method that traverses the graph of the polytope. Hence, we ask the following question: Can one find a polynomial length path between any pair of vertices of a rock extension in strongly polynomial time? It turns out the answer is yes. However, one needs to be careful with the setup. A rock extension $Q$ is built from a simple polytope $P = \{\mathbf{x} \in \mathbb{R}^{d}: A \mathbf{x} \leq \mathbf{b}\}$, where $A$ is $m \times d$ satisfying strong nondegeneracy assumptions and such that we know a strictly feasible point $o \in P$. The rock extension $Q$ is a $(d+1)$-dimensional simple extended formulation for $P$ with $m+1$ facets and a distinguished vertex $(o,1)$. 

Our argument is essentially a corollary of their proof in  \cite{RockExtensions}. However, the result was important enough context for ours that we include it here together with a proof, and it is not said in their paper. For brevity, we do not completely rewrite their construction here and instead only include the details of it relevant for the consequence we are interested in. We refer the reader to \cite{RockExtensions} for further, more explicit details about the construction.

% \begin{theorem}
% Let $Q$ be a rock extension of a simple polytope with $m+2$ facets in dimension $d+1$. Then there is a weakly polynomial time algorithm to find a path of length at most $2(m-d)$ between any pair of vertices of $Q$. If one further knows $(o,1)$ used in the construction of $Q$, there is a strongly polynomial time algorithm.
% \end{theorem}

\begin{proof}[Proof of Theorem \ref{thm:rockextension}]
To prove this, we need to unpack the proof of Theorem 2.7 of \cite{RockExtensions} for constructing rock extensions. For $\mathbf{u} \in \mathbb{R}^{d}$ and $\varepsilon > 0$, let $B_{\varepsilon}^{d}(\mathbf{u})$ denote the $d$-dimensional open ball of radius $\varepsilon$ centered at $\mathbf{u}$. In order to construct a rock extension, they start with a polytope $P = \{\mathbf{x} \in \mathbb{R}^{d}: A\mathbf{x} \leq \mathbf{b}\}$ and a point $o$ such that $B_{\varepsilon}^{d}(o) \subseteq P$. Then they construct a so-called \emph{rock extension}, which is a simple polytope defined as
\[Q = \{(\mathbf{x},z) \in \mathbb{R}^{d+1}: A \mathbf{x} + \mathbf{y}z \leq \mathbf{b} \text{ and } z \geq 0\}\] where $\mathbf{y}\in \mathbb{R}_{>0}^d$ is a suitably chosen vector which forms part of their construction.

In particular, the projection of $Q$ onto its first $d$ coordinates is exactly $P$. Their construction is then built in such a way that $(o, 1)$ will be a vertex of $Q$ and the unique maximizer for the linear program $\max_{(\mathbf{x},z)\in Q} z$. The way $Q$ is built is inductive by adding one inequality at a time. Initially, up to a reordering of the rows, it is the simplex: 
\[P_{d+1} = \{\mathbf{x} \in \mathbb{R}^{d+1}: A_{[d+1]}\mathbf{x} + \mathbf{y}_{[d+1]} z \leq \mathbf{b}_{[d+1]}, z \geq 0\}.\]
This simplex has one vertex with positive $z$ coordinate, which is exactly $(o,1)$. More generally, 
\[P_{k} = \{\mathbf{x} \in \mathbb{R}^{d+1}: A_{[k]}\mathbf{x} + \mathbf{y}_{[k]} z \leq \mathbf{b}_{[k]}, z \geq 0\}\]
for each $k \geq d+1$. For each $P_{k}$, there is a subset $V_{k}$ of the vertices of $P_{k}$ consisting of all vertices with positive $z$ coordinate. As they complete their construction, they note that there is a sequence of strictly increasing values $0 < \mu_{d+1} < \mu_{d+2} < \dots < \mu_{m} = \varepsilon$ such that $V_{k} \setminus V_{k-1} \subseteq  B_{\mu_{k}}^{d+1}((o,1)) \setminus B_{\mu_{k-1}}^{d+1}((o,1))$. The way they ensure this is by choosing $y_{k}$ such that the hyperplane $H_{k} = \{\mathbf{x} \in \mathbb{R}^{d}| A_{k}\mathbf{x} + y_{k} z= b_{k}\}$ is supporting for the ball $B_{\mu_{k-1}}^{d+1}((o,1))$ and arguing any new vertex created must not be too much further away. 

By virtue of their construction, each vertex in $V_{k}$ is a vertex of the rock extension, and the vertices of the rock extension are $V_{m} \cup V_{m+1}$, where 
\[V_{m+1} = \{(\mathbf{v},0): \mathbf{v} \text{ is a vertex of } Q\}.\]
Every vertex in $V_{m+1}$ is adjacent to a vertex in $V_{m}$. This gives rise to a simple algorithm to find a path of length at most $2(m-d)$ between any pair of vertices of $Q$. To do this, it suffices to find a path of a length at most $m-d$ from any vertex to $(o,1)$ efficiently. For this, simply move to the neighbor that is closest to $(o,1)$. 

Namely, let $\mathbf{v}$ be a vertex of $Q$. Let $\mathbf{v} \in V_{k}$. Then $\mathbf{v}$ has a neighbor in $V_{k-1}$, and any such neighbor is in $B_{\mu_{k-1}}^{d+1}((o,1))$, while any other neighbor is not. Hence, its closest neighbor to $(o,1)$ is in $V_{k-1}$. Since $V_{d+1} = \{(o,1)\}$ this path will reach $(o,1)$ in at most $m-d$ steps. Computing the closest neighbor to $(o,1)$ may be done in strongly polynomial time if $(o,1)$ is known, since by simplicity, each vertex has only $d+1$ neighbors that may be computed using a simplex tableau. If $(o,1)$ is not known, it can be found in weakly polynomial time by the linear program maximizing $z$. Therefore, if $(o,1)$ is known there is a strongly polynomial time algorithm to find a path of length at most $2(m-d)$ between any pair of vertices on $Q$. Otherwise, it can be done in weakly polynomial time.

% If $\mathbf{v} \in V_{m+1}$, the closest neighbor to $(o,1)$ is in $V_{m}$, since any vertex of $V_{m+1}$ is at least $\varepsilon$ away from $(o,1)$, and any vertex of $V_{m}$ is in $B_{\varepsilon}^{d+1}((o,1))$. Let $\mathbf{v}$ be a vertex in $V_{k} \setminus V_{k-1}$. Any neighbor in $V_{k-1}$ must be in $B_{\mu_{k-1}}^{d+1}((o,1))$
\end{proof}

\section{Conclusion}
\label{sec:conclusion}

Knowing that finding shortest paths on a simple polytope is hard does not exclude the possibility that one may find short paths efficiently on general simple polytopes beyond rock extensions. For example, approximation algorithms may be possible. Given the relevant results in the literature, we suspect this is an APX-hard problem and leave proving APX-hardness as an open question. Our argument does not yield any interesting APX-hardness results as one can always efficiently find a path of length one more than the shortest path that we use to model our decision problem.

Finally, the core motivation for understanding these hardness questions is to approach the problem of whether there is a polynomial time version of the simplex method. In particular, one could show the answer is no conditional on $\PP \neq \NP$ by showing that computing a polynomial length path in the graph of a simple polytope is \NP-hard at least with a more standard Phase 1 procedure than that of constructing a rock extension. All hardness results thus far have relied on showing that determining the existence of a short path is hard. However, if the polynomial Hirsch conjecture holds, then a polynomial length path always exists and so this approach could not resolve the question of existence of a polynomial time simplex method. Our final open question is whether one can encode a hard search problem and prove TFNP-hardness for finding a short path on a simple polytope for which we know that short paths exist. This would, in particular, contrast with our observation for rock extensions.

% \section*{AI Disclosure}
% The idea for the proof of  reduction to \textsc{Partition with even sum} came about from conversation with Chat GPT. However, the proof itself was human generated and written, and all text in this paper was written and verified by the human authors.  

\section*{Acknowledgments} We would like to warmly thank Christian N\"{o}bel and Laura Sanit\`a for interesting and helpful discussions on this subject as well as Kirill Kukharenko for help regarding rock extensions. The idea for the proof of the reduction to Partition arose in part from conversations with Chat GPT. However, all the work here was completely written and verified carefully by the (human) authors.

\bibliographystyle{amsplain}
\bibliography{bibliography.bib}

\end{document}